\newtheorem{theorem}{Theorem}
\newtheorem{proposition}[theorem]{Proposition}
\newtheorem{lemma}[theorem]{Lemma}
\newtheorem{remark}[theorem]{Remark}
\newcounter{algo}
\newenvironment{algo}[2]{\refstepcounter{algo}\label{#2}   \begin{center}
\begin{minipage}{0.49\textwidth}   \hrule\smallskip
\textbf{Algorithm \thealgo: #1}
\par\smallskip\hrule\smallskip\ignorespaces}{\par\smallskip\hrule
\end{minipage}
\end{center}}
\let\oldproofname=\proofname
\renewcommand{\proofname}{\rm\bf{\oldproofname}}
\renewcommand{\Re}{\mathbb{R}}
\newcommand{\z}{\mathbf{z}}
\newcommand{\y}{\mathbf{y}}
\newcommand{\x}{\mathbf{x}}
\newcommand{\trt}{^{\scriptscriptstyle T}}
\newcommand{\A}{\mathbf{A}}
\newcommand{\bv}{\mathbf{b}}
\newcommand{\Q}{\mathbf{Q}}
\begin{document}

\title{\vspace{-0.5cm}Parallel Selective Algorithms  for \\ Nonconvex Big Data Optimization}

\author{Francisco Facchinei,$^\ast$ Gesualdo Scutari,$^\ast$ and  Simone Sagratella\vspace{-0.9cm}
\thanks{$^\ast$ The first two authors contributed equally to the paper.} \thanks{F. Facchinei and S. Sagratella are with the  Dept. of Computer, Control,
and Management Engineering, at Univ.  of Rome La Sapienza, Rome, Italy. Emails: \texttt{<facchinei,sagratella>@dis.uniroma1.it}.}
\thanks{G. Scutari  is with the Dept. of Electrical Engineering, at  the State Univ. of New
York at Buffalo, Buffalo, USA. Email: \texttt{gesualdo@buffalo.edu}}
\thanks{Part of this work has been presented at the
2014 IEEE International Conference on Acoustics, Speech, and Signal Processing (ICASSP 2014), Florence, Italy,
May 4-9 2014, \cite{FaccSagScu_ICASSP14}.}
}

\maketitle
\begin{abstract}
We propose  a  decomposition framework for the parallel optimization
of the sum of a differentiable (possibly nonconvex) function and a (block) separable
nonsmooth, convex one. The latter term is usually employed to enforce structure in the solution, typically sparsity.
Our framework is very flexible and includes both fully parallel Jacobi schemes
and Gauss-Seidel (i.e., sequential) ones, as well as virtually all possibilities ``in between'' with only a subset of  variables  updated at each iteration. Our theoretical convergence results improve on existing ones, and
  numerical results on LASSO,  logistic regression, and some nonconvex quadratic  problems show  that the new method consistently outperforms  existing
algorithms.
\end{abstract}

\begin{IEEEkeywords}
 Parallel  optimization,  Variables selection, Distributed methods, Jacobi method, LASSO, Sparse solution.
\end{IEEEkeywords}\vspace{-0.2cm}

\section{Introduction}

\label{sec:Introduction}

The minimization of the sum of a smooth  function, $F$, and of a nonsmooth
(block separable) convex one, $G$,
\begin{equation}\label{eq:problem 1}
\min_{\x\in X} V(\x) \triangleq F(\x) + G(\x),
\end{equation}
is an ubiquitous problem that arises in many fields of
engineering, so diverse as compressed sensing, basis pursuit denoising,
sensor networks, neuroelectromagnetic imaging, machine learning,
data mining, sparse logistic regression,  genomics,  metereology, tensor factorization and completion,
geophysics, and radio astronomy.
Usually the nonsmooth term is used to promote sparsity of the optimal solution,
which often corresponds to a parsimonious representation of some phenomenon at hand.
Many of the aforementioned applications
can give rise to extremely large problems so that standard optimization techniques are hardly applicable.
And indeed, recent years have witnessed a flurry of research activity  aimed at developing
solution methods that are simple (for example based solely on matrix/vector multiplications) but yet capable to
converge to a good approximate solution in reasonable time. It is hard  here to even summarize   the huge amount
of work  done in this field; we refer the reader to the recent works
 \cite{tibshirani1996regression,qin2010efficient,rakotomamonjy2011surveying,yuan2010comparison,byrd2013inexact,fountoulakis2013second,necoara2013efficient,nesterov2012gradient,
 tseng2009coordinate,beck_teboulle_jis2009,wright2009sparse,peng2013parallel,razaviyayn2013unified} and books
 \cite{buhlmann2011statistics,Sra-Nowozin-Wright_book11,bach2011optimization}
 as entry points to the literature.

  However, with big data problems it is clearly necessary to design
\emph{parallel} methods able to  exploit the computational  power of multi-core processors in order to solve many
interesting problems. Furthermore, even if parallel methods are used, it might be useful to reduce the number of (block) variables that are updated at each iteration, so as to alleviate the burden of dimensionality, and also because it has been observed experimentally (even if in restricted settings and under some very strong convergence assumptions,
see \cite{li2009coordinate,NIPS2011_4425,ScherrerTewariHalappanavarHaglin2012parallel_random,peng2013parallel})
that this might be beneficial.
While sequential solutions methods for Problem \eqref{eq:problem 1} have been widely investigated (especially when $F$ is
convex),
the analysis of  parallel algorithms suitable to large-scale implementations  lags behind.
Gradient-type methods can of course be easily parallelized, but, in spite of their good theoretical convergence bounds.  they
suffer from
practical drawbacks.
Fist of all,    they are notoriously  slow in practice.  Accelerated (proximal) versions have been proposed in the literature to
alleviate this issue, but  they require the knowledge of some function parameters (e.g.,  the Lipschitz constant of $\nabla F$,
and   the strong convexity constant of $F$, when $F$ is assumed strongly convex), which  is not generally available, unless
$F$ and $G$ have a very special structure (e.g., quadratic functions); (over)estimates of such parameters affect negatively
the convergence speed. Moreover, all (proximal, accelerated) gradient-based schemes use only the first order information of
$F$; recently we showed in   \cite{scutari_facchinei_et_al_tsp13} that  exploiting the structure of $F$ by replacing its
linearization with a ``better'' approximant can  enhance practical convergence speed.
However, beyond that, and looking at recent approaches,
we are  aware of  only few  (recent) papers
  dealing with parallel solution methods   \cite{necoara2013efficient,nesterov2012gradient,
 tseng2009coordinate,beck_teboulle_jis2009,wright2009sparse,peng2013parallel} and
   \cite{nesterov2012efficiency,bradley2011parallel,necoara_nesterov_glineur2011,
   necoara_clipici2014,LuXiao2013randomCDM,
 ScherrerTewariHalappanavarHaglin2012parallel_random,
 richtarik2012parallel}.
The former group of works investigate {\em deterministic} algorithms, while the latter {\em random} ones.
One advantage of the analyses in these works is that they provide a global rate of convergence. However,
i) they are essentially  still (regularized) gradient-based methods;  ii) as proximal-gradient algorithms,  they require  good and global knowledge  of some $F$ and $G$ parameters; and iii) except for \cite{nesterov2012gradient,
 tseng2009coordinate,wright2009sparse,LuXiao2013randomCDM}, they are proved to converge only when $F$ is  convex.
 Indeed, to date  there are no methods that are  parallel and  random and  that can be applied to \emph{nonconvex} problems. For this reason, and also because of their markedly  different flavor (for example deterministic convergence vs. convergence in mean or in probability), we do not discuss further random algorithms in this paper.
We refer instead to Section \ref{sec:related_works} for a more detailed discussion on   deterministic, parallel, and sequential solution methods proposed in the literature for instances (mainly convex) of  (\ref{eq:problem 1}).

In this paper, we focus on   \emph{nonconvex} problems in the form (1), and proposed  a new broad,  \emph{deterministic}   algorithmic framework for the computation of their stationary solutions, which hinges on ideas first introduced in \cite{scutari_facchinei_et_al_tsp13}. The essential, rather natural  idea underlying our  approach is 
to decompose
\eqref{eq:problem 1} into a sequence of (simpler)  subproblems whereby the function $F$ is replaced by suitable convex
approximations; the subproblems can be solved in a \emph{parallel} and \emph{distributed} fashion and it is not necessary to update all variables at each iteration.   
Key   features of the proposed algorithmic framework are:
i) it is parallel, with a degree of parallelism that can be chosen by the user and that can go from a complete parallelism  (every variable is updated in parallel to all the others) to the sequential
(only one variable is updated at each iteration), covering virtually all the possibilities in ``between'';
ii) it permits the  update in a \emph{deterministic fashion } of only some (block)  variables at each iteration (a feature  that turns out to be very important
numerically); iii) it easily leads to distributed implementations;
iv) no knowledge of $F$ and $G$ parameters (e.g., the Lipschitz constant of $\nabla F$) is required;
v) it can tackle a nonconvex $F$;
vi)   it is very flexible in the choice of the approximation of $F$, which   need not be necessarily  its first or second order approximation (like in proximal-gradient schemes); of course it  includes, among others,  updates based on gradient- or Newton-type approximations;
vii) it easily allows for inexact solution of the subproblems  (in some large-scale problems the cost of computing the exact  solution of all the subproblems can be prohibitive); and
viii) it appears to be numerically efficient.
While  features i)-viii), taken singularly, are certainly not new in the optimization community, we are not aware of any algorithm that possesses them all, the more so if one limits attention  to methods that can handle nonconvex objective functions, which is in fact the main focus on this paper.
Furthermore, numerical results show that our scheme compares favorably to existing ones.

The proposed  framework encompasses a gamut of novel algorithms, offering great flexibility  to control iteration complexity, communication overhead, and convergence speed, while \emph{converging under the  same conditions}; these desirable features  make our schemes applicable to several different problems and scenarios. 
Among the variety of new proposed updating rules for the (block)  variables, it is worth mentioning a combination of Jacobi  and Gauss-Seidel updates, which seems particularly
valuable in the optimization of highly nonlinear objective function; to the best of our knowledge, this is the first time that such a scheme is proposed and analyzed.

A further contribution of the paper is an extensive  implementation effort over a parallel architecture (the General Computer Cluster of the Center for Computational Research at the SUNY at Buffalo), which includes  our schemes and the  most competitive ones in the literature. 
Numerical results on LASSO, Logistic Regression, and some nonconvex problems show that our algorithms consistently outperform state-of-the-art schemes.

 The paper is organized as follows. Section \ref{sec:Pro} formally introduces the optimization  problem  along with the main assumptions  under which it is studied.
 Section \ref{sec:Main Results} describes our novel  general algorithmic framework along with its convergence properties. 
In  Section \ref{sec:Exe} we discuss several instances of the general scheme introduced in Section   \ref{sec:Main Results}.  Section \ref{sec:related_works} contains a detailed comparison of our schemes with state-of-the-art algorithms on similar problems.  Numerical results are presented in Section  \ref{sec:Num}. 
  Finally, Section \ref{sec:conclusions} draws some conclusions. All proofs of our results are given in the Appendix.\vspace{-0.1cm}

\section{Problem Definition}\label{sec:Pro}\vspace{-0.1cm}

We consider Problem \eqref{eq:problem 1},
where the feasible set $X=X_1\times \cdots \times  X_N$ is a Cartesian product of lower dimensional convex sets $X_i\subseteq \Re^{n_i}$, and $\mathbf x\in \Re^n$ is partitioned accordingly: $\mathbf x = (\mathbf  x_1, \ldots, \mathbf x_N)$, with each $\mathbf  x_i \in \Re^{n_i}$; $F$ is  smooth (and not necessarily convex)
and $G$  is convex and possibly nondifferentiable, with  $ G(\x) = {\scriptstyle \sum_{i=1}^N} g_i(\x_i)$.
This formulation is very general and includes problems of great interest.
Below we list some  instances of Problem \eqref{eq:problem 1}.

\noindent $\bullet$
$G(\x)=0$; the problem reduces to the minimization of a smooth, possibly nonconvex problem with convex constraints.

\noindent $\bullet$
$ F(\x) = \|\A\x -\bv\|^2 $ and $G(\x)= c \|\x\|_1$, $X=\Re^n$, with $\A\in\Re^{m \times n}$, $\bv \in \Re^m$,  and $c\in \Re_{++}$    given constants; this is the renowned and much studied LASSO problem \cite{tibshirani1996regression}.

\noindent $\bullet$
 $ F(\x) = \|\A\x -\bv\|^2 $ and $G(\x)= c \sum_{i=1}^N\|\x_i\|_2$, $X=\Re^n$, with $\A\in\Re^{m \times n}$, $\bv \in \Re^m$,  and $c\in \Re_{++}$
given constants; this is the group LASSO problem \cite{yuan2006model}.

\noindent $\bullet$
$ F(\x) = \sum_{j=1}^m \log(1 + e^{-a_j \y_j^T\x }) $ and $G(\x)= c \|\x\|_1$ (or $G(\x)= c \sum_{i=1}^N\|\x_i\|_2$), with $\y_i\in \Re^n$, $a_i\in \Re$, and $c\in \Re_{++}$  given
constants; this is the sparse logistic regression problem \cite{shevade2003simple,meier2008group}.

\noindent $\bullet$
$F(\x) = \sum_{j=1}^m\max\{0, 1- a_i \y_i^T\x\}^2$ and $G(\x)= c \|\x\|_1$, with $a_i\in \{-1,1\}$,  $\y_i\in \Re^n$, and  $c\in \Re_{++}$ given; this is
the $\ell_1$-regularized
$\ell_2$-loss Support Vector Machine problem  \cite{yuan2010comparison}.

\noindent $\bullet$ $F(\mathbf{X}_1,\mathbf{X}_2)=\|\mathbf{Y}-\mathbf{X}_1\mathbf{X}_2\|_F^2$ and $G(\mathbf{X}_2)=c\|\mathbf{X}_2\|_1$, $X= \{(\mathbf{X}_1,\mathbf{X}_2)\in \Re^{n\times m}\times \Re^{m\times N}\,:\, \|\mathbf{X}_{1}\mathbf{e}_i\|^2\leq \alpha_i,\forall i=1,\ldots,m \}$, where     $\mathbf{X}_1$ and $\mathbf{X}_2$   are the (matrix) optimization variables,   $\mathbf{Y}\in \Re^{n\times N}$, $c>0$, and $(\alpha_i)_{i=1}^m>\mathbf{0}$ are given constants, $\mathbf{e}_i$ is the $m$-dimensional  vector with a 1 in the $i$-th coordinate and $0$'s elsewhere, and $\|\mathbf{X}\|_F$ and $\|\mathbf{X}\|_1$ denote the Frobenius norm and the  L1 matrix norm of  $\mathbf{X}$, respectively; this is an example of the dictionary learning problem for sparse representation, see, e.g., \cite{LeeBattleRainaNg2007dictionary_learning}. Note that $F(\mathbf{X}_1,\mathbf{X}_2)$ is not jointly convex in $(\mathbf{X}_1,\mathbf{X}_2)$.\vspace{-0.01cm}

\noindent $\bullet$  Other problems that can be cast in the form \eqref{eq:problem 1} include the Nuclear Norm Minimization problem,  the Robust Principal Component Analysis problem, the Sparse Inverse Covariance Selection problem, the Nonnegative Matrix (or Tensor) Factorization problem, see e.g., \cite{goldfarb2012fast} and references therein.
 \smallskip

\noindent \textbf{Assumptions.} Given \eqref{eq:problem 1}, we make the following  blanket assumptions:

\begin{description}[topsep=-2.0pt,itemsep=-2.0pt]
\item[\rm  (A1)]  Each $X_i$ is nonempty, closed, and convex;\smallskip
\item[\rm  (A2)] $F$ is $C^1$ on an open set containing $X$;\smallskip
\item[\rm  (A3)]  $\nabla F$ is   Lipschitz continuous
on $X$ with constant $L_{F}$;\smallskip
\item[\rm  (A4)] $G(\x) = \sum_{i=i}^N g_i(\x_i)$, with all $g_i$ continuous and convex on $X_i$;\smallskip
\item[\rm  (A5)] $V$ is coercive.
\end{description}
\medskip

\noindent
Note that the above assumptions are standard and are satisfied by most of the problems of practical interest. For instance,  A3 holds automatically if $X$ is bounded;  the block-separability condition   A4 is   a common assumption  in the literature of parallel methods for the class of problems  \eqref{eq:problem 1} (it is in fact  instrumental  to deal with the nonsmoothness of
$G$ in a parallel environment). Interestingly A4 is satisfied by all standard $G$ usually encountered in applications, including  $G(x) = \| x\|_1$ and
$ G(x) = \sum_{i=1}^N\|\x_i\|_2$, which are among the most commonly used functions. 
Assumption A5 is needed to guarantee that the sequence generated by our
method is bounded; we could dispense with it at the price of a more complex analysis and cumbersome statement of convergence results. 


\section{Main Results\label{sec:Main Results}}

We begin introducing an informal description of our  algorithmic framework   along with  a list of key features that we would like our schemes enjoy; this will shed light on the core idea of the proposed decomposition technique. 

We want  to develop  {\em parallel} solution methods for Problem \eqref{eq:problem 1} whereby operations can be carried out on some or  (possibly) all   (block) variables $\x_i$ at
the \emph{same} time. The most natural   parallel (Jacobi-type) method one can think of is  updating  \emph{all} blocks simultaneously: given $\mathbf{x}^k$, each (block) variable $\mathbf{x}_i$  is updated   by solving the following subproblem
\begin{equation}\label{eq:plain_Jac}
\mathbf{x}_i^{k+1}\in  \underset{\mathbf{x}_i\in X_i}{\text{argmin}}\,\, \left\{ F(\mathbf{x}_i, \mathbf{x}_{-i}^k) + g_i(\mathbf{x}_i)\right\},
\end{equation}
where $\x_{-i}$ denotes the vector obtained from $\mathbf{x}$ by deleting the block $\x_i$.
Unfortunately this method converges only under very restrictive conditions  \cite{Bertsekas_Book-Parallel-Comp} that are seldom verified in practice. To cope with this issue  the proposed approach   introduces some ``memory" in the iterate:  the new point is  a convex combination of $\mathbf{x}^k$ and the solutions of (\ref{eq:plain_Jac}).
Building on this iterate, we would like  our framework to enjoy  many additional features, as described  next. 

\noindent \textbf{Approximating $F$:} Solving each subproblem as in (\ref{eq:plain_Jac}) may be too costly or difficult in
some
situations. One may then
prefer to approximate this  problem, in some suitable sense,  in order to facilitate the task of computing the new iteration.
To this end,
we assume that for all  $i\in {\cal N} \triangleq \{1, \dots, N\}$ we
can define  a function  $P_{i} (\mathbf{z};\mathbf{w}) :  X_i \times  X \to \Re$, the candidate approximant of $F$,   having the following properties (we denote by $\nabla P_{i}$
the partial gradient of $P_i$ with respect to the first argument   $\mathbf{z}$):\smallskip
\begin{description}
\item[\rm  (P1)]
 $P_{i} (\mathbf{\bullet}; \mathbf{w})$ is  convex and continuously differentiable
on $ X_i$ for all $\mathbf{w}\in X$;
\item[\rm  (P2)]  $\nabla P_{i} (\mathbf{x}_i;\mathbf{x}) = \nabla_{{\mathbf{x}}_i} F(\mathbf{x})$ for all $\mathbf{x} \in  X$;
\item[\rm  (P3)]  $\nabla P_{i} (\mathbf{z};\mathbf{\bullet})$ is Lipschitz continuous
on $ X$ 
 for all $\mathbf{z} \in  X_i$.
\end{description}
\smallskip

Such a  function  $P_i$ should be regarded as a (simple) convex approximation of $F$ at the point $\mathbf{x}$ with respect to the block of variables $\mathbf{x}_i$ that
preserves the first order properties of $F$ with respect to  $\mathbf{x}_i$.

Based on this approximation we can define at any point $\x^k\in X$  a {\em regularized} approximation  $\widetilde{h}_i (\mathbf{x}_{i};\mathbf{x}^k)$ of $V$ with respect
to $\mathbf{x}_i$ wherein $F$ is replaced  by $P_i$ while  the nondifferentiable term is preserved,  and a quadratic
proximal term is added to make the overall approximation strongly convex. More formally, we have\vspace{-0.1cm}
\begin{equation}\label{q_approx}
\hspace{-0.4cm}\begin{array}{ll}
& \widetilde{h}_i (\mathbf{x}_{i}; \mathbf{x}^k) \triangleq \underbrace{  P_{i}(\mathbf{x}_{i}; \mathbf{x}^k)
 +\dfrac{{\tau}_{i}}{2}  \left(\mathbf{x}_{i}- \mathbf{x}_{i}^k\right)^{T} \Q_{i}( \mathbf{x}^k) \left(\mathbf{x}_{i}- \mathbf{x}^k_{i}\right)}_{\triangleq h_i(\mathbf{x}_{i}; \mathbf{x}^k) }
 \\ &\qquad\qquad\,\,\quad+ g_{i}(\mathbf{x}_{i}),
\end{array} 
\end{equation}
where $\Q_{i}(\mathbf{x}^k)$ is an $n_{i}\times n_{i}$
 positive definite matrix (possibly dependent on $\mathbf{x}^k)$. We always assume that the functions $h_i(\bullet, \mathbf{x}_{i}^k)$ are uniformly strongly convex. \smallskip

 \begin{description}
\item[\rm  (A6)] All   $h_{i}(\bullet;\mathbf{x}^k)$ are uniformly strongly convex on $X_i$ with a common positive definiteness constant $q > 0$; furthermore,  $\Q_{i}(\bullet)$ is Lipschitz continuous on $X$.\smallskip
\end{description}
Note that an easy and standard way to satisfy A6 is to take, for any $i$ and for any $k$, $\tau_i = q > 0$ and $ \Q_{i}( \mathbf{x}^k) = \mathbf{I}$. However,  if $P_i(\bullet; \mathbf{x}^k)$ is already uniformly strongly convex, one can avoid the proximal term  and set  $\tau_i =0$  while satisfying A6.

Associated with each $i$ and point $ \mathbf{x}^k \in X$ we can define the following optimal block solution map:\vspace{-0.1cm}
\begin{equation}\label{eq:decoupled_problem_i}
\widehat{\mathbf{x}}_{i}(\mathbf{x}^k,\tau_{i})\triangleq\underset{\mathbf{x}_{i}\in X_{i}}{\mbox{argmin}\,}\tilde{h}_{i}(\mathbf{x}_{i};\mathbf{x}^{k}).\vspace{-0.1cm}
\end{equation}
 Note that $\widehat{\mathbf{x}}_{i}(\mathbf{x}^{k},\tau_{i})$
is always well-defined, since the optimization problem in (\ref{eq:decoupled_problem_i})
is strongly convex. Given (\ref{eq:decoupled_problem_i}),
we can then introduce, for each $\mathbf{y} \in X$,  the solution map
\[
\widehat{\x}(\mathbf{y},\boldsymbol{{\tau}})\triangleq\large[\widehat{\x}_{1}(\y,\tau_{1}), \widehat{\x}_{2}(\y,\tau_{2}),
\ldots, \widehat{\x}_{N}(\y,\tau_{N})\large]^T.
\]
The proposed algorithm (that we formally describe later on) is based on the computation of  (an approximation of)  $\widehat{\x}(\mathbf{x}^{k},\boldsymbol{\tau})$. Therefore the  functions $P_i$ should
lead to as easily computable functions $\widehat{\x}$ as possible. An appropriate choice depends on the problem at hand and   on computational
requirements. We discuss alternative  possible choices for the  approximations $P_i$ in Section \ref{sec:Exe}. 

\noindent \textbf{Inexact solutions:}  In many situations (especially in the case of large-scale problems), it can be useful to
further reduce the computational effort needed to solve the subproblems in \eqref{eq:decoupled_problem_i} by allowing \emph{inexact} computations $\mathbf{z}^k$ of $\widehat{\mathbf{x}}_{i}(\mathbf{x}^k,\tau_{i})$, i.e.,   $\|\mathbf{z}_{i}^{k}-\widehat{\mathbf{x}}_i\left(\mathbf{x}^{k},\boldsymbol{{\tau}}\right)\|\leq\varepsilon_{i}^{k}$, where $\varepsilon_{i}^{k}$ measures the   accuracy in computing the solution. 

\noindent \textbf{Updating  only some blocks:} Another important feature we want for our algorithm  is the  capability of updating at each iteration only \emph{some} of the (block) variables, a feature that has been observed to be very effective numerically.
In fact, our schemes are guaranteed to converge  under the update of only a \emph{subset} of the
 variables at each iteration;  the only condition is that such a  subset contains at least one (block) component
which is within a factor $\rho \in (0,1]$   ``far away'' from the optimality, in the sense explained next.
Since  $\x^k_i$ is an optimal solution of \eqref{eq:decoupled_problem_i} if and only if
$\widehat{\x}_{i}(\x^k,\tau_{i})=\x^k_i$, a natural  distance of  $\x^k_i$ from the optimality is  $d_i^{\,k}\triangleq \|\widehat{\x}_{i}(\x^k,
\tau_{i})-\x^k_i\|$; one could then  select the  blocks $\x_i$'s to update based on  such an  optimality measure (e.g., opting for blocks exhibiting larger $d_i^{\,k}$'s). However, this choice requires the  computation of all the solutions $\widehat{\x}_{i}(\x^k,
\tau_{i})$, for   $i=1,\ldots,n$, which in some applications (e.g., huge-scale problems) might be  computationally  too expensive.
Building on the same idea, we can introduce alternative less expensive metrics by replacing the distance   $\|\widehat{\x}_{i}(\x^k,
\tau_{i})-\x^k_i\|$ with a computationally cheaper {\em error bound},  
i.e., a function $E_i(\x)$ such that\vspace{-0.1cm}
\begin{equation}\label{eq:error bound}
\underbar s_i\|\widehat{\x}_{i}(\x^k,\tau_{i})-\x^k_i\| \le E_i(\x^k) \leq \bar s_i\|\widehat{\x}_{i}(\x^k,\tau_{i})-\x^k_i\|,\vspace{-0.1cm}
\end{equation}
for some $0< \underbar s_i \le \bar s_i$. Of course one can always set  $ E_i(\x^k) = \|\widehat{\x}_{i}(\x^k,\tau_{i})-\x^k_i\| $, but other choices are also possible; we discuss
this point further in Section \ref{sec:Exe}.\smallskip

\noindent \textbf{Algorithmic framework:}  We are now ready to formally introduce our algorithm, Algorithm 1,  that includes all the features discussed above;  convergence to stationary solutions\footnote{We recall that a stationary solution  $\x^*$ of \eqref{eq:problem 1} is a   points  for which  a subgradient $\xi \in \partial G(\x^*)$ exists such that
$(\nabla F(\x^*) $ $+ \xi)^T(\y-\x^*) \geq 0$ for all $\y\in X$. Of course,  if $F$ is convex, stationary points coincide with  global minimizers.} of  \eqref{eq:problem 1} is stated in  Theorem \ref{Theorem_convergence_inexact_Jacobi}. 

\vspace{-0.2cm}

\begin{algo}{\textbf{Inexact Flexible Parallel Algorithm (FLEXA)}} S$\textbf{Data}:$ $\{\varepsilon_{i}^{k}\}$
for $i\in\mathcal{N}$, $\boldsymbol{{\tau}}\geq\mathbf{0}$, $\{\gamma^{k}\}>0$,
$\mathbf{x}^{0}\in X$, $\rho \in (0,1]$.

\hspace{1cm} Set $k=0$.

\texttt{$\mbox{(S.1)}:$}$\,\,$If $\mathbf{x}^{k}$ satisfies a termination
criterion: STOP;

\texttt{$\mbox{(S.2)}:$}  Set $M^k \triangleq  \max_i \{E_i(\x^k)\}$.

\hspace{1.24cm} Choose a set $S^k$  that  contains at least one index $i$

\hspace{1.24cm} for which
$E_i(\x^k) \geq \rho M^k.$

\texttt{$\mbox{(S.3)}:$} For all $i\in S^k$, solve (\ref{eq:decoupled_problem_i})
with accuracy $\varepsilon_{i}^{k}:$

\hspace{1.26cm} Find $\mathbf{z}_{i}^{k}\in X_i$ s.t. $\|\mathbf{z}_{i}^{k}-\widehat{\mathbf{x}}_i\left(\mathbf{x}^{k},\boldsymbol{{\tau}}\right)\|\leq\varepsilon_{i}^{k}$;

\hspace{1.24cm} Set $\widehat{\z}^k_i =  \mathbf{\z}_{i}^k$ for $i\in S^k$ and
$\widehat{\z}^k_i = \x^k_i$ for $i\not \in S^k$

\texttt{$\mbox{(S.4)}:$} Set $\mathbf{x}^{k+1}\triangleq\mathbf{x}^k+\gamma^{k}\,(\widehat{\mathbf{z}}^{k}-\mathbf{x}^{k})$;

\texttt{$\mbox{(S.5)}:$} $k\leftarrow k+1$, and go to \texttt{$\mbox{(S.1)}.$}
\label{alg:general}
 \end{algo}\smallskip

\begin{theorem} \label{Theorem_convergence_inexact_Jacobi}Let
$\{\mathbf{x}^{k}\}$ be the sequence generated by
Algorithm \ref{alg:general}, under A1-A6.
 Suppose  that $\{\gamma^{k}\}$
and $\{\varepsilon_{i}^{k}\}$ satisfy the following conditions: i)
$\gamma^{k}\in(0,1]$;  ii) $\sum_{k}\gamma^{k}=+\infty$;
iii) $\sum_{k}\left(\gamma^{k}\right)^{2}<+\infty$;
and iv)  $\varepsilon_{i}^{k}\leq \gamma^k  \alpha_1\min\{\alpha_2, 1/\|\nabla_{\mathbf{x}_i} F(\mathbf{x}^k)\| \}$
for all $i\in {\cal N}$ and  some $\alpha_1>0$ and $\alpha_2>0$.
Additionally, if inexact solutions are used in Step 3, i.e.,  $\varepsilon_{i}^{k}>0$ for some $i$ and infinite $k$, then
assume also that $G$ is globally Lipschitz on $X$.
Then, either Algorithm \ref{alg:general} converges in a finite number of iterations to a stationary solution
of \eqref{eq:problem 1} or every limit point of
  $\{\mathbf{x}^{k}\}$ (at least
one such points exists) is a stationary solution of \eqref{eq:problem 1}.\vspace{-0.1cm}
\end{theorem}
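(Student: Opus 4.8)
The plan is to run a diminishing-stepsize descent argument organized around the optimality-gap measure $\Delta(\mathbf{x}^k)\triangleq\|\widehat{\mathbf{x}}(\mathbf{x}^k,\boldsymbol{\tau})-\mathbf{x}^k\|$, whose zeros I would first show coincide with the stationary points of \eqref{eq:problem 1}. This fixed-point characterization follows from the optimality conditions of the strongly convex subproblem \eqref{eq:decoupled_problem_i} together with P2: at a fixed point the proximal term vanishes and $\nabla P_i(\mathbf{x}_i;\mathbf{x})=\nabla_{\mathbf{x}_i}F(\mathbf{x})$, so $\widehat{\mathbf{x}}(\bar{\mathbf{x}},\boldsymbol{\tau})=\bar{\mathbf{x}}$ is exactly the stationarity variational inequality. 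Next, using the uniform strong convexity of $\tilde h_i(\bullet;\mathbf{x}^k)$ from A6 together with convexity (P1) and P2, I would derive the basic per-block descent inequality for the exact minimizers,
\[
\nabla_{\mathbf{x}_i}F(\mathbf{x}^k)^T(\widehat{\mathbf{x}}_i-\mathbf{x}_i^k)+g_i(\widehat{\mathbf{x}}_i)-g_i(\mathbf{x}_i^k)\le-\tfrac{q}{2}\,\|\widehat{\mathbf{x}}_i-\mathbf{x}_i^k\|^2 ,
\]
which quantifies the progress available from each block subproblem.

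The core of the proof is a descent estimate on $V$ along the actual update $\mathbf{x}^{k+1}=\mathbf{x}^k+\gamma^k(\widehat{\mathbf{z}}^k-\mathbf{x}^k)$. Combining the descent lemma for $F$ (from the Lipschitz continuity of $\nabla F$, A3) with Jensen's inequality for the convex separable $G$, decomposing over blocks, and inserting the inexact iterates $\mathbf{z}_i^k$, I would reach
\[
V(\mathbf{x}^{k+1})\le V(\mathbf{x}^k)-\gamma^k\Big(\tfrac{q}{2}-L_F\gamma^k\Big)\sum_{i\in S^k}\|\widehat{\mathbf{x}}_i-\mathbf{x}_i^k\|^2+\gamma^k e^k ,
\]
where $e^k$ gathers the inexactness discrepancies $\nabla_{\mathbf{x}_i}F(\mathbf{x}^k)^T(\mathbf{z}_i^k-\widehat{\mathbf{x}}_i)$ and $g_i(\mathbf{z}_i^k)-g_i(\widehat{\mathbf{x}}_i)$. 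Condition iv) is engineered precisely to tame $e^k$: the bound $\varepsilon_i^k\le\gamma^k\alpha_1\min\{\alpha_2,1/\|\nabla_{\mathbf{x}_i}F(\mathbf{x}^k)\|\}$ forces $\|\nabla_{\mathbf{x}_i}F(\mathbf{x}^k)\|\,\varepsilon_i^k\le\gamma^k\alpha_1$ independently of the (a priori unbounded) gradient size, while the global Lipschitz hypothesis on $G$ bounds $|g_i(\mathbf{z}_i^k)-g_i(\widehat{\mathbf{x}}_i)|\le L_G\varepsilon_i^k$; hence $\gamma^k e^k=O((\gamma^k)^2)$ is summable, uniformly in $\mathbf{x}^k$. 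Since $\gamma^k\to0$ by iii), for $k$ large the coefficient satisfies $\tfrac{q}{2}-L_F\gamma^k\ge\tfrac{q}{4}>0$, so that $V(\mathbf{x}^{k+1})\le V(\mathbf{x}^k)-\gamma^k\tfrac{q}{4}\sum_{i\in S^k}\|\widehat{\mathbf{x}}_i-\mathbf{x}_i^k\|^2+\beta^k$ with $\sum_k\beta^k<\infty$. Dropping the negative term shows $\{V(\mathbf{x}^k)\}$ is bounded above, whence coercivity (A5) gives boundedness of $\{\mathbf{x}^k\}$; telescoping then yields convergence of $\{V(\mathbf{x}^k)\}$ and the summability $\sum_k\gamma^k\sum_{i\in S^k}\|\widehat{\mathbf{x}}_i-\mathbf{x}_i^k\|^2<\infty$.

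I would then convert this block-wise summability into control of the full gap $\Delta(\mathbf{x}^k)$. The greedy selection rule in (S.2) guarantees that $S^k$ contains an index $i^\star$ with $E_{i^\star}(\mathbf{x}^k)\ge\rho M^k=\rho\max_j E_j(\mathbf{x}^k)$; feeding this through the two-sided error bound \eqref{eq:error bound} gives $\sum_{i\in S^k}\|\widehat{\mathbf{x}}_i-\mathbf{x}_i^k\|^2\ge c\max_j\|\widehat{\mathbf{x}}_j-\mathbf{x}_j^k\|^2\ge(c/N)\,\Delta(\mathbf{x}^k)^2$ with $c=\rho^2\underline{s}^2/\bar{s}^2$, so that $\sum_k\gamma^k\Delta(\mathbf{x}^k)^2<\infty$. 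Since $\sum_k\gamma^k=+\infty$ (ii), this forces $\liminf_k\Delta(\mathbf{x}^k)=0$. To upgrade $\liminf$ to $\lim$, I would establish Lipschitz continuity of the solution map $\widehat{\mathbf{x}}(\bullet,\boldsymbol{\tau})$ on the bounded orbit — a consequence of P3, together with Lipschitz continuity of $\mathbf{Q}_i(\bullet)$ and uniform strong convexity in A6, via a standard sensitivity estimate for strongly convex parametric programs — which yields $|\Delta(\mathbf{x}^{k+1})-\Delta(\mathbf{x}^k)|\le C\gamma^k$; a now-standard lemma then forces the entire sequence $\Delta(\mathbf{x}^k)\to0$. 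By boundedness at least one limit point $\bar{\mathbf{x}}$ exists, and continuity of $\Delta$ gives $\Delta(\bar{\mathbf{x}})=0$, i.e. $\bar{\mathbf{x}}$ is stationary; the finite-termination alternative corresponds to $M^k=0$, hence $\Delta(\mathbf{x}^k)=0$, which again means stationarity.

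The main obstacle I anticipate is the joint handling of inexactness and nonconvexity in the descent estimate: $\gamma^k e^k$ must be proved summable with no a priori bound on $\|\nabla F\|$, which is exactly what the gradient-adaptive tolerance of condition iv) delivers, and the global Lipschitz assumption on $G$ enters only to absorb the $g_i(\mathbf{z}_i^k)-g_i(\widehat{\mathbf{x}}_i)$ term. A secondary technical point is the Lipschitz continuity of $\widehat{\mathbf{x}}(\bullet,\boldsymbol{\tau})$ underlying the passage from $\liminf$ to $\lim$, which requires combining P3 and A6 through a careful parametric-minimization argument.
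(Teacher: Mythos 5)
Your proposal is correct and follows essentially the same route as the paper's own proof: the fixed-point characterization of stationarity, the per-block descent inequality from strong convexity of $h_i$ plus P1--P2, the Descent-Lemma estimate with condition iv) and Lipschitz $G$ taming the inexactness terms, the greedy-selection/error-bound step converting block-wise control into control of the full gap $\|\widehat{\mathbf{x}}(\mathbf{x}^k)-\mathbf{x}^k\|$, and Lipschitz continuity of $\widehat{\mathbf{x}}(\bullet)$ to pass from $\liminf$ to $\lim$. The only cosmetic differences are that you apply the greedy-selection conversion after deriving summability rather than baking it into the descent inequality (the paper's Lemma~\ref{lemma on errors}), and you package the $\liminf$-to-$\lim$ upgrade as a standard lemma whose proof is exactly the crossing/contradiction argument the paper carries out explicitly.
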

\begin{proof} See Appendix \ref{proof_Th1}.\end{proof}\smallskip

 The proposed algorithm is extremely flexible. We can always choose $S^k ={\cal N}$
resulting in the simultaneous  update of all the (block) variables (full Jacobi scheme); or, at the other extreme, one can
update a single (block) variable per time, thus obtaining a Gauss-Southwell kind of method. More classical cyclic
Gauss-Seidel methods can also be   derived and are discussed  in the next subsection.
One can also compute inexact solutions (Step 3) while preserving convergence, provided that  the error term $\varepsilon_{i}
^{k}$  and  the step-size $\gamma^{k}$'s are  chosen according
to Theorem 1; some practical  choices for these parameters  are discussed in  Section \ref{sec:Exe}.  We emphasize that the Lipschitzianity of $G$ is required only if $\widehat{\mathbf{x}}(\mathbf{x}^k, \tau)$ is
not computed exactly for \emph{infinite} iterations. At any rate this Lipschitz conditions is automatically satisfied if $G$ is a norm
(and therefore in LASSO and  group LASSO problems for example) or if
$X$ is bounded.

As a final remark,  note  that versions of Algorithm \ref{alg:general} where all  (or most of) the variables are updated at each iteration are particularly amenable to implementation in \emph{distributed} environments (e.g., multi-user communications systems, ad-hoc networks, etc.). In fact, in this case, not only   the calculation
of the inexact solutions  $z^k_i$ can be carried out in parallel, but the information that ``the $i$-th subproblem'' has to exchange with the ``other subproblem'' in order to compute the next iteration is very limited. A full appreciation  of the potentialities
of our approach in distributed settings depends however on the specific application under  consideration and is beyond the scope of this paper. We refer the reader to \cite{scutari_facchinei_et_al_tsp13} for some examples, even if in less general settings.\vspace{-0.1cm}

\subsection{Gauss-Jacobi algorithms}\label{subsec:GS}\vspace{-0.1cm}
Algorithm \ref{alg:general} and its convergence theory  cover fully parallel Jacobi as well as Gauss-Southwell-type methods, and many of their variants.
In this section we show that Algorithm 1 can also  incorporate \emph{hybrid parallel-sequential} (Jacobi$-$Gauss-Seidel) schemes wherein block of variables are updated \emph{simultaneously} by \emph{sequentially} computing entries per block. This procedure seems particularly well suited to parallel optimization on   multi-core/processor architectures. 

Suppose that we have $P$ processors that can be used in parallel and we want to exploit them to
solve Problem \eqref{eq:problem 1} ($P$ will denote both the number of processors and the set $\{1,2,\ldots, P\}$).
We ``assign'' to each processor $p$ the variables $I_p$; therefore $I_1, \ldots, I_P$
is a partition of $I$. We denote by $\x_p \triangleq (\x_{pi})_{i\in I_p}$ the vector of (block)  variables $\x_{pi}$ assigned to processor $p$, with $i \in I_p$;  
and  $\x_{-p}$ is the vector of remaining variables, i.e., the vector of those   assigned to
all processors except the $p$-th one.
Finally, 
given $i\in I_p$,
we partition $\x_p$ as $\x_p = (\x_{pi<}, \x_{pi\geq})$, where $\x_{pi<}$ is the vector containing
all variables in $I_p$ that come before $i$ (in the order assumed in $I_p$), while   $\x_{pi\geq}$
are the remaining variables. Thus we will write, with a slight  abuse of notation
$\x = (\x_{pi<}, \x_{pi\geq},\x_{-p})$.

Once the optimization variables have been assigned to the processors, one could in principle apply the
inexact Jacobi  Algorithm \ref{alg:general}. In
this scheme each processor $p$ would compute {\em sequentially}, at each iteration $k$ and for every (block) variable
$\x_{pi}$, a suitable
$\mathbf{z}^k_{pi}$ by keeping all variables but $\x_{pi}$ fixed to $(\x^k_{pj})_{i\neq j\in I_p}$ and $\x^k_{-p}$. But since we are solving the
problems for each group of
variables assigned to a processor sequentially, this seems a  waste of resources; it is instead  much more
efficient to use, within each
processor, a Gauss-Seidel scheme, whereby the \emph{current} calculated  iterates  are used in all subsequent
calculations.
Our Gauss-Jacobi method formally described in Algorithm 2  implements  exactly this idea; its convergence properties are given in Theorem 2.\vspace{-0.3cm}
\begin{algo}{\textbf{Inexact Gauss-Jacobi Algorithm}} S$\textbf{Data}:$
$\{\varepsilon_{pi}^{k}\}$ for $p\in P$ and  $i\in{I}_p$, $\boldsymbol{{\tau}}\geq\mathbf{0}$,
$\{\gamma^{k}\}>0$, $\mathbf{x}^{0}\in\mathcal{K}$.

\hspace{1.24cm} Set $k=0$.

\texttt{$\mbox{(S.1)}:$}$\,\,$If $\mathbf{x}^{k}$ satisfies a termination criterion: STOP;

\texttt{$\mbox{(S.2)}:$} For all $p\in P$ do (in parallel),

\hspace{1.5cm} For all $i\in I_p$ do (sequentially)

\hspace{2.0cm}a) Find $\mathbf{z}^k_{pi} \in \mathcal X_i$ s.t.

\hspace{2.5cm}
$\|\mathbf{z}_{pi}^k-\widehat{{\x}}_{pi}\left((\mathbf{x}_{pi<}^{k+1},\mathbf{x}_{pi\geq}^{k}, \mathbf{x}_{-p}^k),\boldsymbol{{\tau}}\right)\|\leq\varepsilon_{pi}^{k}$;\smallskip

\hspace{2.0cm}b) Set ${\x}_{pi}^{k+1}\triangleq {\x}_{pi}^{k}+{\gamma}^{\, k}\,\left(\mathbf{z}_{pi}^{k}-{\x}_{pi}^{k}\right)$\smallskip

\texttt{$\mbox{(S.3)}:$} $k\leftarrow k+1$, and go to \texttt{$\mbox{(S.1)}.$}
\label{alg:PJGA}\end{algo}

\begin{theorem} \label{Theorem_convergence_Gauss_Jacobi}Let\emph{
$\{\mathbf{x}^{k}\}_{k=1}^{\infty}$} be the sequence generated by
Algorithm \ref{alg:PJGA}, under the setting of Theorem \ref{Theorem_convergence_inexact_Jacobi}, but  with the addition
assumption that $\nabla F$ is bounded on $\mathcal X$.
Then, either Algorithm\emph{ }\ref{alg:PJGA} converges in a finite number of iterations to a stationary solution
of (\ref{eq:problem 1}) or every limit point of
the sequence\emph{ $\{\mathbf{x}^{k}\}_{k=1}^{\infty}$ }(at least
one such points exists) is a stationary solution of (\ref{eq:problem 1}).\vspace{-0.2cm}
\end{theorem}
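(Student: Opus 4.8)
The plan is to reduce Algorithm \ref{alg:PJGA} to the already-analyzed Algorithm \ref{alg:general}, by interpreting the Gauss--Seidel sweep performed inside each processor as a full Jacobi update (i.e., with $S^k=\mathcal N$) corrupted by an additional, controllable inexactness term of order $\gamma^k$. The starting observation is that in Algorithm \ref{alg:PJGA} the block $\x_{pi}$ is moved toward $\wh\x_{pi}$ evaluated not at $\x^k$ but at the mixed point $\mathbf{w}^k_{pi}\triangleq(\x^{k+1}_{pi<},\x^{k}_{pi\geq},\x^k_{-p})$, whereas a genuine Jacobi step would use $\wh\x_{pi}(\x^k,\boldsymbol{\tau})$. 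Viewing $\mathbf{z}^k_{pi}$ as an inexact solution of the Jacobi subproblem at $\x^k$ and bounding, via the triangle inequality, the total discrepancy $\|\mathbf{z}^k_{pi}-\wh\x_{pi}(\x^k,\boldsymbol{\tau})\|\le \varepsilon^k_{pi}+\|\wh\x_{pi}(\mathbf{w}^k_{pi},\boldsymbol{\tau})-\wh\x_{pi}(\x^k,\boldsymbol{\tau})\|$ is then the whole game.

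First I would establish that the block solution map $\x\mapsto\wh\x_i(\x,\boldsymbol{\tau})$ is Lipschitz continuous on $X$. This is a standard parametric-optimization sensitivity estimate: the subproblem \eqref{eq:decoupled_problem_i} is uniformly strongly convex with constant $q>0$ by A6, while its first-order data depend on the parameter in a Lipschitz fashion, since $\nabla P_i(\z;\bullet)$ is Lipschitz by P3 and $\Q_i(\bullet)$ is Lipschitz by A6. Comparing the optimality conditions at two parameter values and invoking strong monotonicity then yields $\|\wh\x_i(\mathbf{u},\boldsymbol{\tau})-\wh\x_i(\mathbf{v},\boldsymbol{\tau})\|\le \hat L\|\mathbf{u}-\mathbf{v}\|$ for some $\hat L>0$.

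Next I would bound the per-iteration displacement. By A5 the sequence $\{\x^k\}$ is bounded; since $\nabla F$ is bounded on $X$ (the extra hypothesis of Theorem \ref{Theorem_convergence_Gauss_Jacobi}) and each $g_i$ is convex and continuous, the optimality conditions of \eqref{eq:decoupled_problem_i} show that $\|\wh\x_{pi}(\mathbf{w}^k_{pi},\boldsymbol{\tau})-\x^k_{pi}\|$, and hence $\|\mathbf{z}^k_{pi}-\x^k_{pi}\|$, is uniformly bounded by some constant $D$. Because within each processor $\x^{k+1}_{pj}-\x^k_{pj}=\gamma^k(\mathbf{z}^k_{pj}-\x^k_{pj})$ for the already-processed blocks $j<i$, the mixed point obeys $\|\mathbf{w}^k_{pi}-\x^k\|\le \gamma^k\sum_{j<i}\|\mathbf{z}^k_{pj}-\x^k_{pj}\|\le \gamma^k |I_p| D$, and the Lipschitz bound of the previous step gives $\|\wh\x_{pi}(\mathbf{w}^k_{pi},\boldsymbol{\tau})-\wh\x_{pi}(\x^k,\boldsymbol{\tau})\|\le \hat L\,|I_p|\,D\,\gamma^k$.

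Finally, I would feed this back into Theorem \ref{Theorem_convergence_inexact_Jacobi}. The estimates above show that $\mathbf{z}^k_{pi}$ is an inexact solution of the \emph{Jacobi} subproblem at $\x^k$ with accuracy $\tilde\varepsilon^k_{pi}\triangleq\varepsilon^k_{pi}+\hat L|I_p|D\,\gamma^k=O(\gamma^k)$, and that Algorithm \ref{alg:PJGA} updates every block, i.e., it is exactly Algorithm \ref{alg:general} run with $S^k=\mathcal N$ and these perturbed tolerances. It then remains only to verify that $\{\tilde\varepsilon^k_{pi}\}$ still meets condition (iv) of Theorem \ref{Theorem_convergence_inexact_Jacobi}: since $\nabla F$ is bounded, say by $B$, one has $\min\{\alpha_2,1/\|\nabla_{\x_i}F(\x^k)\|\}\ge\min\{\alpha_2,1/B\}>0$, so the admissible tolerance is bounded below by a fixed positive multiple of $\gamma^k$, which absorbs the $O(\gamma^k)$ term after enlarging $\alpha_1$ if needed; Theorem \ref{Theorem_convergence_inexact_Jacobi} then applies verbatim. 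I expect the main obstacle to lie in this third step, namely making the displacement bound $D$ and the Lipschitz constant $\hat L$ genuinely uniform in $k$ despite the triangular intertwining of new and old iterates inside each processor; this is precisely where boundedness of $\nabla F$ is indispensable, both to bound the displacements $\|\mathbf{z}^k_{pi}-\x^k_{pi}\|$ and to keep the error tolerance of Theorem \ref{Theorem_convergence_inexact_Jacobi} from degenerating.
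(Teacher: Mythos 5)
Your proposal is correct and takes essentially the same route as the paper's proof: Algorithm \ref{alg:PJGA} is recast as Algorithm \ref{alg:general} run with $S^k=\mathcal{N}$ and inflated tolerances $\tilde\varepsilon_{pi}^k=\varepsilon_{pi}^k+O(\gamma^k)$, obtained by combining the Lipschitz continuity of $\widehat{\mathbf{x}}(\bullet)$ (Proposition \ref{Prop_x_y}(a)) with a uniform bound on the per-block displacements, which is exactly where the paper also invokes the boundedness of $\nabla F$. One small caution: the uniform displacement bound $\|\widehat{\mathbf{x}}_{pi}(\cdot)-\mathbf{x}_{pi}^k\|\le D$ follows from the optimality conditions only when combined with the \emph{global Lipschitz constant} of $G$ (available in the setting of Theorem \ref{Theorem_convergence_inexact_Jacobi}, and used as $L_G$ in the paper) together with the bound on $\nabla F$ --- mere convexity and continuity of the $g_i$ would not suffice --- and the appeal to A5 for boundedness of $\{\mathbf{x}^k\}$ is both unnecessary and premature at that stage, since boundedness of the iterates is only established after the descent inequality is in place.
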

\begin{proof} See Appendix \ref{proof_Th2}. \end{proof}

Although  the proof of Theorem \ref{Theorem_convergence_Gauss_Jacobi}
is relegated to the appendix, it is interesting to point out that the gist of the proof is to show that  Algorithm \ref{alg:PJGA} is nothing else but an instance of Algorithm \ref{alg:general} with errors.
We remark that Algorithm \ref{alg:PJGA} contains as special case the  classical cyclical Gauss-Seidel scheme
it is sufficient to set  $P=1$
then a single processor   updates all the (scalar) variables sequentially while using the new values of those that have already been updated.\\
\indent By updating  all variables at each iteration,  Algorithm \ref{alg:PJGA} has the advantage that  neither the error bounds $E_i$ nor the exact solutions $\widehat{\mathbf{x}}_{pi}$ need to be computed, in order to decide which variables should be updated. Furthermore it is rather intuitive that the use of the ``latest available information'' should reduce the number of overall iterations  needed to converge with respect to Algorithm \ref{alg:general}.
 However this advantages should be weighted against  the following two  facts: i)  updating all variables at each iteration might not always be the best (or a feasible) choice; and ii) in many practical instances of  Problem (\ref{eq:problem 1}), using the latest information as dictated by Algorithm \ref{alg:PJGA}  may require  extra calculations, e.g. to compute function gradients, and communication overhead  (these aspects are discussed  on specific examples  in  Section \ref{sec:Num}).
It may then be of interest to consider a further scheme, that we might call ``Gauss-Jacobi with Selection'', where we merge the basic ideas of Algorithms
  \ref{alg:general} and \ref{alg:PJGA}. Roughly speaking, at each iteration we proceed as in the Gauss-Jacobi Algorithm  \ref{alg:PJGA}, but we perform the Gauss steps only on a subset ${S}^k_p$ of each $I_p$, where the subset ${S}^k_p$ is defined  according to the rules used in Algorithm
\ref{alg:general}.
To present this combined scheme, we need to extend the notation used in Algorithm  \ref{alg:PJGA}.
Let ${S}^k_p$ be a subset of $I_p$. For notational purposes only, we reorder $\x_p$ so that
first we have all variables in  ${S}^k_p$ and then the remaining variables in $I_p$:
$\x_p= (\x_{{S}^k_p}, \x_{I_p\setminus {S}^k_p})$. Now, similarly to what done before, and given
an index $i \in {S}^k_p$,  we partition $\x_{{S}^k_p}$ as
$\x_{{S}^k_p} = (\x_{{S}^k_pi<}, \x_{{S}^k_pi\geq})$, where $\x_{{S}^k_pi<}$ is the vector containing
all variables in ${S}^k_p$ that come before $i$ (in the order assumed in ${S}^k_p$), while   $\x_{{S}^k_pi\geq}$
are the remaining variables in ${S}^k_p$. Thus we will write, with a slight  abuse of notation
$\x = (\x_{{S}^k_pi<}, \x_{{S}^k_pi\geq},\x_{-p})$. The proposed Gauss-Jacobi with Selection is formally described in Algorithm 3 below.\vspace{-0.5cm}
\begin{algo}{\textbf{Inexact GJ Algorithm with Selection}} S$\textbf{Data}:$
$\{\varepsilon_{pi}^{k}\}$ for $p\in P$ and  $i\in{I}_p$, $\boldsymbol{{\tau}}\geq\mathbf{0}$,
$\{\gamma^{k}\}>0$, $\mathbf{x}^{0}\in\mathcal{K}$,

\hspace{1.24cm}
$\rho \in (0,1]$. Set $k=0$.

\texttt{$\mbox{(S.1)}:$}$\,\,$If $\mathbf{x}^{k}$ satisfies a termination criterion: STOP;

\texttt{$\mbox{(S.2)}:$}  Set $M^k \triangleq  \max_i \{E_i(\x^k)\}$.

\hspace{1.24cm} Choose  sets $S^k_p \subseteq I_p$  so that their union $S^k$ contains

\hspace{1.24cm} at least one index   $i$ for which $E_i(\x^k) \geq \rho M^k.$

\texttt{$\mbox{(S.3)}:$} For all $p\in P$ do (in parallel),

\hspace{1.5cm} For all $i\in S^k_p $ do (sequentially)

\hspace{2.0cm}a) Find $\mathbf{z}^k_{pi} \in \mathcal X_i$ s.t.

\hspace{2.28cm}
$\|\mathbf{z}_{pi}^k-\widehat{{\x}}_{pi}\left(
(\x_{{S}^k_pi<}^{k+1}, \x_{{S}^k_pi\geq}^{k},\x_{-p}^{k}),
\boldsymbol{{\tau}}\right)\|\leq\varepsilon_{pi}^{k}$;\smallskip

\hspace{2.0cm}b) Set ${\x}_{pi}^{k+1}\triangleq {\x}_{pi}^{k}+{\gamma}^{\, k}\,\left(\mathbf{z}_{pi}^{k}-{\x}_{pi}^{k}\right)$\smallskip

\texttt{$\mbox{(S.4)}:$} Set $\x^{k+1}_{pi} = \x^{k}_{pi}$ for all $i \not \in S^k$,

\hspace{1.24cm}
$k\leftarrow k+1$, and go to \texttt{$\mbox{(S.1)}.$}
\label{alg:PJGA  bis}\end{algo} \vspace{-0.1cm}


\begin{theorem} \label{Theorem_convergence_Gauss_Jacobi bis}Let\emph{
$\{\mathbf{x}^{k}\}_{k=1}^{\infty}$} be the sequence generated by
Algorithm \ref{alg:PJGA bis}, under the setting of Theorem \ref{Theorem_convergence_inexact_Jacobi}, but  with the addition
assumption that $\nabla F$ is bounded on $\mathcal X$.
Then, either Algorithm\emph{ }\ref{alg:PJGA bis} converges in a finite number of iterations to a stationary solution
of (\ref{eq:problem 1}) or every limit point of
the sequence\emph{ $\{\mathbf{x}^{k}\}_{k=1}^{\infty}$ }(at least
one such points exists) is a stationary solution of (\ref{eq:problem 1}).\vspace{-0.2cm}
\end{theorem}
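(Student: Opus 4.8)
The plan is to prove Theorem~\ref{Theorem_convergence_Gauss_Jacobi bis} by showing that Algorithm~\ref{alg:PJGA bis} is, once again, nothing but an instance of the inexact Jacobi Algorithm~\ref{alg:general}, so that Theorem~\ref{Theorem_convergence_inexact_Jacobi} applies. The reduction is the natural combination of the two devices already used for Algorithms~\ref{alg:general} and~\ref{alg:PJGA}: the set $S^k \triangleq \bigcup_{p\in P} S^k_p$ plays the role of the index set selected in Step~(S.2) of Algorithm~\ref{alg:general} — and, by construction, it contains at least one index $i$ with $E_i(\x^k)\ge \rho M^k$, so the selection rule of Algorithm~\ref{alg:general} is met verbatim — while the use of the latest available iterates inside each processor (the Gauss-Seidel sweeps over $S^k_p$) is reinterpreted as an \emph{additional inexactness} in the solution of the Jacobi subproblems that Algorithm~\ref{alg:general} would solve at $\x^k$.

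First I would make this reinterpretation precise. For every $p$ and every $i\in S^k_p$, Algorithm~\ref{alg:PJGA bis} computes $\z^k_{pi}$ that is $\varepsilon^k_{pi}$-close to $\widehat\x_{pi}(\mathbf{w}^k_{pi},\boldsymbol\tau)$, where $\mathbf{w}^k_{pi}\triangleq(\x^{k+1}_{S^k_p\,i<},\,\x^{k}_{S^k_p\,i\ge},\,\x^k_{-p})$ is the partially updated point, whereas the corresponding subproblem of Algorithm~\ref{alg:general} is solved at $\x^k$. Writing $\widetilde\varepsilon^k_{pi}\triangleq\|\z^k_{pi}-\widehat\x_{pi}(\x^k,\boldsymbol\tau)\|$ and using the triangle inequality together with the Lipschitz continuity of the solution map $\widehat\x_{pi}(\cdot,\boldsymbol\tau)$ with modulus $\widehat L$ (which follows from A6 and P1--P3, and is established en route to Theorem~\ref{Theorem_convergence_inexact_Jacobi}), one gets
\begin{equation}\label{eq:plan-eff-error}
\widetilde\varepsilon^k_{pi}\ \le\ \varepsilon^k_{pi}+\widehat L\,\|\mathbf{w}^k_{pi}-\x^k\|\ \le\ \varepsilon^k_{pi}+\widehat L\!\!\sum_{j\in S^k_p,\,j<i}\!\!\|\x^{k+1}_{pj}-\x^k_{pj}\|.
\end{equation}
Since each update has the form $\x^{k+1}_{pj}-\x^k_{pj}=\gamma^k(\z^k_{pj}-\x^k_{pj})$, the key point is to bound the best-response displacements $\|\z^k_{pj}-\x^k_{pj}\|$ by a constant uniformly in $k$; this is exactly where the additional hypothesis that $\nabla F$ is bounded on $\mathcal X$ enters. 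Using the strong convexity of $\widetilde h_{pj}(\cdot;\cdot)$ (modulus $q$ from A6), property P2, and the boundedness of $\nabla F$ — together with coercivity A5, which confines $\{\x^k\}$ to a bounded level set on which the subgradients of $g_{pj}$ are bounded — one obtains $\|\widehat\x_{pj}(\y,\boldsymbol\tau)-\y_{pj}\|\le C$ for a finite $C$ and all relevant $\y$, hence $\|\z^k_{pj}-\x^k_{pj}\|\le C'$ and therefore $\|\mathbf{w}^k_{pi}-\x^k\|\le N\,C'\,\gamma^k$.

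Substituting this into \eqref{eq:plan-eff-error} gives $\widetilde\varepsilon^k_{pi}\le\varepsilon^k_{pi}+\widehat L\,N\,C'\,\gamma^k$, i.e.\ the effective error of the equivalent Jacobi iteration is of order $\gamma^k$. It then remains to check that $\{\widetilde\varepsilon^k_{pi}\}$ still obeys condition~(iv) of Theorem~\ref{Theorem_convergence_inexact_Jacobi}. Again invoking the boundedness of $\nabla F$, say $\|\nabla_{\x_i}F(\x)\|\le B$ on $\mathcal X$, we have $\min\{\alpha_2,1/\|\nabla_{\x_i}F(\x^k)\|\}\ge\min\{\alpha_2,1/B\}>0$, so the original bound yields $\varepsilon^k_{pi}\le\gamma^k\alpha_1\alpha_2$ and hence $\widetilde\varepsilon^k_{pi}\le\gamma^k(\alpha_1\alpha_2+\widehat L N C')\le\gamma^k\,\widetilde\alpha_1\min\{\widetilde\alpha_2,1/\|\nabla_{\x_i}F(\x^k)\|\}$ for suitably enlarged constants $\widetilde\alpha_1,\widetilde\alpha_2$. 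Since the iterate update $\x^{k+1}=\x^k+\gamma^k(\widehat\z^k-\x^k)$ with $\widehat\z^k_i=\z^k_i$ for $i\in S^k$ and $\widehat\z^k_i=\x^k_i$ otherwise is precisely Step~(S.4) of Algorithm~\ref{alg:general}, all hypotheses of Theorem~\ref{Theorem_convergence_inexact_Jacobi} are in force — the persistent $O(\gamma^k)$ inexactness (present for infinite $k$ even when the subproblems are solved exactly) activating the conditional global Lipschitzianity of $G$ in that theorem's setting — and the conclusion follows. I expect the main obstacle to be the uniform bounding of the Gauss-Seidel displacement in the second step: one must rule out any circular dependence between ``iterates stay bounded'' and ``the displacement error is controlled,'' and it is precisely the boundedness of $\nabla F$ — not required in Theorem~\ref{Theorem_convergence_inexact_Jacobi} — that makes the best-response displacement uniformly bounded and thereby breaks this circularity.
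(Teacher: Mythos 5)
Your strategy coincides with the paper's. The paper's (omitted) proof of this theorem is declared to be the combination of the proofs of Theorems~\ref{Theorem_convergence_inexact_Jacobi} and~\ref{Theorem_convergence_Gauss_Jacobi}, and the proof of Theorem~\ref{Theorem_convergence_Gauss_Jacobi} is exactly the reduction you describe: interpret the within-processor Gauss--Seidel sweep as additional inexactness, bound the effective error $\tilde{\varepsilon}^k_{pi}=\|\z^k_{pi}-\widehat{\x}_{pi}(\x^k)\|$ via the triangle inequality and the Lipschitz continuity of $\widehat{\x}(\bullet)$ [Proposition~\ref{Prop_x_y}(a)], show it is $O(\gamma^k)$, and invoke Theorem~\ref{Theorem_convergence_inexact_Jacobi}. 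Your handling of the selection sets (the union $S^k=\cup_p S^k_p$ fulfills Step (S.2) of Algorithm~\ref{alg:general} by construction) and of condition (iv) (boundedness of $\nabla F$ keeps $\min\{\alpha_2,1/\|\nabla_{\x_i}F(\x^k)\|\}$ bounded away from zero, so any $O(\gamma^k)$ error qualifies after enlarging constants) also matches the intended argument.

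There is, however, one step that fails as written: the uniform bound on the best-response displacement, $\|\widehat{\x}_{pj}(\y)-\y_{pj}\|\le C$. You obtain the needed control of the $g_{pj}$-term from coercivity (A5), arguing that $\{\x^k\}$ (and, implicitly, the partially updated points $\mathbf{w}^k_{pi}$) stays in a bounded level set on which the subgradients of $g_{pj}$ are bounded. But confinement of the iterates to a level set is available only \emph{after} the descent inequality of Theorem~\ref{Theorem_convergence_inexact_Jacobi} has been established, and that inequality requires precisely the summability of the error terms you are in the middle of constructing; this is the circularity you warn about, and boundedness of $\nabla F$ alone does not break it, because the displacement bound has a $G$-part in addition to the $F$-part. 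The non-circular argument---the one the paper uses in step (d) of \eqref{eq:error_GS}---makes no reference whatsoever to the iterates: apply Proposition~\ref{Prop_x_y}(c) with the singleton $S=\{pj\}$ at an \emph{arbitrary} $\y\in X$ to get
\begin{equation*}
c_{\boldsymbol{\tau}}\left\Vert \widehat{\x}_{pj}(\y)-\y_{pj}\right\Vert^2\leq \left\Vert\nabla_{\x_{pj}}F(\y)\right\Vert\,\left\Vert\widehat{\x}_{pj}(\y)-\y_{pj}\right\Vert + g_{pj}(\y_{pj})-g_{pj}(\widehat{\x}_{pj}(\y)),
\end{equation*}
and bound $g_{pj}(\y_{pj})-g_{pj}(\widehat{\x}_{pj}(\y))\leq L_G\|\widehat{\x}_{pj}(\y)-\y_{pj}\|$ using the \emph{global} Lipschitz constant $L_G$ of $G$; together with $\|\nabla_{\x_{pj}}F(\y)\|\leq\beta$ this gives $\|\widehat{\x}_{pj}(\y)-\y_{pj}\|\leq (L_G+\beta)/c_{\boldsymbol{\tau}}$ for every $\y\in X$. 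The constant $L_G$ is legitimately available here: as you yourself observe, the reduction produces persistent inexactness even when the subproblems are solved exactly, so the global Lipschitzianity of $G$ from the conditional hypothesis of Theorem~\ref{Theorem_convergence_inexact_Jacobi} is the assumption that must be invoked---use it in place of the level-set argument, and your proof is complete and coincides with the paper's.
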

\begin{proof} The proof is just a (notationally complicated, but conceptually easy) combination of the proofs of  Theorems \ref{Theorem_convergence_inexact_Jacobi} and \ref{Theorem_convergence_Gauss_Jacobi} and is omitted for lack of space. \end{proof}\smallskip
Our experiments show that, in the case of highly nonlinear objective functions,  Algorithm 3 performs very well in practice, see Section \ref{sec:Num}.

\section{Examples and Special cases}\label{sec:Exe}

Algorithms \ref{alg:general} and \ref{alg:PJGA} are very general and  encompass a gamut of novel algorithms, each corresponding to various forms of the approximant $P_i$, the error bound function $E_i$, the step-size sequence $\gamma^k$, the block partition, etc. These choices lead to algorithms that can be very different from each other, but \emph{all converging under the same conditions}. These  degrees of freedom offer   a lot of flexibility  to control iteration complexity, communication overhead, and convergence speed.
We outline next several effective  choices for the design parameters  along with  some illustrative examples of  specific algorithms resulting from a proper combination of these choices. 

\noindent \textbf{On the choice of the step-size $\gamma^k$}\emph{.}
An example of step-size rule satisfying conditions i)-iv) in Theorem \ref{Theorem_convergence_inexact_Jacobi}  is:
given $0 < \gamma^{0} \le 1$, let\vspace{-0.2cm}
\begin{equation}\label{eq:gamma}
\gamma^{k}=\gamma^{k-1}\left(1-\theta\,\gamma^{k-1}\right),\quad k=1,\ldots, 
\end{equation}
where $\theta\in(0,1)$ is a given constant. Notice that while this rule
may still require some tuning for optimal behavior, it is quite reliable, since in general we are not
using a (sub)gradient direction, so that many of the well-known practical drawbacks associated
with a (sub)gradient method with diminishing step-size are mitigated in our setting.  Furthermore, this choice of step-size does not require any form of centralized coordination, which is  a favorable feature in a parallel environment. Numerical results in Section \ref{sec:Num} show  the effectiveness of  (the customization of)  \eqref{eq:gamma} on specific problems.
 \begin{remark}[Line-search variants of Algorithms \ref{alg:general} and \ref{alg:PJGA}] \emph{It is   possible to
prove convergence of Algorithms 1 and \ref{alg:PJGA} also using other step-size rules, for example Armijo-like line-search procedures on $V$. 
In fact,  Proposition \ref{Prop_x_y} in Appendix A  shows that the direction $\widehat{\mathbf{x}}\left(\mathbf{x}^{k},\boldsymbol{{\tau}}\right)-\mathbf{x}^{k}$ is a ``good" descent direction.
Based on this result, it is not difficult to prove that if  $\gamma^k$ is chosen by a suitable  
line-search procedure on $V$, convergence of Algorithm 1   to stationary points of Problem (\ref{eq:problem 1}) (in the sense of Theorem 1) is  guaranteed.  
Note that standard line-search methods proposed for smooth functions cannot be applied to $V$ (due to the nonsmooth part $G$); one needs to rely on more sophisticated procedures, e.g., in the spirit of those proposed in \cite{tseng2009coordinate,byrd2013inexact,patriksson1998cost,wright2009sparse}. We provide next  an  example of line-search rule that can be used to compute $\gamma^k$ while guaranteeing convergence of Algorithms 1 and 2 [instead of using rules i)-iii) in Theorem 1]; because of space limitations, we consider only the case of exact solutions (i.e., $\epsilon_i^k=0$ in Algorithm 1 and    $\epsilon_{pi}^k=0$ in Algorithms 2 and 3).
Writing for short $\widehat{\mathbf{x}}_i^k\triangleq \widehat{\mathbf{x}}_i\left(\mathbf{x}^{k},\boldsymbol{{\tau}}\right)$
and  $\Delta \widehat{\x}^k\triangleq (\Delta \widehat{\x}^k_i)_{i=1}^n$, with  $\Delta \widehat{\x}^k_i\triangleq \widehat{\mathbf{x}}_i^k-\mathbf{x}_i^{k}$, and denoting by $(\x)_{\mathcal{S}^k}$ the vector whose component $i$ is equal to $x_i$ if $i\in \mathcal S^k$, and zero otherwise,  we have the following: let $\alpha, \beta\in (0,1)$, choose $\gamma^k=\beta^{\bar{\ell}}$, where  $\bar{\ell}$ is the smallest nonnegative integer $\ell$ such that 
\begin{equation}\nonumber
V\left(\x^k+\beta^{\ell}(\Delta \widehat{\x}^k)_{\mathcal S^k}\right)-V(\x^k) 
\leq -\alpha \cdot \beta^{\ell}\,\|\left(\Delta\widehat{\x}^k\right)_{\mathcal S^k}\|^2.
\end{equation} 
Of course, the above procedure will likely be more efficient in terms of iterations than the one based on diminishing step-size rules [as (\ref{eq:gamma})]. However, performing a line-search on a multicore architecture requires some shared memory and coordination among the cores/processors; therefore we do not consider further this variant.  Finally, we observe that convergence of Algorithms 1 and 2 can also be obtained by choosing a constant (suitably small ) stepsize  $\gamma^k$. This is actually the easiest option, but since it generally leads to extremely slow convergence we omitted   this option from our developments here. }
\end{remark}
\noindent \textbf{On the choice of the error bound function $E_i(\mathbf{x})$}.\\
\noindent $\bullet$
 As we mentioned, the most obvious choice is to take $E_i(\mathbf{x}) =  \|\widehat{\mathbf{x}}_{i}(\x^k,\tau_{i})-\mathbf{x}^k_i\|$.
This is a valuable choice if the computation of $\widehat{\mathbf{x}}_{i}(\x^k,\tau_{i})$ can be  easily accomplished. For instance, in the
LASSO problem with  ${\cal N} = \{1, \ldots, n\}$  (i.e., when each block reduces to a scalar variable),
it is well-known that  $\widehat{\mathbf{x}}_{i}(\x^k,\tau_{i})$ can be computed in closed form using
the soft-thresholding operator \cite{beck_teboulle_jis2009}.\\
\noindent  $\bullet$ In situations where the computation of  $\|\widehat{\mathbf{x}}_{i}(\x^k,\tau_{i})-\mathbf{x}^k_i\|$ is not possible or advisable,
we can resort to estimates. Assume momentarily that $G\equiv 0$. Then it is known \cite[Proposition 6.3.1]{Facchinei-Pang_FVI03} under our assumptions that
$\|\Pi_{X_i} (\x^k_i - \nabla_{\x_i}F(\x^k)) - \x^k_i\|$  is an error bound for the minimization problem in \eqref{eq:decoupled_problem_i} and therefore
satisfies \eqref{eq:error bound}, where $\Pi_{X_i} (\mathbf{y})$ denotes the Euclidean projection of $\mathbf{y}$ onto the closed and convex set $X_i$. In this situation we can choose $E_i(\x^k) = \|\Pi_{X_i} (\x^k_i - \nabla_{\x_i}F(\x^k)) - \x^k_i\|$.
If $G(\x) \not \equiv 0$ things  become more complex. In most cases of practical interest, adequate error bounds can be derived from \cite[Lemma 7]{tseng2009coordinate}.\\
\noindent $\bullet$ It is interesting  to note that the computation of $E_i$ is only needed if a partial update of the (block) variables is
performed. However, an option that is always feasible is to take $S^k = {\cal N}$ at each iteration, i.e.,  update all (block) variables at each iteration.
With this choice we can  dispense with the computation of $E_i$ altogether.\smallskip\\
\noindent \textbf{On the choice of the approximant $P_i(\mathbf{x}_i;\mathbf{x} )$}.\\
\noindent $\bullet$ The most obvious choice for $P_i$  is the linearization of $F$ at $\x^k$ with respect to $\x_i$:
$$ P_i(\x_i; \x^k) = F(\x^k) + \nabla_{\x_i}F(\x^k)^T(\x_i - \x_i^k).$$ With this choice, and taking for simplicity $\Q_i(\x^k) = \mathbf I$, \vspace{-0.1cm}
\begin{equation}\label{eq:proposal 1}
\begin{array}{ll}
 \widehat {\mathbf x}_i(\x^k, \tau_i)=  \underset{\x_{i}\in X_{i}}{\mbox{argmin}\,}  \left \{F(\x^k) + \nabla_{\x_i}F(\x^k)^T(\x_i - \x_i^k)\right. \\ \left.\hspace{3.2cm}+ \dfrac{\tau_i}{2}
\| \x_i- \x_i^k\|^2 + g_i(\x_i)\right\}.\vspace{-0.1cm}
\end{array}
\end{equation}
This is essentially the way   a new iteration is computed in most  (block-)CDMs for the
solution of (group) LASSO problems and its generalizations.\\\noindent $\bullet$ At another extreme we could just take $P_i(\x_i; \x^k) = F(\x_i, \x_{-i}^k)$. Of course, to
have P1 satisfied (cf. Section \ref{sec:Main Results}), we must assume that $F(\x_i, \x_{-i}^k)$ is convex.
 With this choice, and setting  for simplicity $\Q_i(\x^k) =\mathbf  I$, we have
\begin{equation}\label{eq:proposal 2}
 \widehat \x_i(\x^k, \tau_i) \triangleq \underset{\x_{i}\in X_{i}}{\mbox{argmin}\,} \left\{ F(\x_i,  \x^k_{-i}) + \frac{\tau_i}{2}
\| \x_i- \x_i^k\|^2 + g_i(\x_i)\right\},
\end{equation}
thus giving rise to a parallel nonlinear Jacobi type method for the constrained minimization of $V(\x)$.\\
\noindent $\bullet$ Between the two ``extreme'' solutions proposed above, one can consider ``intermediate'' choices. For example,
 If   $F(\x_i, \x_{-i}^k)$ is convex, we can take  $P_i(\x_i;\x^k)$ as a second order approximation of $F(\x_i, \x_{-i}^k)$, i.e.,
\begin{equation}
\begin{array}{ll}
P_i(\x_i; \x^k) = F(\x^k) + \nabla_{\x_i}F(\x^k)^T(\x_i - \x_i^k) \smallskip\\ \hspace{2cm}+ \frac{1}{2} (\x_i - \x_i^k)^T \nabla^2_{\x_i\x_i} F(\x^k) (\x_i - \x_i^k).\end{array}
\end{equation} When  $g_i(\x_i) \equiv 0$, this essentially corresponds to taking a Newton step in minimizing the ``reduced'' problem $\min_{\x_i\in X_i}F(\x_i, \x_{-i}^k)$, resulting in 
\begin{equation}
\begin{array}{lll}
 \widehat \x_i(\x^k, \tau_i) &= \underset{\x_{i}\in X_{i}}{\mbox{argmin}\,}  \left\{ F( \x^k) +\nabla_{\x_i}F( \x^k)^T(\x_i - \x_i^k)\right.\vspace{-0.1cm}
\\&\;
\hspace{1.5cm}+  \left.\frac{1}{2} (\x_i -  \x_i^k)^T \nabla^2_{\x_i\x_i} F( \x^k) (\x_i - \x_i^k)\right. \medskip \\&\;  \hspace{1.5cm}\left. +
 \frac{\tau_i}{2}\| \x_i- \x_i^k\|^2 + g_i(\x_i)\right\}.
 \end{array}
\end{equation}
\noindent $\bullet$ Another ``intermediate'' choice, relying on a specific structure of the objective function that has
important applications is the following. Suppose that $F$ is a sum-utility function, i.e.,
$$
F(\x) = \sum_{j\in J} f_j(\x_i, \x_{-i}),
$$
for some finite set $J$. Assume now that for every $j \in S_i \subseteq J$, the functions $f_j (\bullet, \x_{-i})$ is convex. Then we may set
$$
 P_i(\x_i; \x^k) =  \sum_{j\in S_i}f_j(\x_i, \x_{-i}^k) +  \sum_{j \not \in S_i}\nabla f_j(\x_i, \x_{-i}^k)^T  (\x_i - \x_i^k)
$$
thus   preserving, for each $i$, the favorable convex part of $F$ with respect to $\x_i$ while linearizing the nonconvex parts. This is the
approach adopted in \cite{scutari_facchinei_et_al_tsp13} in the design of multi-users systems, to which we refer for applications in signal processing and communications.

The framework described in Algorithm  \ref{alg:general}  can give rise
to very different schemes, according to the choices one makes for the many
variable features it contains, some of which have been detailed above. Because of space limitation,  we cannot  discuss here all possibilities.
We provide next   just a few instances  of possible algorithms that fall in our framework.
\smallskip

\noindent {\bf Example$\,\#$1$-$(Proximal) Jacobi algorithms for
convex functions.} Consider the simplest problem falling in our setting: the unconstrained minimization of a continuously differentiable convex function, i.e.,
assume that $F$ is convex, $G \equiv 0$, and $X= \Re^n$. Although this is  possibly the best studied problem in nonlinear optimization, classical
parallel methods for this problem \cite[Sec. 3.2.4]{Bertsekas_Book-Parallel-Comp} require very strong contraction conditions. In our framework we can take
$P_i(\x_i;\x^k) = F(\x_i, \x_{-i}^k)$, resulting in a parallel Jacobi-type  method which does not need any additional assumptions. Furthermore our theory shows that
we can even  dispense with the convexity assumption and still get convergence of a Jacobi-type method to a stationary point.
If in addition we take $S^k = {\cal N}$, we obtain the class of methods  studied in
\cite{cohen1978optimization,cohen1980auxiliary,patriksson1998cost,scutari_facchinei_et_al_tsp13}.
\smallskip

\noindent {\bf Example$\,\#$2$-$Parallel coordinate descent methods for LASSO.}
Consider the LASSO problem, i.e., Problem \eqref{eq:problem 1} with $F(\x) = \|\A\x -\bv\|^2 $,  $G(\x)= c \|\x\|_1$, and $X=\Re^n$. Probably, to date,  the most successful class of methods for this problem is
that of CDMs, whereby at each iteration a \emph{single} variable is updated  using \eqref{eq:proposal 1}.
We can easily obtain a parallel version for this method by taking $n_i =1$, $S^k = {\cal N}$ and
 still using  \eqref{eq:proposal 1}. Alternatively, instead of linearizing $F$, we can  better exploit the structure  of $F$ and  use   (\ref{eq:proposal 2}). In fact, it is well known that in LASSO problems subproblem
  (\ref{eq:proposal 2}) can be solved analytically (when $n_i=1$). We  can  easily consider similar methods for the group LASSO problem as well  (just take $n_i >1$).
  \smallskip

\noindent {\bf Example$\,\#$3$-$Parallel coordinate descent methods for Logistic Regression.}
 Consider the Logistic Regression problem, i.e.,  Problem \eqref{eq:problem 1} with
$F(\x) = \sum_{j=1}^m \log(1 + e^{-a_i \y_i^T\x }) $, $G(\x)= c \|\x\|_1$, and
$X=\Re^n$, where $\y_i\in \Re^n$, $a_i\in \{-1, 1\}$, and $c\in \Re_{++}$ are  given
constants.
Since $F(\x_i, \x_{-i}^k)$ is convex, we can take
$ P_i(\x_i; \x^k) = F(\x^k) + \nabla_{\x_i}F(\x^k)^T$ $(\x_i - \x_i^k) + \frac{1}{2} (\x_i - \x_i^k)^T \nabla^2_{\x_i\x_i} F(\x^k) (\x_i - \x_i^k) $ and
thus obtaining a fully distributed and parallel CDM that uses a second order approximation of the smooth function $F$.
Moreover by taking $n_i =1$ and using a soft-thresholding operator, each $\widehat \x_i$ can be computed in closed form.\smallskip

\noindent {\bf Example$\,\#$4$-$Parallel algorithms for dictionary learning problems}. Consider the dictionary learning problem, i.e.,  Problem \eqref{eq:problem 1} with $F(\mathbf{X})=\|\mathbf{Y}-\mathbf{X}_1\mathbf{X}_2\|_F^2$ and $G(\mathbf{X}_2)=c\|\mathbf{X}_2\|_1$, and $X= \{\mathbf{X}\triangleq(\mathbf{X}_1,\mathbf{X}_2)\in \Re^{n\times m} \times  \Re^{m\times N}\,:\, \|\mathbf{X}_{1}\mathbf{e}_i\|^2\leq \alpha_i,\forall i=1,\ldots,m \}$. Since $F(\mathbf{X}_1,\mathbf{X}_2)$ is convex in $\mathbf{X}_1$ and $\mathbf{X}_2$ separately, one can take $P_1(\mathbf{X}_1;\mathbf{X}_2^k)=F(\mathbf{X}_1,\mathbf{X}_2^k)$ and $P_2(\mathbf{X}_2;\mathbf{X}_1^k)=F(\mathbf{X}_1^k,\mathbf{X}_2)$. Although natural, this choice does not lead to a close form solutions of the subproblems associated with the optimization of $\mathbf{X}_1$ and $\mathbf{X}_2$. This desirable property can be obtained using the following alternative approximations of $F$ \cite{MeisamLuoICASSP14}:  $P_1(\mathbf{X}_1;\mathbf{X}^k)=F(\mathbf{X}^k)+\left\langle \nabla_{\mathbf{X_1}} F(\mathbf{X}^k), \mathbf{X}_1 -\mathbf{X}_1^k\right\rangle$ and $P_2(\mathbf{X}_2;\mathbf{X}^k)=F(\mathbf{X}^k)+\left\langle \nabla_{\mathbf{X_2}} F(\mathbf{X}^k), \mathbf{X}_2 -\mathbf{X}_2^k\right\rangle$, where  $\left\langle \mathbf{A},\mathbf{B}\right\rangle\triangleq \text{tr}(\mathbf{A}^T \mathbf{B})$. Note that differently from \cite{MeisamLuoICASSP14}, our algorithm is parallel and more flexible in the choice of the proximal gain terms [cf. \eqref{q_approx}].

\section{Related works}\label{sec:related_works}

 The proposed algorithmic framework  draws on Successive Convex Approximation (SCA) paradigms that have a long history in the optimization literature. Nevertheless, our algorithms and their convergence conditions (cf. Theorems 1 and 2) unify and extend current parallel and sequential SCA methods
in several directions, as outlined next.


\noindent \textbf{(Partially) Parallel  Methods}:  The roots of parallel deterministic SCA schemes (wherein \emph{all} the variables are updated simultaneously)   can be traced back at least to the work of Cohen on the so-called  auxiliary principle
\cite{cohen1978optimization,cohen1980auxiliary} and its related developments, see e.g.
\cite{necoara2013efficient,nesterov2012gradient,
 tseng2009coordinate,beck_teboulle_jis2009,wright2009sparse,bradley2011parallel,
 peng2013parallel,richtarik2012parallel,scutari_facchinei_et_al_tsp13,patriksson1998cost,fukushima1981generalized,mine1981minimization}.
Roughly speaking these works can be divided in two groups, namely: solution methods for \emph{convex} objective functions
\cite{necoara2013efficient,beck_teboulle_jis2009,bradley2011parallel,peng2013parallel,
richtarik2012parallel,cohen1978optimization,cohen1980auxiliary}
and  \emph{nonconvex} ones \cite{nesterov2012gradient,
 tseng2009coordinate,wright2009sparse,scutari_facchinei_et_al_tsp13,
 patriksson1998cost,fukushima1981generalized,mine1981minimization}. All methods in the former group (and \cite{nesterov2012gradient,tseng2009coordinate,wright2009sparse,fukushima1981generalized,mine1981minimization}) are (proximal) gradient  schemes; they thus share the classical drawbacks of gradient-like schemes (cf. Sec. \ref{sec:Introduction}); moreover,  by replacing the convex function $F$ with its first order approximation, they do not take any advantage of any structure of $F$ beyond mere differentiability; exploiting any  available structural properties of $F$, instead,    has been shown to enhance (practical) convergence speed, see e.g. \cite{scutari_facchinei_et_al_tsp13}.  Comparing with the second group of works  \cite{nesterov2012gradient,
 tseng2009coordinate,wright2009sparse,scutari_facchinei_et_al_tsp13,
 patriksson1998cost,fukushima1981generalized,mine1981minimization},
 our algorithmic framework improves on their convergence properties while  adding great  flexibility in the selection of how many   variables  to update at each iteration. For instance, with the exception of \cite{tseng2009coordinate,li2009coordinate,NIPS2011_4425,
 ScherrerTewariHalappanavarHaglin2012parallel_random,peng2013parallel}, all the aforementioned works do not allow  parallel updates of only  a \emph{subset} of all   variables, a feature  that instead, fully explored as we do,  can dramatically  improve the convergence speed of the algorithm, as we show in Section \ref{sec:Num}.  Moreover, with the exception of \cite{patriksson1998cost}, they all require an Armijo-type  line-search, which makes them not appealing for a (parallel)  distributed implementation. A scheme in \cite{patriksson1998cost} is actually based on diminishing step-size-rules, but its convergence properties are quite weak:   not all the limit points of  the sequence generated by this scheme  are guaranteed to be stationary solutions   of (\ref{eq:problem 1}).

Our framework instead i) deals with \emph{nonconvex} (nonsmooth) problems; ii) allows one to use  a much varied array of approximations for $F$ and also inexact solutions of the  subproblems; iii) is fully parallel  and distributed (it does not rely on any line-search); and iv) leads to  the \emph{first distributed convergent} schemes  based on very general  (possibly)  \emph{partial}  updating rules of the optimization variables. 
In fact, among deterministic schemes, we are aware of only three algorithms  \cite{tseng2009coordinate,bradley2011parallel,peng2013parallel} performing at each iteration a parallel update of only  a \emph{subset} of all the variables.  These algorithms however are gradient-like schemes, and do not allow inexact solutions of the subproblems, {when a closed form solution is not available}  (in some large-scale problems the cost of computing the exact  solution of all the subproblems can be prohibitive). In addition,  \cite{tseng2009coordinate} requires an Armijo-type  line-search   whereas  \cite{bradley2011parallel} and \cite{peng2013parallel} are  applicable only to \emph{convex}  objective functions and are not \emph{fully} parallel. In fact,  convergence conditions therein impose a constraint on the maximum number of variables that can be simultaneously updated, a constraint that in many large scale problems is  likely not  satisfied.

\noindent \textbf{Sequential Methods}: Our framework contains as special cases also sequential updates; it is then interesting to  compare our results to sequential schemes too. Given the vast literature on the subject, we consider here only the most recent and general work \cite{razaviyayn2013unified}. 
In \cite{razaviyayn2013unified} the authors consider the minimization of a possibly nonsmooth
function by  Gauss-Seidel methods whereby, at each iteration,
a {\em single} block of variables is updated by minimizing a {\em global upper}
convex approximation of the function. However, finding such an   approximation is generally not an easy task, if not impossible. To cope with this issue,
  the authors  also proposed a variant of their scheme that does not need  this requirement but  uses an Armijo-type  line-search, which however makes the scheme  not suitable for a  parallel/distributed implementation.
Contrary to \cite{razaviyayn2013unified}, in our framework conditions on the approximation
function  (cf. P1-P3) are trivial to be satisfied (in particular, $P$ need not be an upper bound of $F$),
enlarging significantly the class of utility functions $V$ which
the proposed solution method is applicable to.  Furthermore, our framework
 gives rise to parallel and distributed  methods (no line search is used) whose degree of parallelism can be chosen by the user. \vspace{-0.2cm}

\section{Numerical Results}\label{sec:Num}

In this section we provide  some numerical results showing  solid evidence of the viability of our approach.
Our aim is  to compare to state-of-the-art methods as well as  test the influence of  two key features of the proposed algorithmic framework, namely:
parallelism and selective (greedy) selection of (only some) variables to update at each iteration. 
The tests were carried out on i) LASSO and Logistic Regression problems, two of the most studied (convex) instances of Problem (\ref{eq:problem 1}); and ii) \emph{nonconvex} quadratic programming.

All codes have been written in C++. All algebra is performed by using the Intel Math Kernel Library (MKL).
The algorithms were tested on the General Compute Cluster of the Center for Computational Research at the SUNY Buffalo. In particular   we used a partition composed of 372 DELL 16x2.20GHz Intel E5-2660  ``Sandy Bridge''  Xeon  Processor computer nodes  with 128 Gb of DDR4 main memory  and QDR InfiniBand 40Gb/s network card.
In our experiments, parallel algorithms have been tested using up to  40 parallel processes (8 nodes with 5 cores per each), while sequential algorithms ran on a single process (using thus one single core).
\vspace{-0.2cm}

\subsection{LASSO problem}\label{Sec_LASSO}

We implemented the instance of Algorithm 1    described in   Example \# 2 in the previous section, using the approximating function $P_i$
as in \eqref{eq:proposal 2}.
Note that in the case of LASSO problems   $\widehat x_i (\x^k,\tau_i)$, the unique solution  \eqref{eq:proposal 2}, can  be
easily computed in closed form  using the soft-thresholding operator, see \cite{beck_teboulle_jis2009}.\smallskip

\noindent \emph{Tuning of Algorithm 1}:
In the description of our algorithmic framework we considered fixed values of $\tau_i$, but it is clear that varying them a finite number of times does not affect in any way the theoretical convergence properties of the algorithms. We found that the following choices work well in practice: (i) $\tau_i$ are initially all set to $\displaystyle \tau_i
= \text{tr}(\A^T\A)/2n$,  i.e., to half of the mean of the eigenvalues  of $\nabla^2F$;
(ii) all $\tau_i$ are doubled if at a certain iteration the objective function does not decrease; and (iii) they are all halved if the objective function decreases
for ten consecutive iterations or the relative error on the objective function $\texttt{re}(\x)$ is sufficiently small, specifically if
\begin{equation}\label{relative_error}
\texttt{re}(\x) \triangleq  \frac{V(\x)-V^*}{V^*} \le 10^{-2},
\end{equation}
where $V^*$ is the optimal value of the objective function $V$ (in our experiments on LASSO $V^*$ is known).
In order to avoid increments in the objective function, whenever all $\tau_i$ are doubled, the associated iteration is discarded, and in (S.4) of Algorithm 1 it is set  $\x^{k+1} = \x^k$. In any case  we limited  the number of possible updates of the values of $\tau_i$ to 100.

The step-size $\gamma^k$ is updated according  to the following rule:
\begin{equation}\label{gamma_num}
\gamma^{k}=\gamma^{k-1}\left(1-\min\left\{1,\frac{10^{-4}}{\texttt{re}(\x^{k})}\right\} \theta\,\gamma^{k-1}\right),\quad k=1,\ldots,\vspace{-0.1cm}
\end{equation}
with $\gamma^0 = 0.9$ and $\theta = 1e-7$.
The above diminishing rule  is based on  \eqref{eq:gamma} while guaranteeing that $\gamma^k$ does not become too close to zero before the relative error is sufficiently small.

Finally the error bound function is chosen as  $E_i(\x^k) =\|\widehat \x_i (\x^k,\tau_i) - \x_i^k\|$, and  $S^k$  in Step 2 of the algorithm is set to
$$
S^k = \{ i: E_i(\x^k) \geq \sigma M^k\}.
$$
 In our tests we consider two options for $\sigma$, namely: i)  $\sigma=0$, which leads to a \emph{fully parallel} scheme wherein at each iteration \emph{all} variables are updated; and {ii) $\sigma=0.5$, which corresponds to updating only a subset of all the variables at each iteration.}  Note that  for both choices of $\sigma$, the resulting set $S^k$  satisfies the requirement in (S.2)  of Algorithm
 \ref{alg:general}; indeed,   $S^k$ always contains the index $i$ corresponding to the largest $E_i(\x^k)$.
We term  the above instance of   Algorithm 1   \emph{FLEXible parallel Algorithm}  (FLEXA), and  we will refer to these two  versions  as  FLEXA $\sigma =0$ and  FLEXA $\sigma =0.5$. Note that    both   versions  satisfy    conditions of  Theorem \ref{Theorem_convergence_inexact_Jacobi}, and thus are convergent.\smallskip

\noindent \emph{Algorithms in the literature}: We compared our versions of FLEXA
with the most competitive   distributed and sequential   algorithms proposed in the literature to solve the LASSO problem. More specifically, we consider the following schemes.

\noindent $\bullet$ {\bf FISTA}: The Fast Iterative Shrinkage-Thresholding Algorithm (FISTA)
proposed in \cite{beck_teboulle_jis2009} is a first order method and can be regarded as the benchmark algorithm for LASSO
problems.
Building on the separability of the terms in the
objective function $V$, this method can be easily parallelized and thus take advantage of  a parallel architecture. We
implemented the parallel version that use a backtracking procedure to estimate  the   Lipschitz constant $L_F$ of $\nabla F$
\cite{beck_teboulle_jis2009}.

\noindent $\bullet$ {\bf SpaRSA}: This is the first order method proposed in \cite{wright2009sparse};
it is a popular spectral projected gradient method that uses a spectral step
length together with a nonmonotone line search to enhance  convergence.
Also this method can be easily parallelized, which is the version implemented in our tests.
In all the experiments we set the  parameters of  SpaRSA as in \cite{wright2009sparse}: $M=5$, $\sigma=0.01$, $\alpha_{\max} = 1e30$, and $\alpha_{\min} = 1e-30$.

\noindent $\bullet$ {\bf GRock $\&$ Greedy-1BCD}: GRock   is a parallel algorithm proposed in \cite{peng2013parallel} that
performs  well on sparse LASSO problems. We   tested the instance of GRock where  the number of  variables simultaneously
updated  is equal to the number of the parallel processors. It is important to remark that the theoretical convergence
properties of GRock are in jeopardy as the number of  variables updated in parallel increases; roughly speaking,  GRock is
guaranteed to converge  if the columns of the data matrix  $\mathbf A$ in the LASSO problem are ``almost'' orthogonal, a
feature 
that  is not satisfied in many applications. A special instance with convergence guaranteed is the one where only one block
per time (chosen in a greedy fashion) is updated; we refer to this special case as \emph{greedy-1BCD}.

\noindent $\bullet$ {\bf ADMM}: This is a classical Alternating Method of Multipliers (ADMM). We implemented the parallel version as proposed in   \cite{DengLaiPengYin2012scaling}.

In the implementation of the parallel algorithms,  the data matrix $\A$ of the LASSO problem  is generated in a uniform column block distributed manner. More specifically,   each processor generates a slice of the matrix itself such that the overall one can be reconstructed as  $\A = \left[ \A_1 \, \A_2 \, \cdots \, \A_P \right]$, where $P$ is the number of parallel processors, and $\A_i$ has $n/P$ columns for each $i$.
Thus the computation of each product $\A\x$ (which is required to evaluate $\nabla F$) and the norm $\|\x\|_1$ (that is $G$) is divided into
the parallel jobs of computing $\A_i \x_i$ and $\|\x_i\|_1$, followed by a reducing operation.\\
\noindent \emph{Numerical Tests}: We generated six groups of LASSO problems  using the random generator proposed by Nesterov  \cite{nesterov2012gradient}, which permits to control the sparsity of
the solution. For the first five groups, we considered problems with 10,000 variables and matrix $\A$ having 9,000 rows. The five groups differ in the degree of sparsity of the solution; more specifically
the percentage of non zeros in the solution is 1\%, 10\%, 20\%, 30\%, and 40\%, respectively.
The last group is formed by
instances with 100,000 variables and  5000 rows for $\A$, and  solutions having 1\% of non zero variables.
In all experiments and for all the algorithms, the initial point  was set to the zero vector.

Results of our experiments for  the 10,000 variables groups are reported in Fig. \ref{fig}, where we plot the relative
error as defined in \eqref{relative_error} versus the CPU time; all  the curves are obtained using 40 cores, and  averaged over ten independent random
realizations. Note that  the CPU time includes communication times (for distributed algorithms) and the initial time needed
by the methods to perform all pre-iteration computations (this explains why the  curves of   {ADMM} start  after
the others; in fact {ADMM} requires some  nontrivial initializations).   For one   instance, the one corresponding to 1\%  of the sparsity of the solution, we plot also the  relative
error   versus   iterations [Fig. \ref{fig2}(a2)]; similar behaviors of the algorithms have been observed also for the other instances, and thus are not reported.  Results   for the LASSO instance with  100,000 variables  are plotted in Fig. \ref{fig2}. The curves are  averaged over five   random realizations.\\
\begin{figure}[h]
 \vspace{-0.4cm}
\centering
       \begin{subfigure}[]{0.24\textwidth}
                \includegraphics[width=\textwidth]{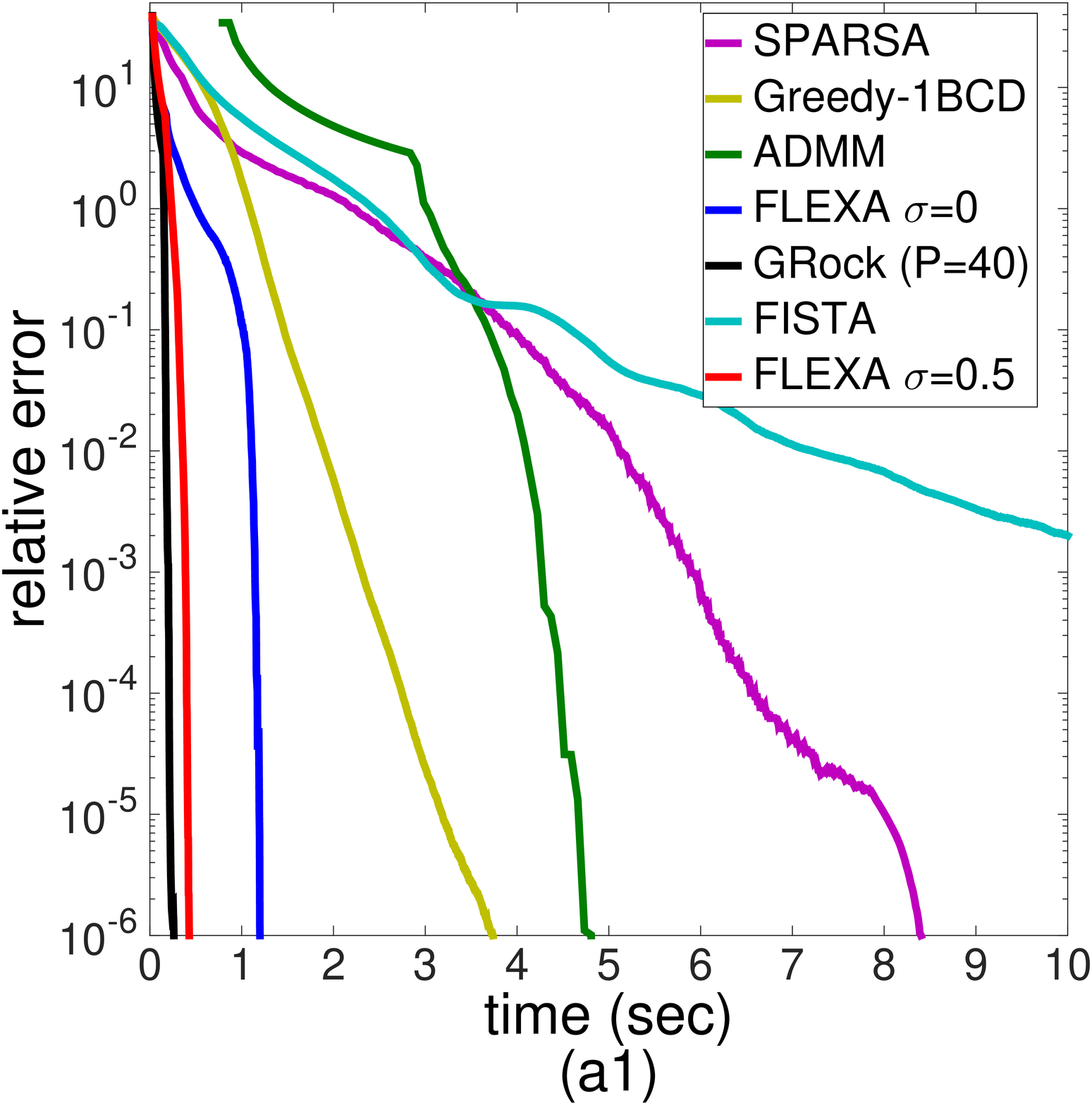}
      \end{subfigure}
      \begin{subfigure}[]{0.24\textwidth}
                \includegraphics[width=\textwidth]{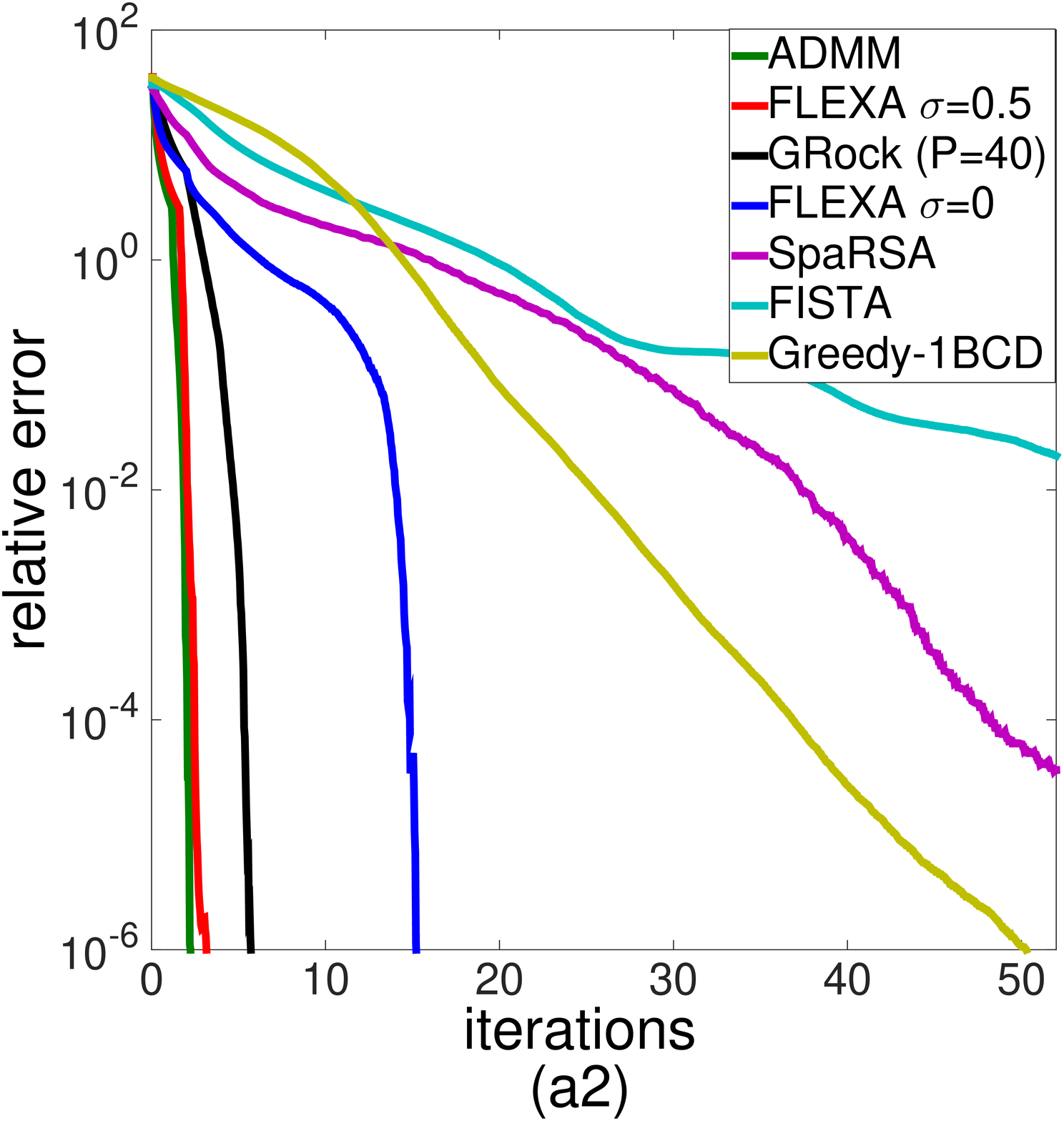}
      \end{subfigure}\vspace{0.1cm}
      \begin{subfigure}[]{0.24\textwidth}
                \includegraphics[width=\textwidth]{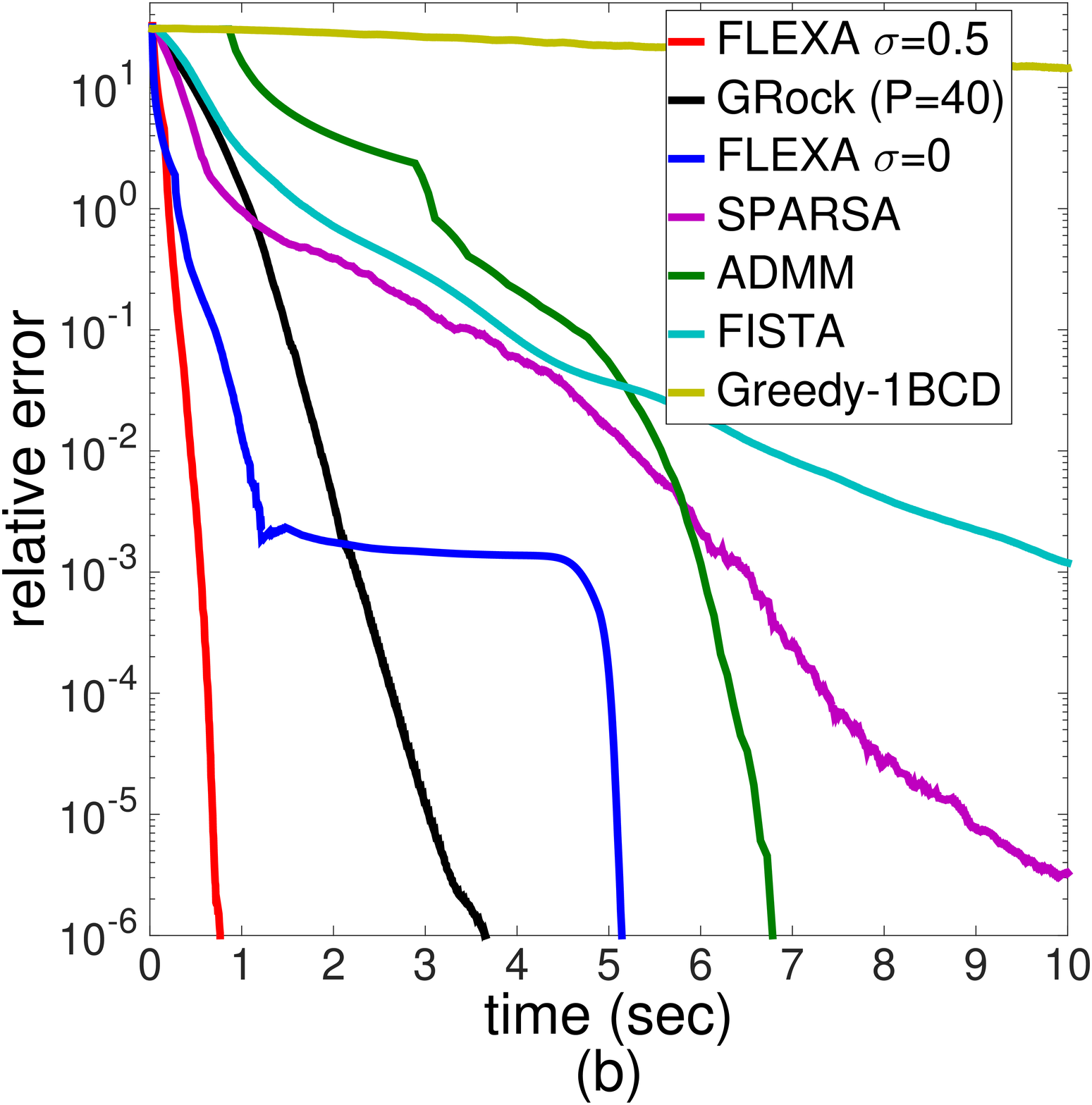}
      \end{subfigure}
      \begin{subfigure}[]{0.24\textwidth}
                \includegraphics[width=\textwidth]{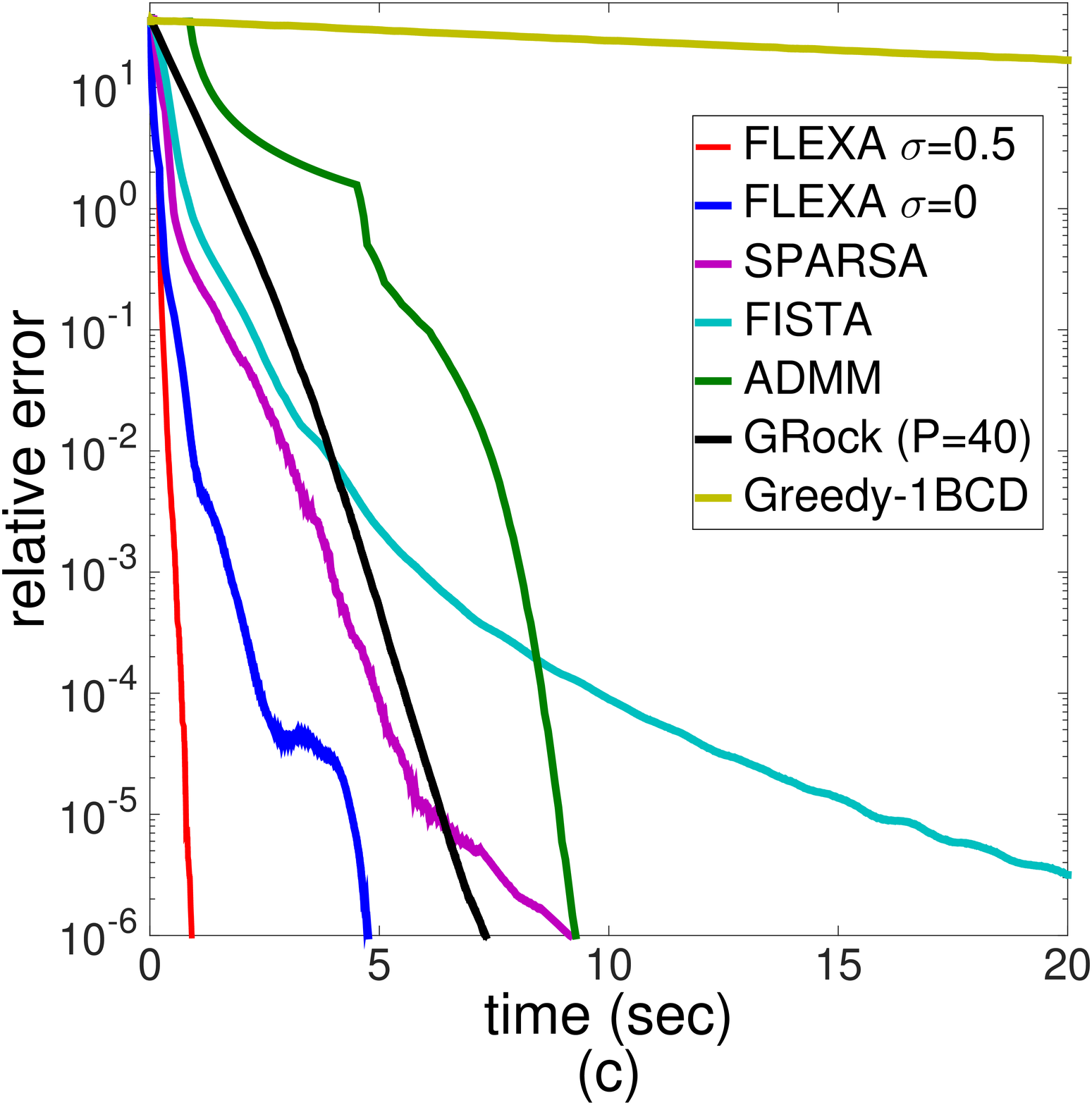}
      \end{subfigure}\vspace{0.0cm}
      \begin{subfigure}[b]{0.24\textwidth}
                \includegraphics[width=\textwidth]{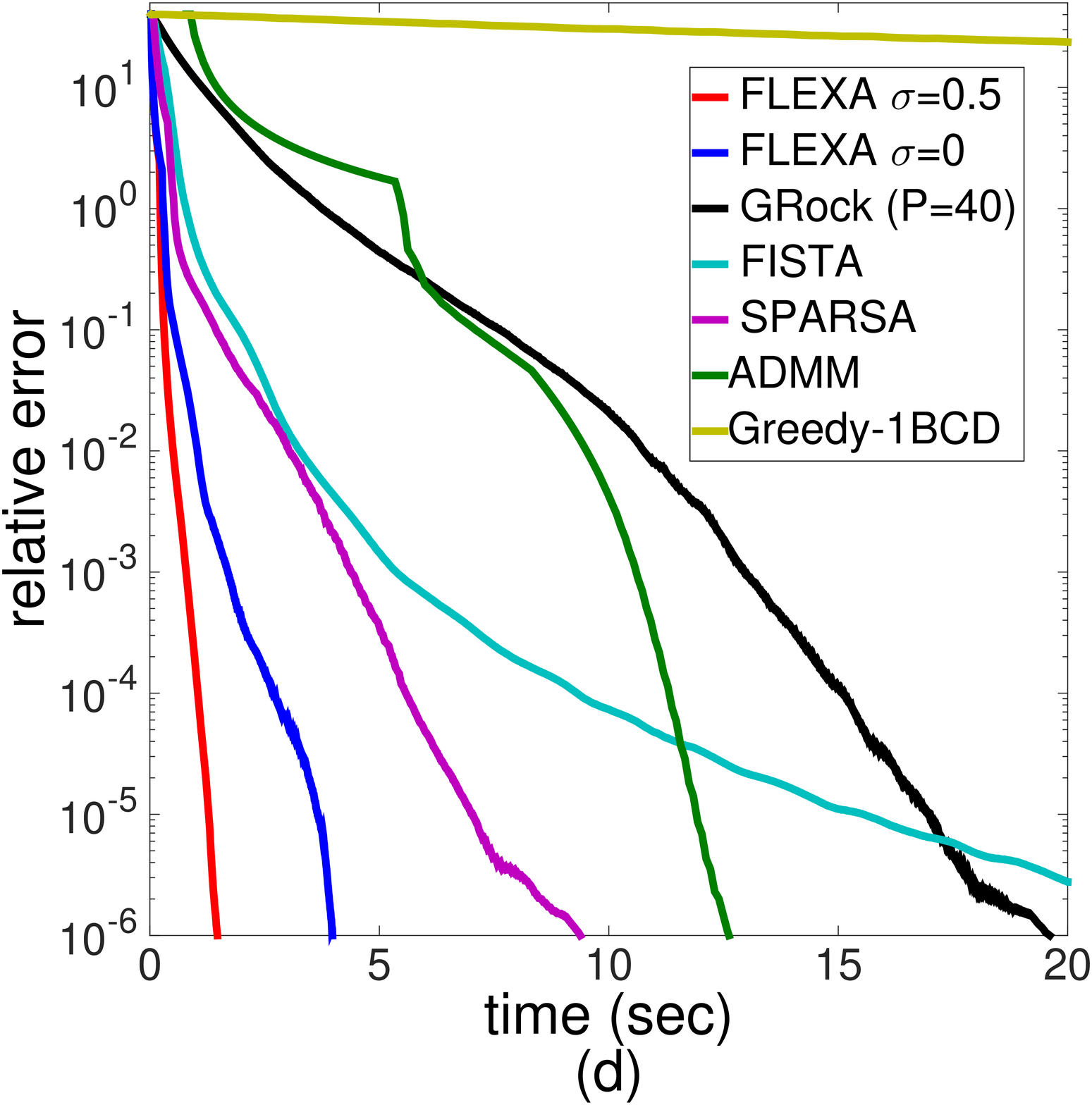}
      \end{subfigure}
      \begin{subfigure}[b]{0.24\textwidth}
                \includegraphics[width=\textwidth]{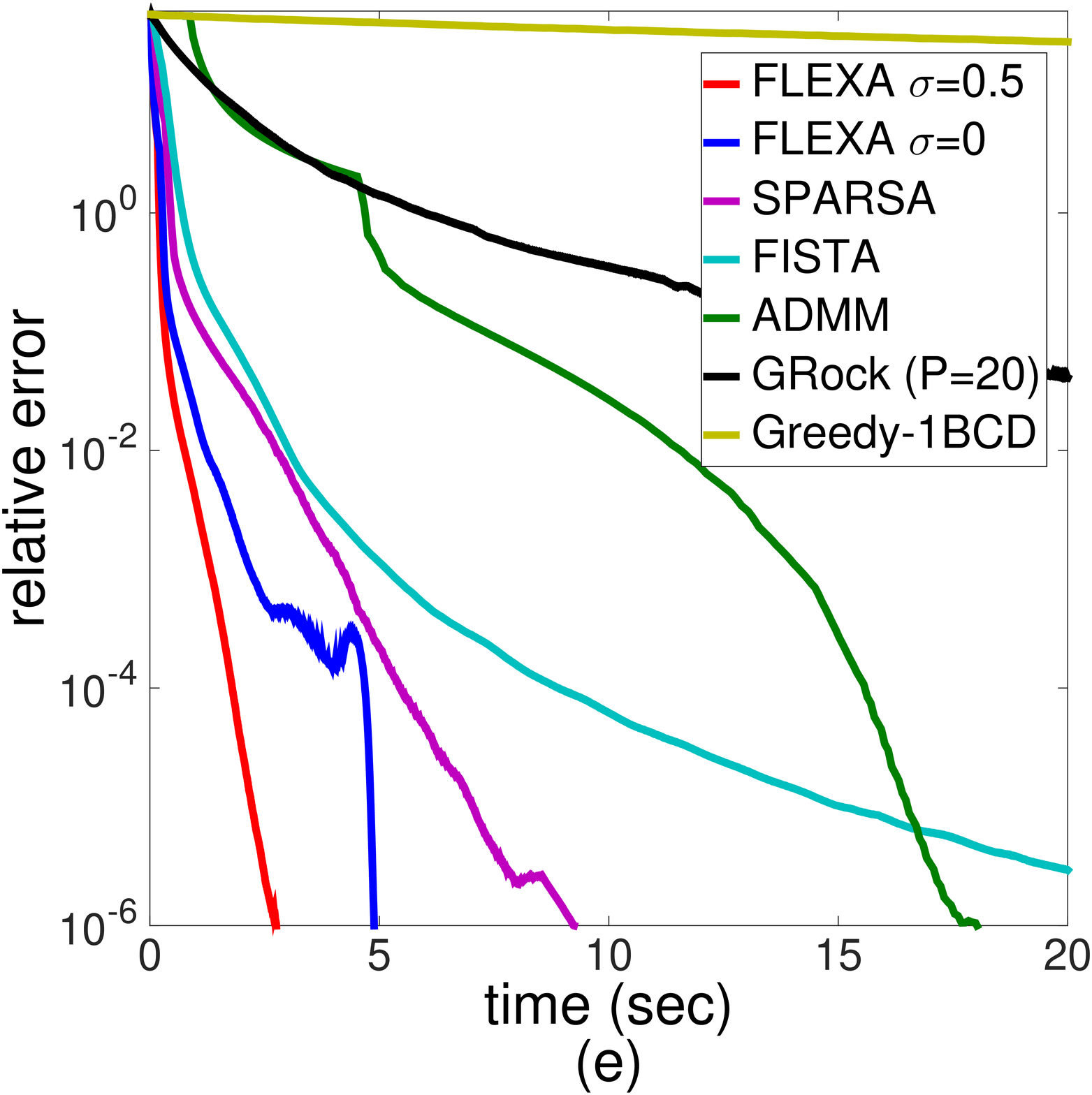}
      \end{subfigure}
\caption{LASSO with 10,000 variables; relative error vs. time (in seconds) for:
(a1) 1\% non zeros - (b) 10\% non zeros - (c) 20\% non zeros - (d) 30\% non zeros - (e) 40\% non zeros; (a2) relative error vs. iterations for 1\% non zeros. \label{fig}}\vspace{-0.4cm}
\end{figure}
\begin{figure}[h]\label{fig2}
\centering
        \begin{subfigure}[ ]{0.27\textwidth}
\hspace*{-0.2cm}
                \includegraphics[width=\textwidth]{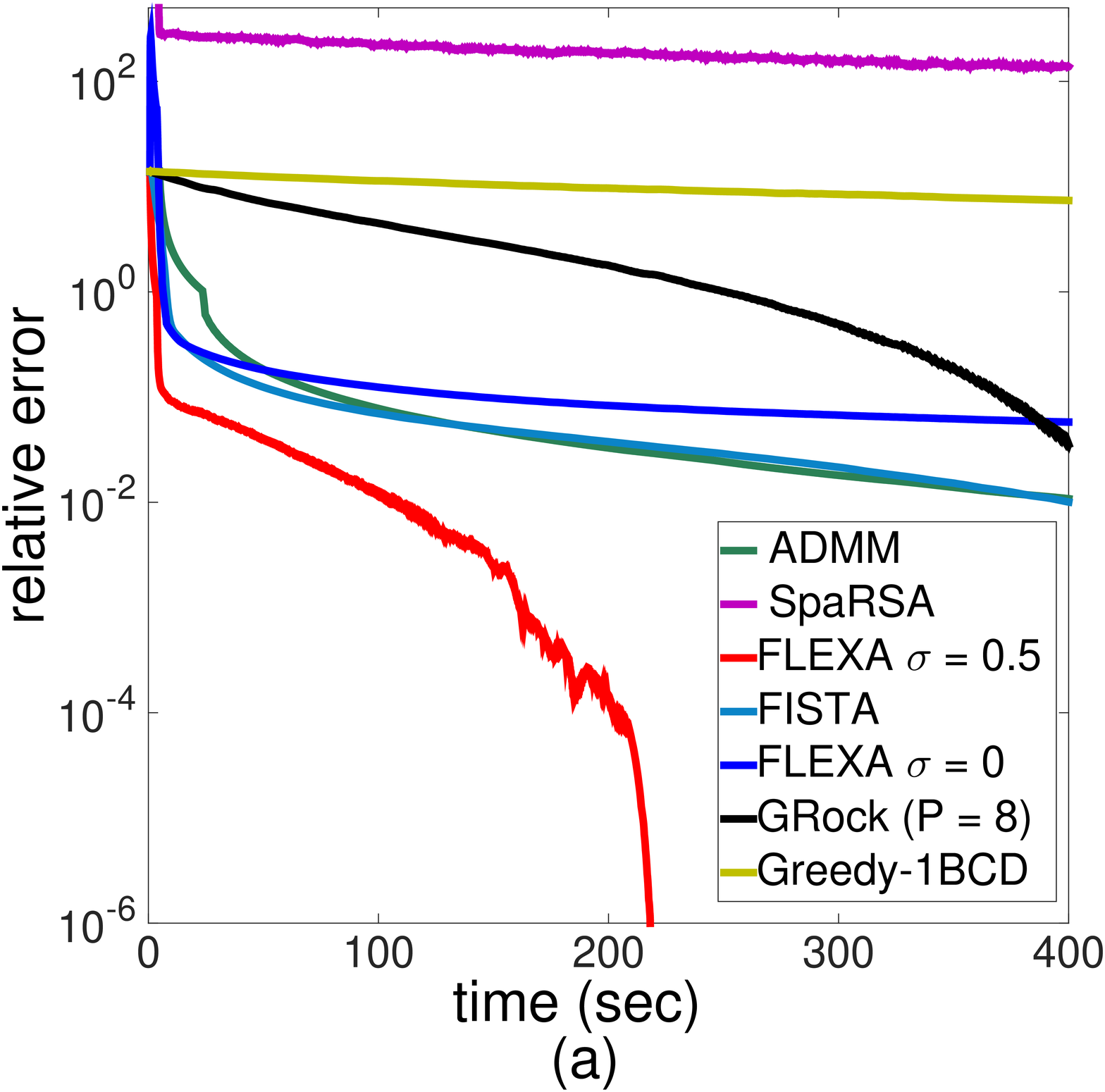}
        \end{subfigure}
        \begin{subfigure}[ ]{0.27\textwidth}
\hspace{-0.4cm}
                \includegraphics[width=\textwidth]{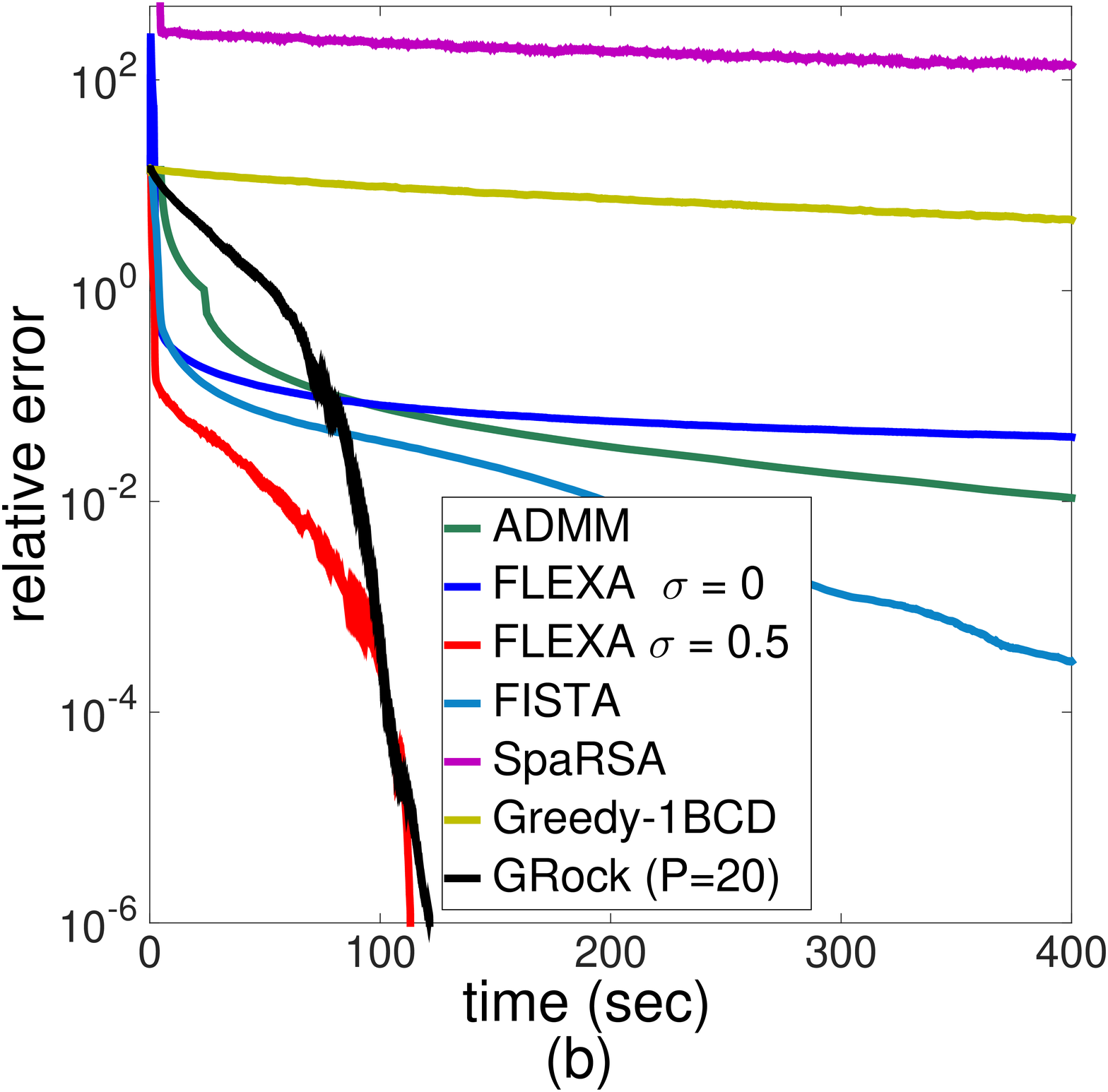}
        \end{subfigure}\vspace{0.0cm}
        \caption{LASSO with $10^5$ variables; Relative error vs. time for: (a) 8 cores - (b) 20 cores.
        \label{fig2}}\vspace{-0.5cm}
        \end{figure}
\indent Given Fig. \ref{fig} and \ref{fig2}, the following comments are in order. On all  the tested problems, { FLEXA}  $\sigma=0.5$ \emph{outperforms in a consistent manner all other implemented algorithms}. 
 In particular, as  the sparsity of the solution decreases, the problems become
harder and  the selective update operated by FLEXA ($\sigma = 0.5$) improves over  FLEXA  ($\sigma = 0$), where instead all variables are updated at each iteration. 
 {FISTA} is capable to approach relatively fast  low accuracy  when the solution is not too sparse, but has difficulties in reaching high  accuracy. {SpaRSA} seems  to be very insensitive to the degree of  sparsity of the solution; it behaves well on 10,000 variables  problems and  not too sparse solutions, but  is much less effective  on very large-scale problems. The version of GRock with $P=40$ is the closest match to {FLEXA}, but only when the problems are very sparse (but it is not supported by a convergence theory on our test problems).
This is consistent with the fact that its convergence properties are at stake when the problems are quite dense.
Furthermore,   if the problem is very large, updating only 40 variables at each iteration, as GRock does, could slow down the convergence, especially when the  optimal solution is not very sparse. From this point of view,   {FLEXA} $\sigma = 0.5$ seems to strike a good balance between not updating  variables that are probably zero at the optimum
and nevertheless update a sizeable amount of variables when needed in order to enhance  convergence.\vspace{-0.2cm}

 \begin{remark}[On parallelism] \emph{ Fig. \ref{fig2} shows that FLEXA seems to exploit well parallelism on LASSO problems.
 Indeed, when passing from 8 to 20 cores, the running time approximately halves. This kind of   behavior has been
 consistently  observed also for  smaller problems and  different number of cores;  because of the space limitation, we do not report these experiments.
  Note that  practical speed-up due to the use of a parallel architecture is given by several factor that are not easily
 predictable and  very dependent on the specific problem at
 hand, including     communication times among the cores, the data format,  etc. In this paper we do not pursue a theoretical study of the speed-up that, given the
 generality of our framework, seems a challenging goal. We finally observe that GRock appears to improve greatly
 with the number of cores. This is due to the fact that in GRock the maximum number of variables that is updated in parallel
 is exactly equal to the number of cores (i.e.,  the degree of parallelism), and this might become a serious drawback on very
 large problems (on top of the fact that convergence is in jeopardy). On the contrary, our theory permits the parallel update of \emph{any} number of variables while guaranteeing convergence.}\vspace{-0.2cm}
 \end{remark}

\begin{remark}[On selective updates] \emph{ It is interesting to comment why FLEXA $\sigma =0.5$ behaves
better than FLEXA   $\sigma =0$. To understand the reason behind this apparently counterintuitive phenomenon,
we first  note that  Algorithm \ref{alg:general} has  the remarkable capability to \emph{identify those
variables that will be zero at a solution}; because of lack of space, we do not provide here the proof of this statement  but
only an informal  description. Roughly speaking,
it can be shown that, for $k$ large enough,  those variables that are zero in $\widehat \x (\x^k,\boldsymbol{\tau})$ will be
zero also in a limiting solution
$\bar \x$. Therefore, suppose that $k$ is large enough so that this identification property  already  takes place (we will say
that  ``we are in the identification phase'') and consider an index $i$ such that
$\bar x_i=0$. Then, if  $x^k_i$ is zero, it is clear, by Steps 3 and 4,  that $x^{k'}_i$ will be zero for all indices $k' > k$,
independently of whether $i$ belongs to $S^k$ or not. In other words, if a variable that is zero at the solution is already zero
when the algorithm  enters the identification phase, {\em that variable will be zero in all subsequent iterations};  this fact, intuitively, should enhance the convergence speed of the algorithm. Conversely, if when we enter the identification phase $x^k_i$ is not zero, the algorithm will have to bring it back to zero iteratively. It should then be clear why updating only variables that we have ``strong'' reason to believe will be non zero at a solution is a better strategy than updating them all. Of course, there may be a problem dependence and the best value of $\sigma$ can vary from problem to problem. But we believe that the explanation outlined above gives  firm theoretical ground to the idea that it might be wise to ``waste" some calculations and perform only a partial update of the variables.}\hfill $\square$\end{remark} \vspace{-0.5cm}

%
%

\subsection{Logistic regression problems}

The logistic regression problem is described in Example \#3  (cf. Section III) and is a highly nonlinear problem involving
many exponentials that, notoriously, give rise to  numerical difficulties. Because of these high nonlinearities,   a Gauss-Seidel approach is expected to be  more effective than a pure Jacobi method, a fact that  was confirmed by our  preliminary tests. For this reason,
for the logistic regression problem we tested  also an instance of Algorithm \ref{alg:PJGA  bis}; we term it  GJ-FLEXA. The setting of the free parameters in GJ-FLEXA is essentially the same described for LASSO, but with the following differences:
\begin{description}
\item[ \rm (a)] The approximant  $P_i$ is chosen as the second order approximation of the original function $F$ (Ex. \#3, Sec. III);
\item[ \rm (b)] The initial $\tau_i$ are set to $\text{tr}(\mathbf{Y}^T\mathbf{Y})/2n$ for all $i$, where $n$ is the total number of variables and $\mathbf{Y} = [ \y_1 \, \y_2 \, \cdots \, \y_m ]\trt. $
\item[ \rm (c)] Since the optimal value $V^*$ is   not known for the logistic regression problem,  we no longer
use $\texttt{re}(\x)$ as merit function but
$\|\mathbf{Z}(\x)\|_\infty$, with
$
\mathbf{Z}(\x) = \nabla F(\x) - \Pi_{[-c,c]^n} \left(\nabla F(\x) - \x\right).
$
Here the projection  $\Pi_{[-c,c]^n} (\mathbf{z})$ can be  efficiently computed; it acts component-wise on $\mathbf{z}$,
since  $[-c,c]^n=[-c,c]\times \cdots \times [-c,c]$.
Note  that  $Z(\x)$ is a valid optimality measure function; indeed,  $\mathbf{Z}(\x) = \mathbf{0}$
is equivalent to the standard necessary optimality condition for Problem \eqref{eq:problem 1}, see \cite{byrd2013inexact}.
Therefore, whenever $\texttt{re}(\x)$ was used for the Lasso problems, we now use $\|\mathbf{Z}(\x)\|_\infty$ [including in the step-
size rule \eqref{gamma_num}].
\end{description}

 We tested the algorithms on   three  instances of the logistic regression problem that are widely used in the literature, and  whose  essential data  features are given in Table \ref{Tab:LogReg};
 we downloaded the data from  the LIBSVM repository {\tt http://www.csie.ntu.edu.tw/}$\sim${\tt cjlin/libsvm/}, which we
 refer to for a detailed description of the test problems. In our implementation, the matrix $\mathbf{Y}$ is stored in a
 column block distributed manner $\mathbf{Y} = \left[ \mathbf{Y}_1 \, \mathbf{Y}_2 \, \cdots \, \mathbf{Y}_P \right]$,
 where $P$ is the number of parallel processors.  We compared FLEXA ($\sigma = 0.5)$ and GJ-FLEXA with the other parallel algorithms (whose tuning of the free parameters  is the same as in  Fig.
 \ref{fig} and Fig. \ref{fig2}), namely: {FISTA}, {SpaRSA}, and {GRock}.
For the logistic regression problem, we also tested one more algorithm, that we call CDM. This Coordinate Descent Method is an extremely efficient   Gauss-Seidel-type method (customized for logistic regression), and is part of the LIBLINEAR package  available at \texttt{http://www.csie.ntu.edu.tw/$\sim$cjlin/}.

In Fig. \ref{fig3}, we plotted the relative error vs. the CPU time (the latter defined as in  Fig. \ref{fig} and Fig. \ref{fig2}) achieved by the  aforementioned algorithms for the three datasets, and using a different number of cores, namely: 8, 16, 20, 40; for each algorithm but GJ-{FLEXA}  we report only the best performance over  the aforementioned numbers of cores.   Note that   in order to plot the relative error, we had to  preliminary estimate  $V^*$ (which  is not known  for logistic regression problems). To  do so,   we  ran GJ-{FLEXA}   until the merit function value $\|Z(\x^k)\|_\infty$ went below $10^{-7}$, and used the corresponding value of the objective function as estimate of $V^*$. We remark that we used this value only to plot the curves.   Next to each plot,   we also reported the overall FLOPS counted up till  reaching the  relative errors as indicated in the table. Note that the FLOPS  of GRock on real-sim and rcv1 are those counted  in 24 hours  simulation time; when terminated,  the algorithm achieved a relative error that was still very far from the reference values set in our experiment. Specifically, GRock reached $1.16$ (instead of $1e-4$) on real-sim and $0.58$ (instead of $1e-3$) on rcv1; the counted FLOPS up till those error values   are still reported in the tables.

\begin{table}[h]
 \centering 
 \scriptsize
 \begin{tabular}{|r| c c c|}
\hline
  Data set & $m$ & $n$ & $c$ \\ \hline
 \textsf{gisette (scaled)} & $6000$ & $5000$ & $0.25$ \\
 \textsf{real-sim} & $72309$ & $20958$ & $4$ \\
 \textsf{rcv1}   & $677399$ & $47236$ & $4$ \\
 \hline
 \end{tabular}\vspace{-0.1cm}
 \caption{Data sets  for logistic regression tests \label{Tab:LogReg} } \vspace{-0.2cm}
\end{table}

\begin{figure}[h]
\hspace{-0.5cm}
 \vspace{-0.3cm}

       \begin{subfigure}[]{0.22\textwidth}
                \includegraphics[width=\textwidth]{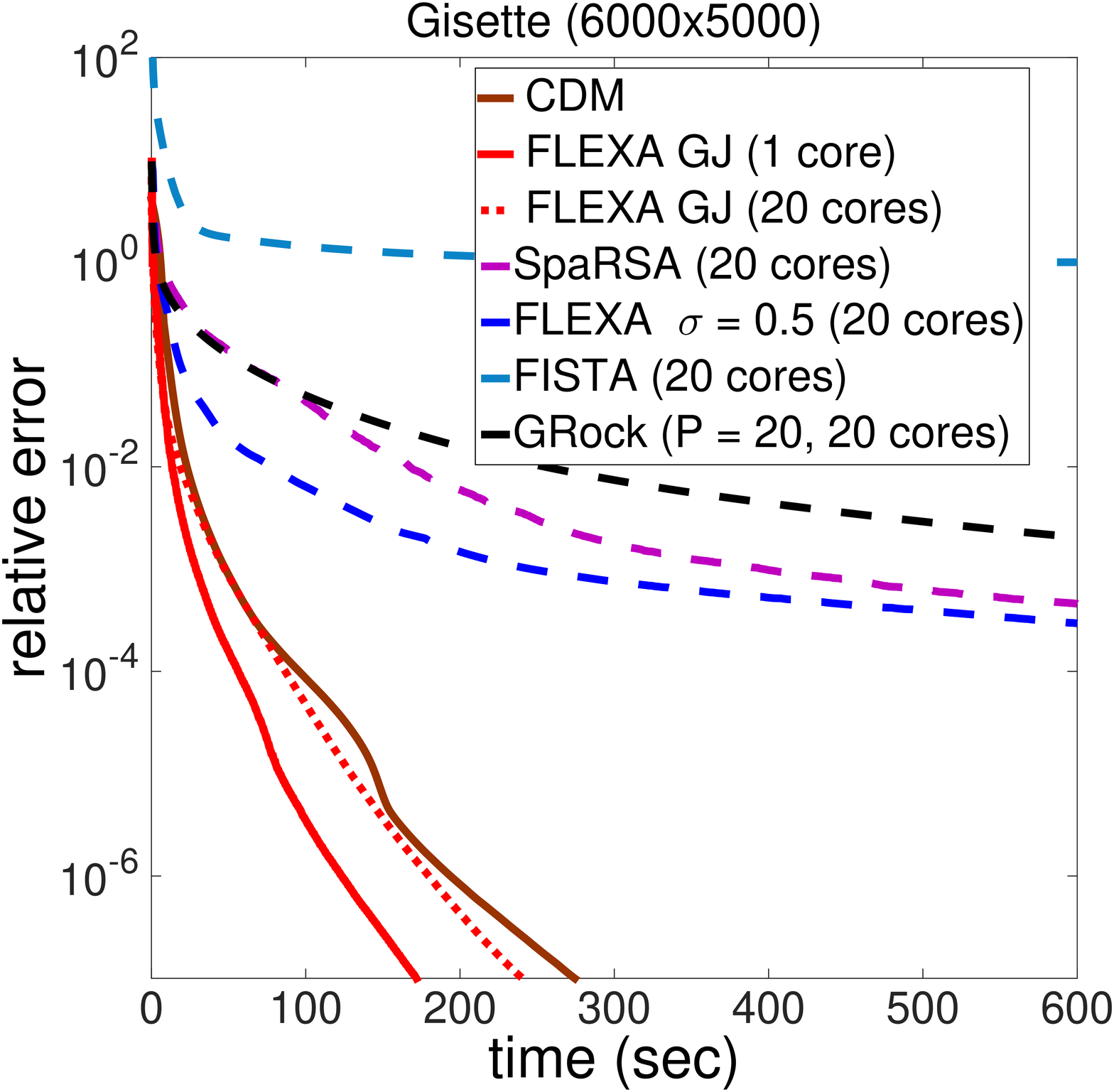}
      \end{subfigure}
      \hspace{-0.2cm}
      \begin{subfigure}[]{0.19\textwidth}
    \scriptsize  \begin{tabular}{|c|c|}
\hline
Algo.  & FLOPS (1e-2/1e-6)\tabularnewline
\hline
GJ-FLEXA (1C) & 1.30e+10/1.23e+11\tabularnewline
\hline
GJ-FLEXA (20C) & 5.18e+11/5.07e+12\tabularnewline
\hline
FLEXA $\!\sigma\!=\!0.5$ (20C) & 1.88e+12/4.06e+13\tabularnewline
\hline
CDM & 2.15e+10/1.68e+11\tabularnewline
\hline
SpaRSA (20C) & 2.20e+12/5.37e+13\tabularnewline
\hline
FISTA (20C) & 3.99e+12/5.66e+13\tabularnewline
\hline
GroCK (20C) & 7.18e+12/1.81e+14\tabularnewline
\hline
\end{tabular}
\end{subfigure}\vspace{0.1cm}\hspace{-0.5cm}

      \begin{subfigure}[]{0.22\textwidth}
                \includegraphics[width=\textwidth]{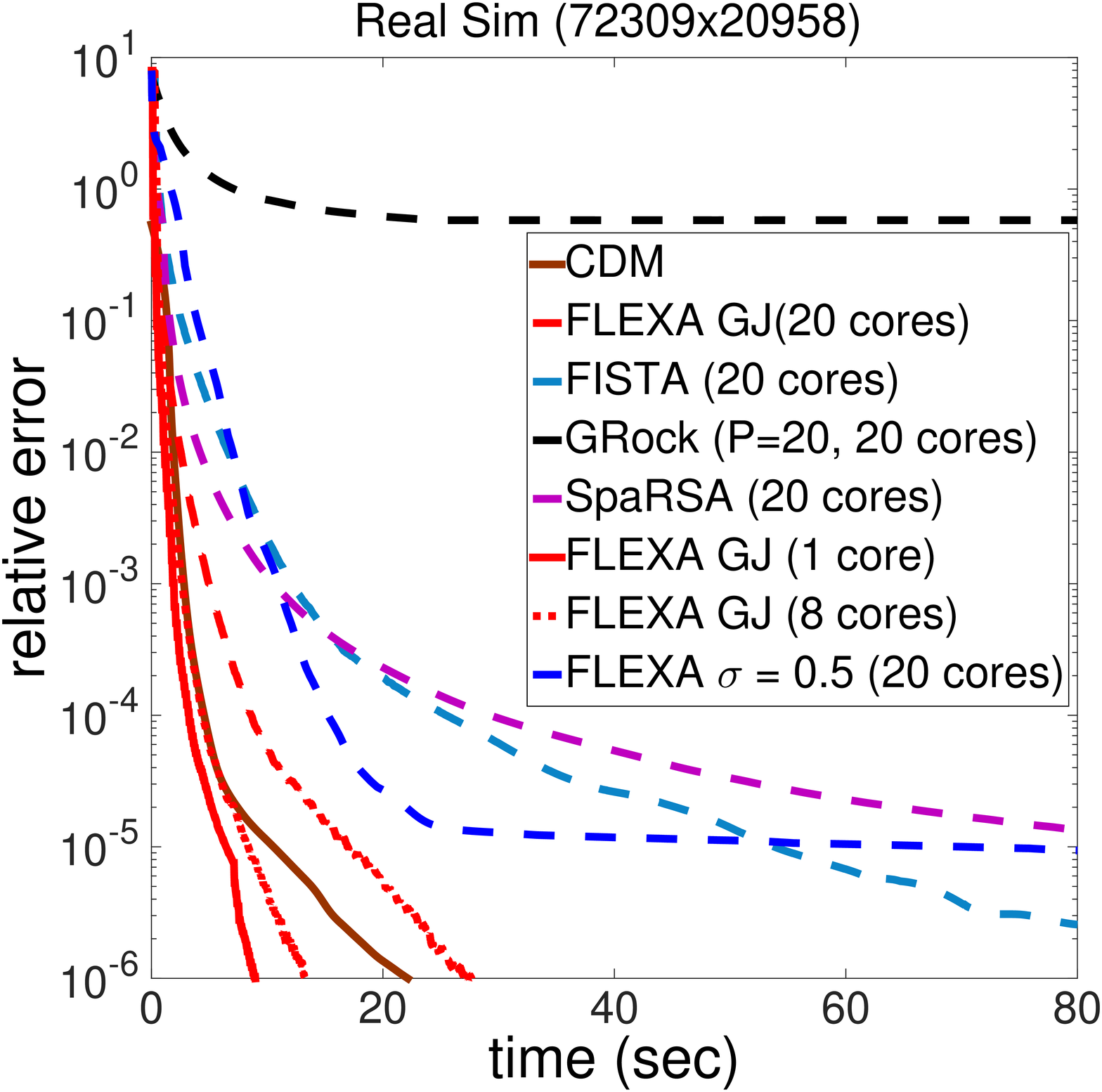}
      \end{subfigure}
            \hspace{-0.4cm}
       \begin{subfigure}[]{0.19\textwidth}
              \scriptsize  \begin{tabular}{|c|c|}
\hline
 Algorithms & FLOPS (1e-4/1e-6)\tabularnewline
\hline
GJ-FLEXA  (1C) & 2.76e+9/6.60e+9\tabularnewline
\hline
GJ-FLEXA (20C) & 9.83e+10/2.85e+11\tabularnewline
\hline
FLEXA $\!\sigma\!=\!0.5$  (20C) & 3.54e+10/4.69e+11\tabularnewline
\hline
CDM & 4.43e+9/2.18e+10\tabularnewline
\hline
SpaRSA (20C) & 7.18e+9/1.94e+11\tabularnewline
\hline
FISTA (20C) & 3.91e+10/1.56e+11\tabularnewline
\hline
GroCK (20C) & 8.30e+14 (after 24h)
 \tabularnewline
\hline
\end{tabular}
      \end{subfigure}\vspace{0.1cm}
      
      \begin{subfigure}[]{0.22\textwidth}
                \includegraphics[width=\textwidth]{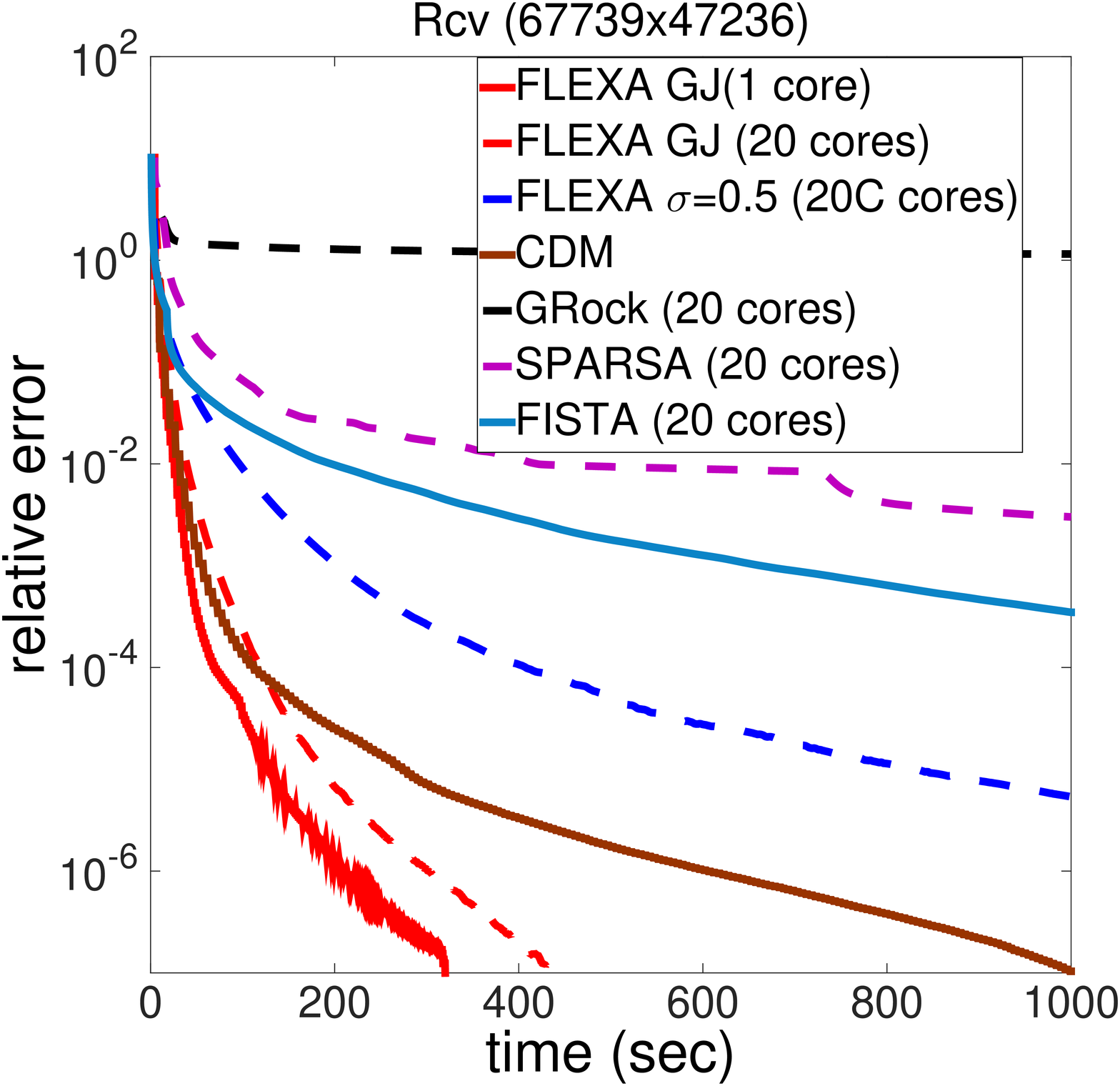}
      \end{subfigure}
            \hspace{-0.4cm}
       \begin{subfigure}[]{0.19\textwidth}
        \scriptsize     \begin{tabular}{|c|c|}
\hline
Algorithms  & FLOPS (1e-3/1e-6)\tabularnewline
\hline
GJ-FLEXA  (1C) & 3.61e+10/2.43e+11\tabularnewline
\hline
GJ-FLEXA  (20C) & 1.35e+12/6.22e+12\tabularnewline
\hline
FLEXA $\!\sigma\!=\!0.5$  (20C) & 8.53e+11/7.19e+12\tabularnewline
\hline
CDM & 5.60e+10/6.00e+11\tabularnewline
\hline
SpaRSA (20C) & 9.38e+12/7.20e+13\tabularnewline
\hline
FISTA (20C) & 2.58e+12/2.76e+13\tabularnewline
\hline
GroCK (20C) & 1.72e+15 (after 24h) \tabularnewline
\hline
\end{tabular}\end{subfigure}
\caption{Logistic Regression problems: Relative error vs. time (in seconds) and FLOPS  for
i) gisette,  ii) real-sim, and iii) rcv.  \label{fig3}}\vspace{-0.6cm}
\end{figure}

The analysis of the figures shows that,  due to the high nonlinearities of the objective function, the more performing methods  are  the 
Gauss-Seidel-type ones.
In spite of this, FLEXA still behaves  quite well. But GJ-FLEXA with one core, thus a non parallel method, clearly outperforms all other 
algorithms. 
The explanation can be the following.  GJ-FLEXA with one core is essentially a Gauss-Seidel-type method but with two key differences: the 
use of a  stepsize and more importantly a (greedy) selection rule by which only  some variables are updated at each round. As the number 
of cores   increases, the algorithm gets ``closer and closer'' to a Jacobi-type method, and because of the high nonlinearities, moving along 
a ``Jacobi direction''  does not bring improvements. In conclusion, for logistic regression problems, our  experiments suggests   that while 
the (opportunistic) selection of variables to update seems useful and brings to improvements even in comparison to the extremely efficient, 
dedicated CDM algorithm/software, parallelism (at least, in the form  embedded in our scheme), does not appear to be beneficial as instead 
observed for LASSO problems.

\subsection{Nonconvex quadratic problems}
We consider now a nonconvex instance of Problem (\ref{eq:problem 1}); because of space limitation, we present   some   preliminary results only on non convex quadratic problems; a more detailed analysis of the  behavior of our method in the nonconvex case is  an important topic that  is the subject of a forthcoming paper. Nevertheless, current results   suggest that FLEXA's  good behavior on convex problems extends to the nonconvex setting.

Consider the following nonconvex instance of Problem (\ref{eq:problem 1}): 
\begin{equation}\label{eq:problem_ncvx}
{\displaystyle \begin{array}{cl}
{\displaystyle {\min_{\mathbf{x}}}} & \underset{F(\mathbf{x})}{\underbrace{\|\A\x-\bv\|^{2}-\bar{c}\|\x\|^{2}}}+\underset{G(\mathbf{x})}{\underbrace{c\|\x\|_{1}}}\\
\mbox{s.t.} & -b\leq x_{i}\leq b,\quad\forall i=1,\ldots,n,
\end{array}}
\end{equation}
where $\bar{c}$ is a positive constant chosen so that $F(\mathbf{x})$ is no longer convex. We simulated two instances of (\ref{eq:problem_ncvx}), namely: 1) $\A\in \mathbb{R}^{9000\times 10000}$ generated using   Nesterov's model (as in LASSO problems in Sec. \ref{Sec_LASSO}), with  
1\% of nonzero in the solution, $b=1$, $c=100$, and $\bar c=1000$; and  2) $\A\in \mathbb{R}^{9000\times 10000}$ as in 1) but with 10\% sparsity, $b=0.1$,  $c=100$, and $\bar c=2800$.  
Note that the Hessian of  $F$ in the two aforementioned cases has the same eigenvalues of the Hessian of $\|\A\x-\bv\|^{2}$ in the original LASSO problem, but translated to the left by
$2\bar c$. In particular,  $F$ in 1) and 2)  has  (many) minimum 
eigenvalues equal to  $-2000$ and $-5600$, respectively; therefore, the objective function $V$ in (\ref{eq:problem_ncvx}) is (markedly) nonconvex. Since $V$ is    now always unbounded from below by construction, we   added in (\ref{eq:problem_ncvx})   box constraints.

\noindent \emph{Tuning of Algorithm 1}: We ran FLEXA using the same tuning as for LASSO problems in Sec.  \ref{Sec_LASSO}, but adding the extra condition  $\tau_i> \bar c$, for all $i$, so that the resulting    one dimensional subproblems \eqref{eq:decoupled_problem_i} are convex and  
can be solved in closed form (as for LASSO).
As termination criterion we used the merit function $\|\bar{\mathbf{Z}}(\x)\|_\infty$, with
$\bar{\mathbf{Z}}(\x)\triangleq [\bar{{Z}}_1(\x),\ldots, \bar{{Z}}_n(\x)]$, and
$$\bar{Z}_i(\x) \triangleq \left\{ \begin{array}{ll}   0 & \text{ if } Z_i(\x)\leq 0 \text{ and } x_i=b,\\ 0 & \text{ if } Z_i(\x) \geq 0 \text{ and } x_i=-b,\\ Z_i(\x) & \text{
otherwise,} \end{array}\right.
$$
where $\mathbf{Z}(\x) \triangleq \nabla F(\x) - \Pi_{[-b,b]^n} \left(\nabla F(\x) - \x\right),$ and $Z_i(\x)$ denotes the $i$-th component of $\mathbf{Z}(\x)$.
We stopped the iterations when $\|\bar{\mathbf{Z}}(\x)\|_\infty \le 1e-3$

We compared  FLEXA  with FISTA and SpaRSA. Note that among  all algorithms considered in the previous sections, only SpaRSA has convergence guarantees for nonconvex problems; but we also added FISTA to the
comparison because it seems to perform well in practice and because of its benchmark status in the convex case.
On our tests, the three algorithms always converge  to the same (stationary) point.
Computed stationary solutions of class 1) of simulated  problems have approximately 1\% of non
zeros and  0.1\% of variables on the bounds, whereas  those of  class 2) have 3\% of non zeros and 0.3\% of variables on the bounds.
Results of our experiments for the 1\% sparsity problem are reported in Fig. \ref{figN1} and those for the 10\% one in Fig. \ref{figN2}; we
plot the relative
error as defined in \eqref{relative_error} and  the merit value versus the CPU time; all  the curves are obtained using 20 cores and averaged over 10 independent random realization. The CPU time
includes communication times and the initial time needed
by the methods to perform all pre-iteration computations.
\begin{figure}[h]
 \vspace{-0.2cm}
\centering
       \begin{subfigure}[]{0.24\textwidth}
                \includegraphics[width=\textwidth]{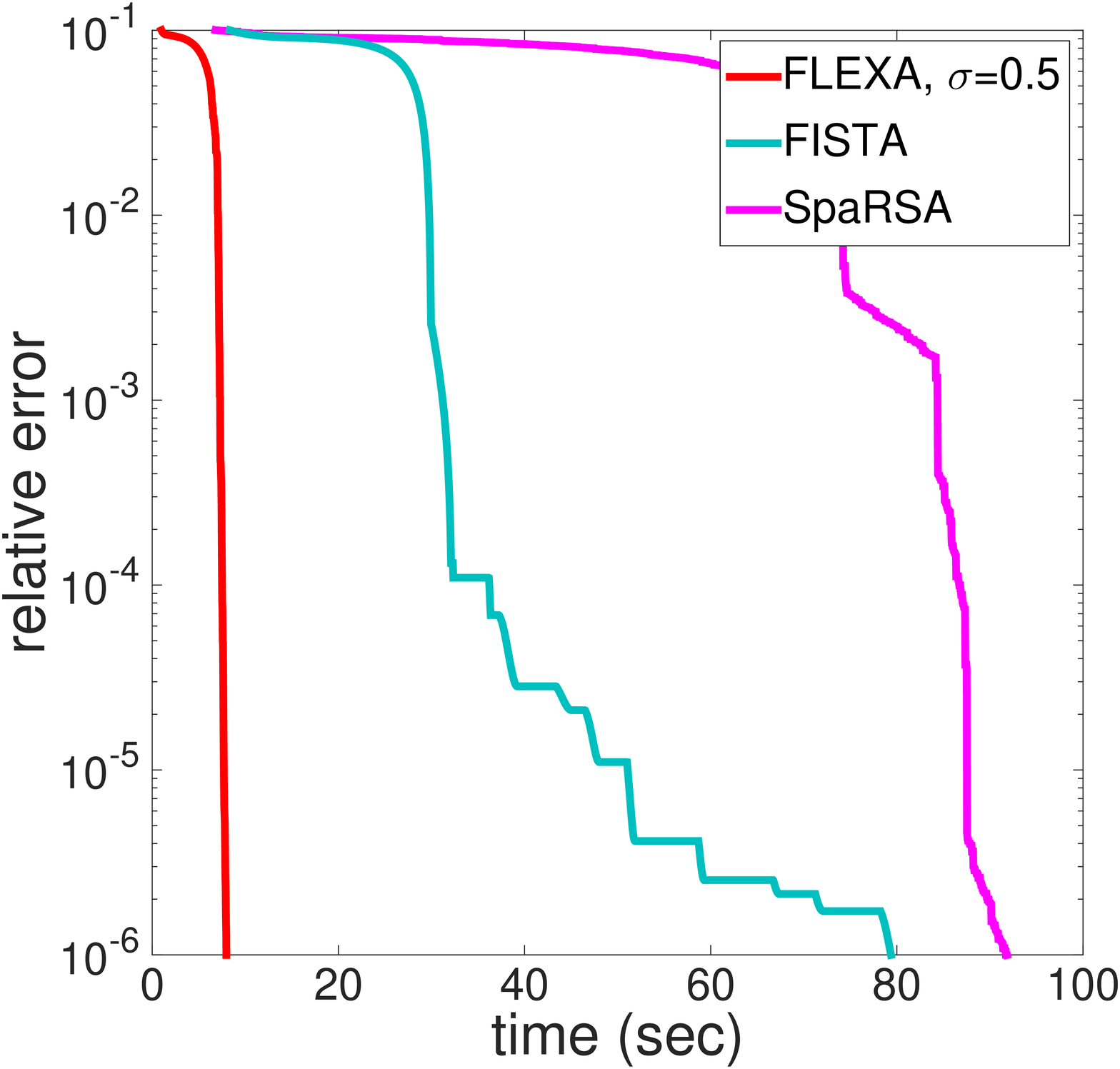}
      \end{subfigure}
      \begin{subfigure}[]{0.24\textwidth}
                \includegraphics[width=\textwidth]{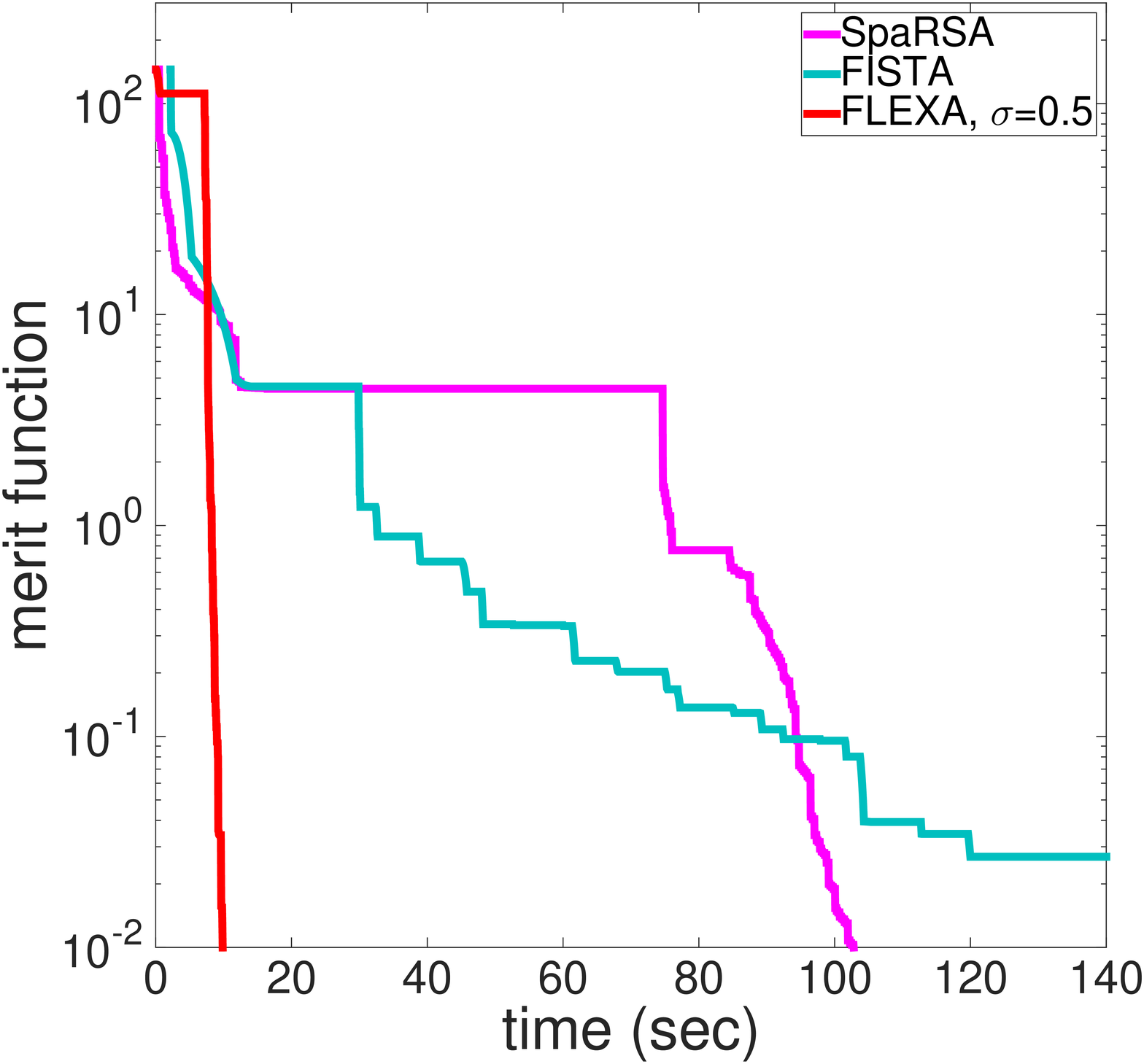}
      \end{subfigure}
\caption{Nonconvex problem (\ref{eq:problem_ncvx}) with 1\% solution sparsity: relative error vs. time (in seconds), and merit value vs. time (in seconds). \label{figN1}}  \vspace{-0.2cm}
\end{figure}
\begin{figure}[h]
\centering
       \begin{subfigure}[]{0.24\textwidth}
                \includegraphics[width=\textwidth]{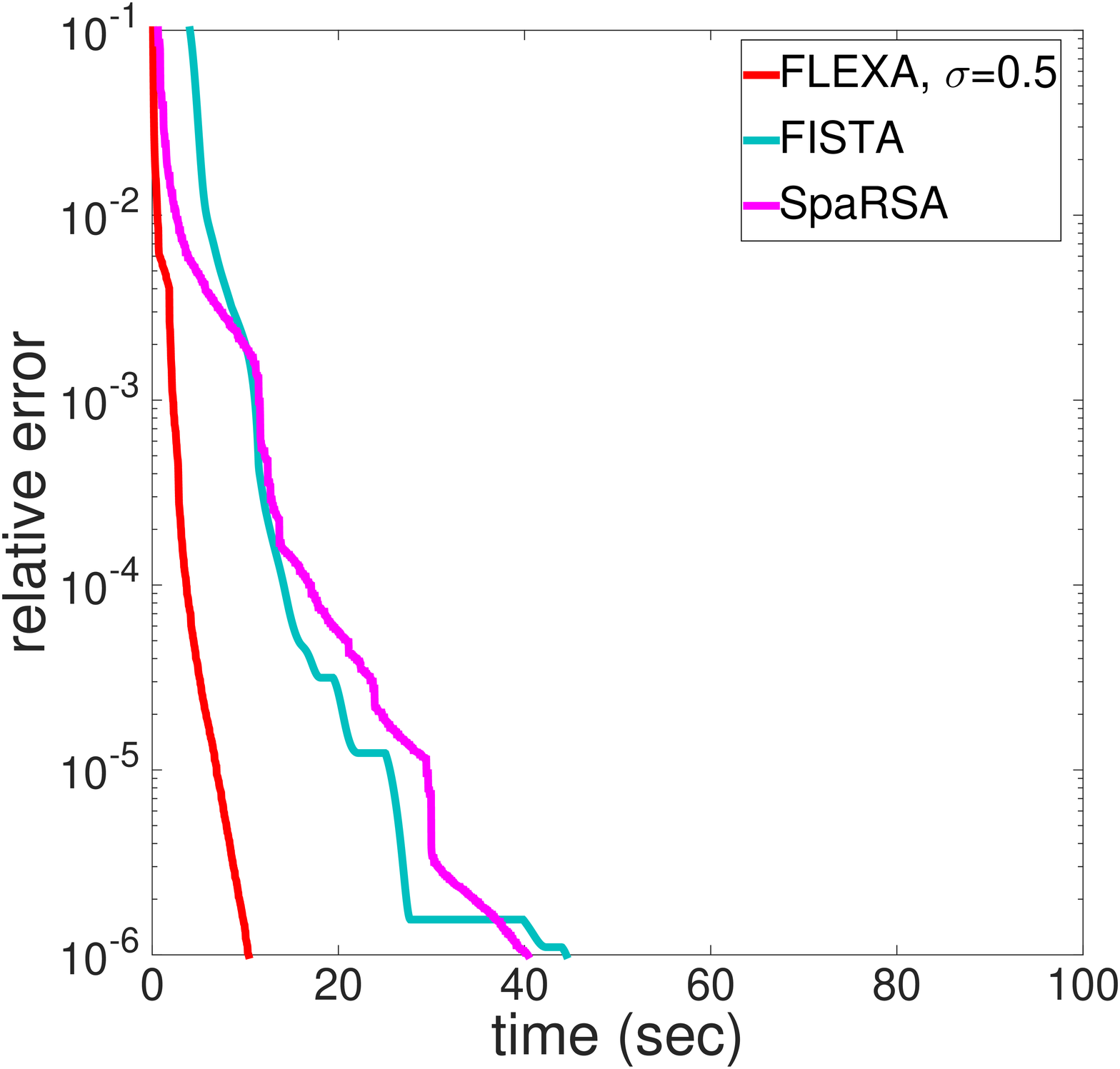}
      \end{subfigure}
      \begin{subfigure}[]{0.24\textwidth}
                \includegraphics[width=\textwidth]{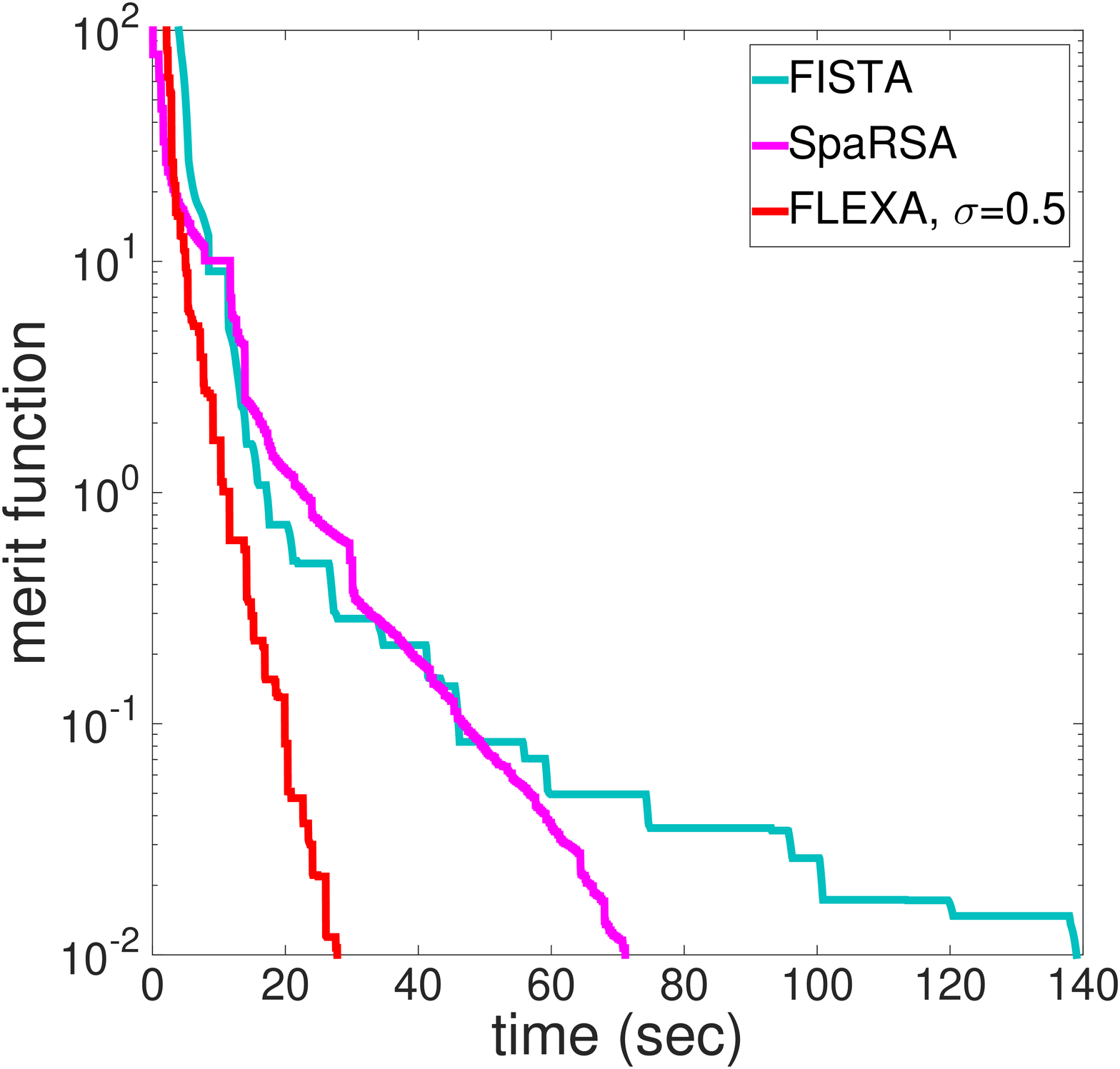}
      \end{subfigure}
\caption{Nonconvex problem (\ref{eq:problem_ncvx}) with 10\% solution sparsity:relative error vs. time (in seconds), and merit value vs. time (in seconds). \label{figN2}}
\end{figure}
These preliminary tests seem to indicate that FLEXA   performs well also on nonconvex problems.

\section{Conclusions}\label{sec:conclusions}
We  proposed a general  algorithmic framework  for the minimization of the sum of a possibly noncovex  differentiable function and a
possibily nonsmooth but  block-separable convex one.
Quite remarkably, our framework leads to different new algorithms whose degree of parallelism can be chosen by the user that also has 
complete control on the number of variables to update at each iterations; all the algorithms converge under the same conditions.
 Many well known  sequential and simultaneous solution methods in the literature are just  special cases of our algorithmic
 framework.
It is remarkable that our scheme encompasses many different practical realizations and that this flexibility can be
used to easily cope with different classes of problems that bring different challenges.
Our preliminary tests are very promising, showing that our schemes can outperform state-of-the-art algorithms.
Experiments on larger and more varied classes of problems (including those listed in  Sec. \ref{sec:Pro}) are the subject of
current research.  Among the topics that should be further studied, one key issue is the right initial choice and and subsequent 
tuning of the $\tau_i$. These parameters to a certaine extent determine the lenght of the shift at each iteration and 
play a crucial role in establishing the quality of the results.

\section{Acknowledgments}
The authors are very  grateful to Loris Cannelli and Paolo Scarponi for their invaluable help  in developing the C++ code of   the simulated 
algorithms.

The work of Facchinei was supported by the MIUR project PLATINO (Grant Agreement n. PON01\_01007). The work of Scutari was 
supported by the USA National Science Foundation under Grants CMS 1218717 and  CAREER Award No. 1254739.

\appendix
\section{Appendix: Proof of Theorem 1 and 2}
We first introduce some preliminary results instrumental to prove both Theorem 1 and Theorem 2.  Hereafter, for   notational simplicity,  we will omit the dependence of $\widehat{\mathbf{x}}(\mathbf y,\boldsymbol{\tau})$ on $\boldsymbol\tau$ and write $\widehat{\mathbf{x}}(\mathbf y)$.  Given $S\subseteq \mathcal N$ and $\x\triangleq (x_i)_{i=1}^N$, we will also denote by $(\mathbf x)_S$ (or interchangeably $\mathbf x_S$) the vector whose component  $i$ is equal to $x_i$ if $i\in S$, and zero otherwise.\vspace{-0.2cm}

\subsection{Intermediate results}

\begin{lemma}\label{Lemma_f_x_y_properties}  Let $H(\mathbf{x}; \mathbf{y}) \triangleq \sum_i h_i (\mathbf{x}_i; \mathbf{y})$. Then, the following hold:

\noindent (i) $H(\mathbf{\bullet};\mathbf{y})$
is uniformly strongly convex on $X$ with constant $c_{\boldsymbol{{\tau}}}>0$,
i.e., \vspace{-0.2cm}
\begin{equation}
\begin{array}{l}
\left(\mathbf{x}-\mathbf{w}\right)^{T}\left(\nabla_{\mathbf{x}}H\left(\mathbf{x};\mathbf{y}\right)-\nabla_{\mathbf{x}}H\left(\mathbf{w};\mathbf{y}\right)\right)\geq c_{{\boldsymbol{\tau}}}\left\Vert \mathbf{x}-\mathbf{w}\right\Vert ^{2}\end{array},\label{eq:strong_cvx_f_tilde}
\end{equation}
for all $\mathbf{x},\mathbf{w}\in X$ and given $\mathbf{y}\in X$;

\noindent  (ii) $\nabla_{\mathbf{x}}H(\mathbf{x};\mathbf{\bullet})$
is uniformly Lipschitz continuous on $X$, i.e., there exists
a  $0<L_{\nabla_H}<\infty$ independent on $\mathbf{x}$ such
that
\begin{equation}
\begin{array}{l}
\left\Vert \nabla_{\mathbf{x}}H\left(\mathbf{x};\mathbf{y}\right)-\nabla_{\mathbf{x}}H\left(\mathbf{x};\mathbf{w}\right)\right\Vert \end{array}\leq\, L_{\nabla H}\,\left\Vert \mathbf{y}-\mathbf{w}\right\Vert ,\label{eq:Lip_grad_L_f}
\end{equation}
for all $\mathbf{y},\mathbf{w}\in X$ and given $\mathbf{x}\in X$. \end{lemma}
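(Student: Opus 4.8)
The plan is to exploit that $H(\bullet; \mathbf{y})$ is \emph{block separable} in its first argument: since $H(\mathbf{x}; \mathbf{y}) = \sum_i h_i(\mathbf{x}_i; \mathbf{y})$, its gradient decomposes as $\nabla_{\mathbf{x}} H(\mathbf{x}; \mathbf{y}) = \big( \nabla h_i(\mathbf{x}_i; \mathbf{y}) \big)_{i=1}^N$, where $\nabla h_i(\bullet; \mathbf{y})$ denotes the gradient of $h_i$ with respect to its first argument. Both claims then reduce to stacking per-block estimates, and from \eqref{q_approx} the block gradient is explicitly $\nabla h_i(\mathbf{x}_i; \mathbf{y}) = \nabla P_i(\mathbf{x}_i; \mathbf{y}) + \tau_i \Q_i(\mathbf{y})(\mathbf{x}_i - \mathbf{y}_i)$.

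For part (i) I would invoke A6 directly: each $h_i(\bullet; \mathbf{y})$ is strongly convex on $X_i$ with the common modulus $q > 0$, which in gradient form reads $(\mathbf{x}_i - \mathbf{w}_i)^T\big(\nabla h_i(\mathbf{x}_i; \mathbf{y}) - \nabla h_i(\mathbf{w}_i; \mathbf{y})\big) \ge q\|\mathbf{x}_i - \mathbf{w}_i\|^2$ for all $\mathbf{x}_i, \mathbf{w}_i \in X_i$. Summing over $i$ and using the block decomposition of $\nabla_{\mathbf{x}} H$ together with $\sum_i \|\mathbf{x}_i - \mathbf{w}_i\|^2 = \|\mathbf{x} - \mathbf{w}\|^2$ yields \eqref{eq:strong_cvx_f_tilde} with $c_{\boldsymbol{\tau}} = q$. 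This step is routine.

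For part (ii) I would bound $\|\nabla h_i(\mathbf{x}_i; \mathbf{y}) - \nabla h_i(\mathbf{x}_i; \mathbf{w})\|$ for fixed $\mathbf{x}_i$ and then sum the block contributions. The first-order term is controlled directly by P3, giving a Lipschitz bound on $\|\nabla P_i(\mathbf{x}_i; \mathbf{y}) - \nabla P_i(\mathbf{x}_i; \mathbf{w})\|$. The proximal term is the delicate one: after the splitting
\[
\Q_i(\mathbf{y})(\mathbf{x}_i - \mathbf{y}_i) - \Q_i(\mathbf{w})(\mathbf{x}_i - \mathbf{w}_i) = \big(\Q_i(\mathbf{y}) - \Q_i(\mathbf{w})\big)\mathbf{x}_i - \big(\Q_i(\mathbf{y})\mathbf{y}_i - \Q_i(\mathbf{w})\mathbf{w}_i\big),
\]
the Lipschitz continuity of $\Q_i(\bullet)$ from A6 handles the matrix increments, but the factor $\|\mathbf{x}_i\|$ multiplying $\Q_i(\mathbf{y}) - \Q_i(\mathbf{w})$ (and likewise the $\mathbf{y}_i, \mathbf{w}_i$ factors) must be bounded to obtain a constant that does \emph{not} depend on $\mathbf{x}$.

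This boundedness is the step I expect to be the main obstacle, and the natural place for coercivity (A5) to enter: since $V$ is coercive and the scheme keeps the iterates in the level set $\{\mathbf{x} : V(\mathbf{x}) \le V(\mathbf{x}^0)\}$, which is compact, the relevant $\mathbf{x}, \mathbf{y}, \mathbf{w}$ can be taken in a fixed compact subset of $X$; there $\|\mathbf{x}_i\|$, $\|\mathbf{y}_i\|$, $\|\mathbf{w}_i\|$ and $\|\Q_i(\bullet)\|$ are uniformly bounded, making each term above $O(\|\mathbf{y} - \mathbf{w}\|)$ with an $\mathbf{x}$-independent constant. Collecting the per-block constants and summing over $i$ then gives a single finite $L_{\nabla H}$ satisfying \eqref{eq:Lip_grad_L_f}. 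I would also note that for the standard choice $\Q_i \equiv \mathbf{I}$ the term $(\Q_i(\mathbf{y}) - \Q_i(\mathbf{w}))\mathbf{x}_i$ vanishes and no appeal to compactness is needed.
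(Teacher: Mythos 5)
First, a point of reference: the paper never actually proves this lemma --- it is stated as a preliminary fact in the appendix and used without argument --- so there is no proof of record to compare yours against; what you propose is essentially the argument the authors leave tacit. Part (i) of your proof is correct and routine: the block decomposition of $\nabla_{\mathbf{x}}H$, assumption A6, and summation over blocks give \eqref{eq:strong_cvx_f_tilde}. (The only mismatch is cosmetic: the paper later takes $c_{\boldsymbol{\tau}}=q\min_i\tau_i$ in Proposition \ref{Prop_x_y}(c), which corresponds to reading the $q$ of A6 as a uniform lower bound on the eigenvalues of $\Q_i(\bullet)$; under the literal reading you adopt the constant is $q$ itself, and nothing downstream depends on which is used.)

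For part (ii), your decomposition is the right one and you have put your finger on exactly the problematic term, but your patch does not hold up as written, and in fact no patch can rescue the lemma in the generality in which it is stated. Two separate problems. (a) The level-set argument is unavailable here: with inexact updates $V(\x^k)$ is not monotone (the error terms $T^k$ in the proof of Theorem \ref{Theorem_convergence_inexact_Jacobi} are nonnegative and only summable), and the paper obtains boundedness of $\{\x^k\}$ only inside that proof, from coercivity plus convergence of $\{V(\x^k)\}$; importing boundedness of the iterates into this preliminary lemma is therefore circular as the appendix is organized (it could be repaired by restructuring, since convergence of $\{V(\x^k)\}$ needs only part (i), through Proposition \ref{Prop_x_y}(c), and not part (ii)). (b) More fundamentally, with $X$ unbounded and $\Q_i$ nonconstant the claim is simply false under A1--A6 and P1--P3: take $N=n=1$, $X=\Re$, $F\equiv 0$, $P_1\equiv 0$, $G(x)=|x|$, $\tau_1=1$, $\Q_1(y)=1+|y|$; all assumptions hold, yet $\nabla_x H(x;y)-\nabla_x H(x;0)=|y|\,x-(1+|y|)\,y$ is unbounded in $x$ for fixed $y\neq 0$, so no $\x$-independent $L_{\nabla H}$ exists in \eqref{eq:Lip_grad_L_f}. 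The honest conclusion, which your closing remark essentially contains, is that the lemma carries an implicit assumption --- $\Q_i$ constant (e.g., the paper's standard choice $\Q_i=\mathbf{I}$), or $X$ bounded, or a restriction of the statement to a fixed compact subset of $X$ --- and that P3 must likewise be read as giving a Lipschitz constant uniform in the first argument $\mathbf{z}$, which your bound on the $\nabla P_i$ term silently assumes. Under any of these readings your argument goes through; without them, none can.
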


\begin{proposition}\label{Prop_x_y} Consider  Problem (\ref{eq:problem 1})
under A1-A6.
Then the mapping  $X\ni\mathbf{y}\mapsto\widehat{\mathbf{x}}(\mathbf{y})$
has the following properties:

\noindent (a)\emph{ $\widehat{\mathbf{x}} (\mathbf{\bullet})$}
is Lipschitz continuous on\emph{ $X$, }i.e., there
exists a positive constant $\hat{{L}}$ such that\emph{
\begin{equation}
\left\Vert \widehat{\mathbf{x}}(\mathbf{y})-\widehat{\mathbf{x}}(\mathbf{z})\right\Vert \leq\,\hat{{L}}\,\left\Vert \mathbf{y}-\mathbf{z}\right\Vert ,\quad\forall\mathbf{y},\mathbf{z}\in X;\vspace{-0.3cm}\label{eq:Lipt_x_map}
\end{equation}
}

\noindent(b) the set of the fixed-points of\emph{ }$\widehat{\mathbf{x}}(\mathbf{\bullet})$\emph{
}coincides with the set of stationary solutions of Problem
(\ref{eq:problem 1}); therefore $\widehat{\mathbf{x}}(\bullet)$ has a fixed-point;

\noindent(c) for every given $\mathbf{y}\in X$   and for any set $S \subseteq {\cal N}$, 
\begin{align}
\left(\widehat{\mathbf{x}}(\mathbf{y})-\mathbf{y}\right)_S^{T}\,\nabla_{\mathbf{x}}F(\mathbf{y})_S +
& \sum_{i\in S} g_i(\widehat{\mathbf{x}}_i (\mathbf{y})) - \sum_{i\in S} g_i(\mathbf{y}_i)\label{eq:descent_direction}
\\ &\leq-c_\tau\,\left\Vert (\widehat{\mathbf{x}}(\mathbf{y})-\mathbf{y})_S\right\Vert ^{2}, \vspace{-0.3cm} \nonumber \vspace{-0.4cm}
\end{align}
with $c_\tau \triangleq q\, \min_i \tau_i$.
\end{proposition}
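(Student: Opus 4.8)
The plan is to base all three claims on the first-order optimality (variational inequality) characterization of the strongly convex minimizer $\widehat{\mathbf{x}}(\mathbf{y})$ of $H(\bullet;\mathbf{y})+G(\bullet)$ over $X$. Since $H(\bullet;\mathbf{y})=\sum_i h_i(\bullet;\mathbf{y})$ and $G$ are block separable, $\widehat{\mathbf{x}}(\mathbf{y})$ is in fact determined blockwise, but it is convenient to record the aggregate inequality
\[
(\mathbf{x}-\widehat{\mathbf{x}}(\mathbf{y}))^{T}\nabla_{\mathbf{x}}H(\widehat{\mathbf{x}}(\mathbf{y});\mathbf{y})+G(\mathbf{x})-G(\widehat{\mathbf{x}}(\mathbf{y}))\geq 0,\quad\forall\mathbf{x}\in X.
\]
For part (a) I would write this inequality at two base points $\mathbf{y}$ and $\mathbf{z}$, test the first with $\mathbf{x}=\widehat{\mathbf{x}}(\mathbf{z})$ and the second with $\mathbf{x}=\widehat{\mathbf{x}}(\mathbf{y})$, and add them so that the nonsmooth $G$-terms cancel. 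After inserting and subtracting $\nabla_{\mathbf{x}}H(\widehat{\mathbf{x}}(\mathbf{y});\mathbf{z})$, the expression splits into one piece controlled from below by the uniform strong convexity of $H(\bullet;\mathbf{z})$ [Lemma \ref{Lemma_f_x_y_properties}(i)], producing $c_{\boldsymbol{\tau}}\|\widehat{\mathbf{x}}(\mathbf{y})-\widehat{\mathbf{x}}(\mathbf{z})\|^2$, and a second piece bounded above, via Cauchy--Schwarz and the Lipschitz continuity of $\nabla_{\mathbf{x}}H(\widehat{\mathbf{x}}(\mathbf{y});\bullet)$ [Lemma \ref{Lemma_f_x_y_properties}(ii)], by $L_{\nabla H}\|\widehat{\mathbf{x}}(\mathbf{y})-\widehat{\mathbf{x}}(\mathbf{z})\|\,\|\mathbf{y}-\mathbf{z}\|$. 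Cancelling one factor of $\|\widehat{\mathbf{x}}(\mathbf{y})-\widehat{\mathbf{x}}(\mathbf{z})\|$ yields \eqref{eq:Lipt_x_map} with $\hat{L}=L_{\nabla H}/c_{\boldsymbol{\tau}}$.

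For part (b) I would specialize the blockwise optimality condition to the equality $\widehat{\mathbf{x}}_i(\mathbf{y})=\mathbf{y}_i$. Since $\nabla_{\mathbf{x}_i}h_i(\mathbf{x}_i;\mathbf{y})=\nabla P_i(\mathbf{x}_i;\mathbf{y})+\tau_i\Q_i(\mathbf{y})(\mathbf{x}_i-\mathbf{y}_i)$, evaluating at $\mathbf{x}_i=\mathbf{y}_i$ makes the proximal term vanish and, by P2, reduces $\nabla P_i(\mathbf{y}_i;\mathbf{y})$ to $\nabla_{\mathbf{x}_i}F(\mathbf{y})$; hence $\widehat{\mathbf{x}}(\mathbf{y})=\mathbf{y}$ is equivalent to $(\mathbf{x}_i-\mathbf{y}_i)^{T}\nabla_{\mathbf{x}_i}F(\mathbf{y})+g_i(\mathbf{x}_i)-g_i(\mathbf{y}_i)\geq 0$ holding for every block, and summation over $i$ recovers exactly the stationarity condition for \eqref{eq:problem 1}; the converse is read off the same identity. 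For existence of a fixed point I would avoid a topological argument: coercivity of $V$ (A5) together with continuity guarantees a global minimizer, which is necessarily stationary (differentiate $V$ along feasible segments and use convexity of $G$), hence a fixed point by the equivalence just proved.

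Part (c) is the quantitative sharpening of the ``only if'' direction in (b). For each $i$ I would test the blockwise optimality of $\widehat{\mathbf{x}}_i\triangleq\widehat{\mathbf{x}}_i(\mathbf{y})$ against $\mathbf{x}_i=\mathbf{y}_i$, giving $(\widehat{\mathbf{x}}_i-\mathbf{y}_i)^{T}\nabla_{\mathbf{x}_i}h_i(\widehat{\mathbf{x}}_i;\mathbf{y})+g_i(\widehat{\mathbf{x}}_i)-g_i(\mathbf{y}_i)\leq 0$. Substituting the gradient of $h_i$, the monotonicity of $\nabla P_i(\bullet;\mathbf{y})$ (from convexity, P1) yields $(\widehat{\mathbf{x}}_i-\mathbf{y}_i)^{T}\nabla P_i(\widehat{\mathbf{x}}_i;\mathbf{y})\geq(\widehat{\mathbf{x}}_i-\mathbf{y}_i)^{T}\nabla P_i(\mathbf{y}_i;\mathbf{y})=(\widehat{\mathbf{x}}_i-\mathbf{y}_i)^{T}\nabla_{\mathbf{x}_i}F(\mathbf{y})$ by P2, while the regularization term obeys $\tau_i(\widehat{\mathbf{x}}_i-\mathbf{y}_i)^{T}\Q_i(\mathbf{y})(\widehat{\mathbf{x}}_i-\mathbf{y}_i)\geq c_{\tau}\|\widehat{\mathbf{x}}_i-\mathbf{y}_i\|^2$ by the positive definiteness of $\Q_i$ (A6), with $c_{\tau}=q\min_i\tau_i$. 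Combining these two estimates gives the per-block inequality $(\widehat{\mathbf{x}}_i-\mathbf{y}_i)^{T}\nabla_{\mathbf{x}_i}F(\mathbf{y})+g_i(\widehat{\mathbf{x}}_i)-g_i(\mathbf{y}_i)\leq-c_{\tau}\|\widehat{\mathbf{x}}_i-\mathbf{y}_i\|^2$, and summing over $i\in S$ (using $\|(\widehat{\mathbf{x}}(\mathbf{y})-\mathbf{y})_S\|^2=\sum_{i\in S}\|\widehat{\mathbf{x}}_i-\mathbf{y}_i\|^2$) produces \eqref{eq:descent_direction}.

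I expect part (a) to be the main obstacle: the delicate step is the two-base-point manipulation, where the add/subtract of $\nabla_{\mathbf{x}}H(\widehat{\mathbf{x}}(\mathbf{y});\mathbf{z})$ must correctly pair the strong-convexity bound (which controls the displacement of $\widehat{\mathbf{x}}$) against the gradient-Lipschitz bound in the second argument (which controls the displacement of the base point), so that the two properties of $H$ from Lemma \ref{Lemma_f_x_y_properties} enter on the correct sides of the inequality. By contrast, parts (b) and (c) are essentially bookkeeping built on the single blockwise optimality inequality, the gradient-consistency P2, and convexity of $P_i$, with the only quantitative care in (c) being the identification of the constant $c_{\tau}$ from the proximal term.
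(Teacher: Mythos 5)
Your proposal is correct and follows essentially the same route as the paper's proof: the blockwise variational-inequality characterization of $\widehat{\mathbf{x}}_i(\mathbf{y})$ tested at $\mathbf{y}_i$ (using P1/P2 and the proximal term, with $c_\tau=q\min_i\tau_i$) for part (c), the two-base-point argument pairing strong monotonicity of $\nabla_{\mathbf{x}}H(\bullet;\mathbf{z})$ against Lipschitz continuity of $\nabla_{\mathbf{x}}H(\widehat{\mathbf{x}}(\mathbf{y});\bullet)$ (yielding $\hat L = L_{\nabla H}/c_{\boldsymbol{\tau}}$) for part (a), and specialization of the VI at a fixed point together with uniqueness of the strongly convex subproblem's solution for part (b). The only (welcome) addition is your explicit existence argument for a fixed point via coercivity of $V$, a point the paper leaves implicit.
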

\begin{proof} We prove the proposition in the following order: (c), (a), (b).

\noindent (c): Given $\mathbf{y}\in X$, by definition,
each $\widehat{\mathbf{x}}_{i}(\mathbf{y})$ is the unique solution
of  problem (\ref{eq:decoupled_problem_i}); then it is not difficult to see that the following holds:
for all $\mathbf{z}_{i}\in X_{i}$,
\begin{equation}\label{eq:VI_i}
\left(  \mathbf{z}_i-\widehat{\mathbf{x}}_i(\mathbf{y})  \right)^T
\nabla_{\mathbf{x}_i}h_i(\widehat{\mathbf{x}}_{i}(\mathbf{y}); \mathbf{y}) + g_i(\mathbf{z}_i) - g_i(\widehat{\mathbf{x}}_{i}(\mathbf{y}))
\geq 0.
\end{equation}
Summing and subtracting $\nabla_{\mathbf{x}_{i}}P_i\left(\mathbf{y}_{i};\,\mathbf{y}\right)$
in (\ref{eq:VI_i}), choosing $\mathbf{z}_{i}=\mathbf{y}_{i}$, and
using  P2,
we get
\begin{equation}
\begin{array}[t]{l}
\left(\mathbf{y}_{i}-\widehat{\mathbf{x}}_{i}(\mathbf{y})\right)^{T}\left(\nabla_{\mathbf{x}_{i}}P_{i}\!\left(\widehat{\mathbf{x}}_{i}(\mathbf{y});
\,\mathbf{y}\right)-\nabla_{\mathbf{x}_{i}}P_i\!\left(\mathbf{y}_{i};\,\mathbf{y}\right)\right)\smallskip\\
\quad+\left(\mathbf{y}_{i}-\widehat{\mathbf{x}}_{i}(\mathbf{y})\right)^{T}\nabla_{\mathbf{x}_{i}}F(\mathbf{y})\smallskip
 + g_i(\mathbf{y}_i) - g_i(\widehat{\mathbf{x}}_{i}(\mathbf{y}))
\\
\quad-\tau_{i}\,(\widehat{\mathbf{x}}_{i}(\mathbf{y})-\mathbf{y}_{i})^{T}\,\mathbf{Q}_{i}(\mathbf{y})\,(\widehat{\mathbf{x}}_{i}(\mathbf{y})-\mathbf{y}_{i})\geq0,
\end{array}\label{eq:VI_i_row2}
\end{equation}
for all $i\in\mathcal{N}$. Observing that the term on the first line of
\eqref{eq:VI_i_row2} is non positive and using P1,
  we obtain
\begin{align*}
&\left(  \y_i-\widehat{\x}_i(\y)  \right)^T  \nabla_{\x_i}F(\y)
+ g_i(\y_i) - g_i(\widehat{\x}_i(\y))
\\ &\,
  \geq c_{\boldsymbol{\tau}}\left\Vert
 \widehat{\x}_i(\y)-\y_i\right
 \Vert ^{2},
\end{align*}
for all $i\in\mathcal{N}$. Summing  over $i\in S$
we get (\ref{eq:descent_direction}).

\noindent(a): We use the notation introduced in Lemma \ref{Lemma_f_x_y_properties}.
Given $\mathbf{y},\mathbf{z}\in X$, by optimality and  \eqref{eq:VI_i}, we have, for all
$\mathbf{v}$ and $\mathbf{w}$ in $ X$
\[ \begin{array}{l}
\left(\mathbf{v}-\widehat{\mathbf{x}}(\mathbf{y})\right)^{T}\nabla_{\mathbf{x}}H\left(\widehat{\mathbf{x}}(\mathbf{y});\mathbf{y}\right)
 +G(\mathbf{v}) - G(\widehat{\mathbf{x}}(\mathbf{y}))
  \geq  0
  \\
\left(\mathbf{w}-\widehat{\mathbf{x}}(\mathbf{z})\right)^{T}\nabla_{\mathbf{x}}H\left(\widehat{\mathbf{x}}(\mathbf{z});\mathbf{z}\right)
 + G(\mathbf{w}) - G(\widehat{\mathbf{x}}(\mathbf{z})) \geq  0.
\end{array}
\]
 Setting $\mathbf{v}=\widehat{\mathbf{x}}(\mathbf{z})$ and $\mathbf{w}=\widehat{\mathbf{x}}(\mathbf{y})$,
summing the two inequalities above, and adding and subtracting $\nabla_{\mathbf{x}}H\left(\widehat{\mathbf{x}}(\mathbf{y});\mathbf{z}\right)$,
we obtain:
\begin{equation}
\begin{array}{l}
\left(\widehat{\mathbf{x}}(\mathbf{z})-\widehat{\mathbf{x}}(\mathbf{y})\right)^{T}\left(\nabla_{\mathbf{x}}H\left(\widehat{\mathbf{x}}(\mathbf{z});
\mathbf{z}\right)-\nabla_{\mathbf{x}}H\left(\widehat{\mathbf{x}}(\mathbf{y});\mathbf{z}\right)\right)\\
\leq\left(\widehat{\mathbf{x}}(\mathbf{y})-\widehat{\mathbf{x}}(\mathbf{z})\right)^{T}\left(\nabla_{\mathbf{x}}H\left(\widehat{\mathbf{x}}(\mathbf{y});
\mathbf{z}\right)-\nabla_{\mathbf{x}}H\left(\widehat{\mathbf{x}}(\mathbf{y});\mathbf{y}\right)\right).
\end{array}\label{eq:minimum_principle_Lip}
\end{equation}
 Using (\ref{eq:strong_cvx_f_tilde}) we can  lower bound the left-hand-side
of (\ref{eq:minimum_principle_Lip}) as
\begin{equation}
\begin{array}{l}
\left(\widehat{\mathbf{x}}(\mathbf{z})-\widehat{\mathbf{x}}(\mathbf{y})\right)^{T}\left(\nabla_{\mathbf{x}}H\left(\widehat{\mathbf{x}}(\mathbf{z});\mathbf{z}\right)-
\nabla_{\mathbf{x}}H\left(\widehat{\mathbf{x}}(\mathbf{y});\mathbf{z}\right)\right)\\
\,\,\geq c_{{\boldsymbol{\tau}}}\left\Vert \widehat{\mathbf{x}}(\mathbf{z})-\widehat{\mathbf{x}}(\mathbf{y})\right\Vert ^{2},
\end{array}\label{eq:lipschtz_map_2}
\end{equation}
whereas the right-hand-side of (\ref{eq:minimum_principle_Lip}) can
be upper bounded as
\begin{equation}
\begin{array}{l}
\left(\widehat{\mathbf{x}}(\mathbf{y})-\widehat{\mathbf{x}}(\mathbf{z})\right)^{T}\left(\nabla_{\mathbf{x}}H\left(\widehat{\mathbf{x}}(\mathbf{y});\mathbf{z}\right)
-\nabla_{\mathbf{x}}H\left(\widehat{\mathbf{x}}(\mathbf{y});\mathbf{y}\right)\right)\\
\,\,\leq\, L_{\nabla H}\,\left\Vert \widehat{\mathbf{x}}(\mathbf{y})-\widehat{\mathbf{x}}(\mathbf{z})\right\Vert \,\left\Vert \mathbf{y}-\mathbf{z}\right\Vert ,
\end{array}\label{eq:lipschtz_map_1}
\end{equation}
where the inequality follows from the Cauchy-Schwartz inequality and
(\ref{eq:Lip_grad_L_f}). Combining (\ref{eq:minimum_principle_Lip}),
(\ref{eq:lipschtz_map_2}), and (\ref{eq:lipschtz_map_1}), we obtain
the desired Lipschitz property of $\widehat{\mathbf{x}}(\bullet)$.

\noindent (b): Let $\mathbf{x}^{\star}\in X$ be a fixed
point of $\widehat{\mathbf{x}}(\mathbf{y})$, that is $\mathbf{x}^{\star}=\widehat{\mathbf{x}}(\mathbf{x}^{\star})$.
Each $\widehat{\mathbf{x}}_{i}(\mathbf{y})$ satisfies (\ref{eq:VI_i}) for any given $\mathbf{y}\in X$.
For some $\boldsymbol\xi_i \in \partial g_i(\mathbf{x}^*)$, setting
$\mathbf{y}=\mathbf{x}^{\star}$ and using $\mathbf{x}^{\star}=\widehat{\mathbf{x}}(\mathbf{x}^{\star})$ and the convexity of $g_i$,
(\ref{eq:VI_i}) reduces to
\begin{equation}
\left(\mathbf{z}_{i}-\mathbf{x}_{i}^{\star}\right)^{T}(\nabla_{\mathbf{x}_{i}}F(\mathbf{x}^{\star}) + \boldsymbol\xi_i) \geq 0,\label{eq:fixed_point_min_principle}
\end{equation}
for all $\mathbf{z}_{i}\in X_{i}$ and $i\in\mathcal{N}$.
Taking into account the Cartesian structure of $X$, the separability of $G$,  and
summing (\ref{eq:fixed_point_min_principle}) over $i\in\mathcal{N}$,
we obtain $\begin{array}[t]{l}
\left(\mathbf{z}-\mathbf{x}^{\star}\right)^{T}(\nabla_{\mathbf{x}}F(\mathbf{x}^{\star}) + \boldsymbol\xi) \geq0,\end{array}$ for all $\mathbf{z}\in X,$ with $\mathbf{z}\triangleq(\mathbf{z}_{i})_{i=1}^{N}$ and $\boldsymbol\xi \triangleq(\boldsymbol\xi_i)_{i=1}^{N} \in \partial G(\mathbf{x}^*)$;
therefore $\mathbf{x}^{\star}$ is a stationary solution of (\ref{eq:problem 1}).

The converse holds because i) $\widehat{\mathbf{x}}(\mathbf{x}^{\star})$
is the unique optimal solution of (\ref{eq:decoupled_problem_i})
with $\mathbf{y}=\mathbf{x}^{\star}$, and ii) $\mathbf{x}^{\star}$
is also an optimal solution of (\ref{eq:decoupled_problem_i}), since
it satisfies the minimum principle.\end{proof}

\begin{lemma} \emph{\cite[Lemma 3.4, p.121]{Bertsekas-Tsitsiklis_bookNeuro11}}\label{lemma_Robbinson_Siegmunt}
Let $\{X^{k}\}$, $\{Y^{k}\}$, and $\{Z^{k}\}$ be three sequences
of numbers such that $Y^{k}\geq0$ for all $k$. Suppose that
\[
X^{k+1}\leq X^{k}-Y^{k}+Z^{k},\quad\forall k=0,1,\ldots
\]
and $\sum_{k=0}^\infty Z^{k}<\infty$. Then either $X^{k}\rightarrow-\infty$
or else $\{X^{k}\}$ converges to a finite value and $\sum_{k=0}^\infty Y^{k}<\infty$.
\end{lemma}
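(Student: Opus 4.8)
The plan is to reduce the one-sided recursion to a genuinely monotone sequence by absorbing the summable perturbation $\{Z^k\}$ into the iterate. Since $\sum_{k=0}^\infty Z^k$ is finite, the tail sums $S^k \triangleq \sum_{j=k}^\infty Z^j$ are well defined, satisfy the identity $S^k = Z^k + S^{k+1}$, and obey $S^k \to 0$ as $k\to\infty$. I would then introduce the shifted sequence $W^k \triangleq X^k + S^k$ and show it is nonincreasing: substituting the hypothesis $X^{k+1}\leq X^k - Y^k + Z^k$ gives
\[ W^{k+1} = X^{k+1} + S^{k+1} \leq X^k - Y^k + Z^k + S^{k+1} = X^k - Y^k + S^k = W^k - Y^k, \]
where the next-to-last equality uses $Z^k + S^{k+1} = S^k$. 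Because $Y^k \geq 0$, this yields $W^{k+1} \leq W^k$ for every $k$, so $\{W^k\}$ is a monotonically nonincreasing sequence of real numbers.

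A nonincreasing real sequence exhibits exactly one of two behaviors, and I would split the argument accordingly. If $W^k \to -\infty$, then since $S^k \to 0$ we recover $X^k = W^k - S^k \to -\infty$, which is the first alternative in the statement. Otherwise $\{W^k\}$ is bounded below and therefore converges to a finite limit $W^\star$ by the monotone convergence theorem; in that case $X^k = W^k - S^k \to W^\star - 0 = W^\star$, so $\{X^k\}$ converges to a finite value, as claimed in the second alternative.

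It remains to establish summability of $\{Y^k\}$ in this second case. I would telescope the monotonicity inequality $W^{j+1} \leq W^j - Y^j$ over $j=0,\ldots,k$, obtaining
\[ \sum_{j=0}^{k} Y^j \leq W^0 - W^{k+1}. \]
Letting $k\to\infty$ and using $W^{k+1}\to W^\star$ gives $\sum_{j=0}^\infty Y^j \leq W^0 - W^\star < \infty$, which is the desired conclusion.

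The argument is essentially a single telescoping estimate built on a well-chosen auxiliary sequence, so I do not anticipate a serious technical obstacle. The only point requiring care is the interpretation of the hypothesis $\sum_k Z^k < \infty$ as convergence of the partial sums to a finite limit (rather than mere boundedness above); this is precisely what guarantees both that the tails $S^k$ are finite and that $S^k\to 0$, the two facts on which the whole reduction rests. I would also emphasize that the proof nowhere invokes a sign condition on $\{Z^k\}$ or on $\{X^k\}$: only $Y^k\geq 0$ is used, which is exactly the hypothesis provided.
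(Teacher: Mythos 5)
Your proof is correct. Note that the paper itself gives no proof of this lemma: it is quoted verbatim with a citation to Bertsekas--Tsitsiklis, so there is no in-paper argument to compare against. Your construction --- absorbing the summable perturbation into the iterate via the tail sums $S^k=\sum_{j=k}^{\infty}Z^j$, observing $W^{k+1}=X^{k+1}+S^{k+1}\leq W^k-Y^k$, and then exploiting the dichotomy for monotone sequences together with a telescoping bound $\sum_{j=0}^{k}Y^j\leq W^0-W^{k+1}$ --- is the standard route (essentially the deterministic supermartingale argument used in the cited reference), and every step checks out. You were also right to flag the two points that need care: $\sum_k Z^k<\infty$ must be read as convergence of the partial sums (which is what yields finiteness of the tails and $S^k\to 0$), and the sign of $Z^k$ is never used, only $Y^k\geq 0$; both observations are accurate, and the second matters here since in the paper's application the role of $Z^k$ is played by the error terms $T^k$, for which only summability is established in the proof of Theorem 1.
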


\begin{lemma}\label{lemma on errors} Let $\{\x^k\}$ be the sequence generated by Algorithm 1. Then, there is a positive constant $\tilde c$ such that the following holds: for all $k\geq 1$,
\begin{equation}\label{error_descent}
\begin{array}{ll}
\left( \nabla_{\x}F (\x^{k})\right)^{T}_{\tiny {S^k}} \left(\widehat{\x}(\x^k) -\x^k \right)_{\tiny {S^k}}
+\displaystyle{\sum_{i \in S^k}} g_i(\widehat{\x}_i(\x^k)) \medskip\\\quad\quad\quad\quad\quad -\displaystyle{\sum_{i \in S^k}} g_i(\x_i^k)
 \leq
-\tilde c \,\| \widehat{\x}(\x^k)-\x^k \|^2.
\end{array}
\end{equation}
\end{lemma}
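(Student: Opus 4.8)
The plan is to read this lemma as a ``lifting'' of Proposition~\ref{Prop_x_y}(c) from the selected block set $S^k$ to the full displacement, and the only ingredient beyond Proposition~\ref{Prop_x_y}(c) is the greedy selection rule (S.2) of Algorithm~\ref{alg:general} together with the error-bound sandwich \eqref{eq:error bound}. First I would instantiate \eqref{eq:descent_direction} with $\mathbf y=\x^k$ and $S=S^k$, which immediately gives
\[
\left(\nabla_{\x}F(\x^{k})\right)^{T}_{S^k}\left(\widehat{\x}(\x^k)-\x^k\right)_{S^k}+\sum_{i\in S^k}g_i(\widehat{\x}_i(\x^k))-\sum_{i\in S^k}g_i(\x_i^k)\leq -c_\tau\,\|(\widehat{\x}(\x^k)-\x^k)_{S^k}\|^2 ,
\]
with $c_\tau>0$ the constant of Proposition~\ref{Prop_x_y}(c). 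Thus everything reduces to showing that the \emph{restricted} squared displacement $\|(\widehat{\x}(\x^k)-\x^k)_{S^k}\|^2$ controls the \emph{full} one $\|\widehat{\x}(\x^k)-\x^k\|^2$ up to a fixed positive factor; the lemma then follows with $\tilde c$ equal to $c_\tau$ times that factor.

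For this second step I would set $d_i^k\triangleq\|\widehat{\x}_i(\x^k)-\x_i^k\|$, so that $\|\widehat{\x}(\x^k)-\x^k\|^2=\sum_{i=1}^N (d_i^k)^2\leq N\max_j (d_j^k)^2$ while $\|(\widehat{\x}(\x^k)-\x^k)_{S^k}\|^2=\sum_{i\in S^k}(d_i^k)^2$. By (S.2), $S^k$ contains at least one index $i^\ast$ with $E_{i^\ast}(\x^k)\geq\rho M^k=\rho\max_j E_j(\x^k)$. Applying the upper estimate in \eqref{eq:error bound} at $i^\ast$ gives $d_{i^\ast}^k\geq E_{i^\ast}(\x^k)/\bar s_{i^\ast}\geq \rho M^k/\bar s$, while applying the lower estimate at the maximizing block $j^\ast\in\argmin$... more precisely at the block attaining $\max_j d_j^k$, yields $M^k=\max_j E_j(\x^k)\geq \underbar{s}\,\max_j d_j^k$; here $\underbar{s}\triangleq\min_i\underbar{s}_i>0$ and $\bar s\triangleq\max_i\bar s_i<\infty$ are well defined because there are finitely many blocks. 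Chaining these gives $d_{i^\ast}^k\geq(\rho\underbar{s}/\bar s)\max_j d_j^k$, and since $i^\ast\in S^k$,
\[
\|(\widehat{\x}(\x^k)-\x^k)_{S^k}\|^2\geq (d_{i^\ast}^k)^2\geq\Big(\frac{\rho\underbar{s}}{\bar s}\Big)^2\max_j (d_j^k)^2\geq\frac{(\rho\underbar{s}/\bar s)^2}{N}\,\|\widehat{\x}(\x^k)-\x^k\|^2 .
\]
Substituting into the displayed consequence of Proposition~\ref{Prop_x_y}(c) proves \eqref{error_descent} with $\tilde c\triangleq c_\tau(\rho\underbar{s}/\bar s)^2/N>0$.

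I expect the only genuine content to lie in this second step, and it is where I would concentrate the care: the inequality from Proposition~\ref{Prop_x_y}(c) only ``sees'' the updated blocks, whereas the right-hand side of \eqref{error_descent} measures the displacement over \emph{all} blocks, so some argument is needed to certify that the updated blocks capture a fixed fraction of the total motion. The greedy rule (S.2) is precisely what supplies this: it forces $S^k$ to retain a block whose optimality residual is within a factor $\rho$ of the largest, and combined with the two-sided bound \eqref{eq:error bound} and the finiteness of $N$ this produces the uniform factor $(\rho\underbar{s}/\bar s)^2/N$. Everything else is a direct quotation of Proposition~\ref{Prop_x_y}(c); the main (routine) source of attention is simply to verify that $\underbar{s}$ and $\bar s$ are strictly positive and finite, which guarantees $\tilde c>0$ uniformly in $k$.
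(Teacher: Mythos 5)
Your proposal is correct and follows essentially the same argument as the paper: invoke Proposition \ref{Prop_x_y}(c) with $S=S^k$, then use the index guaranteed by (S.2) together with both sides of \eqref{eq:error bound} and the finiteness of $N$ to bound the full displacement by the restricted one. The only (immaterial) difference is that you relate $\max_j d_j^k$ to the full norm via $\sum_j (d_j^k)^2 \le N\max_j(d_j^k)^2$, yielding $\tilde c = c_\tau(\rho\,\underbar{s}/\bar s)^2/N$, whereas the paper uses the cruder bound $\|\widehat{\x}(\x^k)-\x^k\|\le N\max_j d_j^k$ and ends up with $\tilde c = c_\tau\bigl(\rho\,\underbar{s}/(N\bar s)\bigr)^2$; your constant is slightly sharper but the proofs are the same.
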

\begin{proof} Let $j_k$ be an index in $S^k$ such that $E_{j_k}(\x^k) \geq \rho \max_i E_i(\x^k)$ (Step 2 of Algorithm 1). Then, using the aforementioned bound and (\ref{eq:error bound}),
it is easy to check that the following chain of inequalities holds:
\begin{align*}
\bar s_{j_k} \| \widehat{\x}_{S^k}(\x^k)  - \x^k_{S^k}\| & \geq \bar s_{j_k} \| \widehat{\x}_{j_k}(\x^k)  - \x^k_{j_k}\| \\ & \geq E_{j_k}(\x^k) \\ &
\geq \rho \max_i E_i(\x^k) \\ & \geq \left( \rho \min_i \underbar s_i\right) \left( \max_i \{ \|\widehat{\x}_i(\x^k)  - \x^k_i\| \} \right) \\ &
\geq \left( \frac{\rho \min_i \underbar s_i}{N}\right) \|\widehat{\x}(\x^k)  - \x^k\|.
\end{align*}
Hence we have for any $k$,
\begin{equation}\label{lower_bound_error_S_k}
\| \widehat{\x}_{S^k}(\x^k)  - \x^k_{S^k}\| \geq \left(\frac{\rho \min_i \underbar s_i}{N \bar s_{j_k}}\right) \|\widehat{\x}(\x^k)  - \x^k\|.
\end{equation}
Invoking now Proposition \ref{Prop_x_y}(c) with $S=S^k$ and $\y=\x^k$, and using \eqref{lower_bound_error_S_k}, (\ref{error_descent}) holds true, with $\tilde c \triangleq c_\tau \left(\frac{\rho \min_i \underbar s_i}{N \max_j \bar s_j}\right)^2$. \end{proof}

\subsection{Proof of Theorem \ref{Theorem_convergence_inexact_Jacobi}}\label{proof_Th1} \vspace{-0.2cm}We are now ready to prove the theorem.
For any given $k\geq 0$, the Descent Lemma \cite{Bertsekas_Book-Parallel-Comp}
yields
{\color{black}
\begin{equation}
\begin{array}{lll}
F\left(\mathbf{x}^{k+1}\right) & \leq & F\left(\mathbf{x}^{k}\right)+\gamma^{k}\,\nabla_{\mathbf{x}}F\left(\mathbf{x}^{k}\right)^{T}\left(\widehat{\mathbf{z}}^{k}-\mathbf{x}^{k}\right)\smallskip\\
 &  & +\dfrac{\left(\gamma^{k}\right)^{2}{L_{\nabla F}}}{2}\,\left\Vert \widehat{\mathbf{z}}^{k}-\mathbf{x}^{k}\right\Vert ^{2},
\end{array}\label{eq:descent_Lemma} 
\end{equation}
with $\widehat{\mathbf{z}}^{k}\triangleq(\widehat{\mathbf{z}}_{i}^{k})_{i=1}^{N}$ and
$\mathbf{z}^{k}\triangleq(\mathbf{z}_{i}^{k})_{i=1}^{N}$ defined in Step 2 and 3  (Algorithm \ref{alg:general}).
Observe that
\begin{equation}\label{eq:49bis} \begin{array}{rcl}
 \left\Vert \widehat{\mathbf{z}}^{k}-\mathbf{x}^{k}\right\Vert ^{2}&\leq&
 \left\Vert \mathbf{z}^{k}-\mathbf{x}^{k}\right\Vert ^{2}\\
 & \leq &
 2\left\Vert \widehat{\mathbf{x}}(\mathbf{x}^{k})-\mathbf{x}^{k}\right\Vert ^{2}
\\ &&
 +2\sum_{i\in \cal N}\left\Vert \mathbf{z}_{i}^{k}-\widehat{\mathbf{x}}_{i}(\mathbf{x}^{k})\right\Vert ^{2}\\
& \leq& 2\left\Vert \widehat{\mathbf{x}}(\mathbf{x}^{k})-\mathbf{x}^{k}\right\Vert ^{2}+2\sum_{i \in \cal N}(\varepsilon_{i}^{k})^{2},
 \end{array}
 \end{equation}
 where the first inequality follows from the definition of $ \mathbf{z}^{k}$ and $ \widehat{\mathbf{z}}^{k}$, and
 in the last inequality we used $\left\Vert \mathbf{z}_{i}^{k}-\widehat{\mathbf{x}}_{i}(\mathbf{x}^{k})\right\Vert \leq\varepsilon_{i}^{k}$.

Denoting by $\overline{S}^k$ the complement of $S$, we also have, for all $k$, 
\begin{equation}\label{nabla_times_z_hat}
\begin{array}{l}
\nabla_{\mathbf{x}}F\left(\mathbf{x}^{k}\right)^{T}\left(\widehat{\mathbf{z}}^{k}-\mathbf{x}^{k}\right)\smallskip\\
\qquad\qquad   =
\nabla_{\mathbf{x}}F\left(\mathbf{x}^{k}\right)^{T}\left(\widehat{\mathbf{z}}^{k}-\widehat{\mathbf{x}}(\mathbf{x}^{k})
+\widehat{\mathbf{x}}(\mathbf{x}^{k})-\mathbf{x}^{k}\right)\smallskip\\
\qquad\qquad  = \nabla_{\mathbf{x}}F\left(\mathbf{x}^{k}\right)^{T}_{S^k} (\mathbf{z}^k - \widehat{\mathbf{x}}(\mathbf{x}^k))_{S^k}\smallskip\\ \qquad\qquad\quad
 +
 \nabla_{\mathbf{x}}F\left(\mathbf{x}^{k}\right)^{T}_{\overline{S}^k} (\mathbf{x}^k - \widehat{\mathbf{x}}(\mathbf{x}^k))_{\overline{S}^k}\smallskip\\
\qquad \qquad \quad   +  \nabla_{\mathbf{x}}F\left(\mathbf{x}^{k}\right)^{T}_{S^k} (\widehat{\mathbf{x}}(\mathbf{x}^k)-\mathbf{x}^k)_{S^k}\smallskip\\ \qquad\qquad\quad
 +    \nabla_{\mathbf{x}}F\left(\mathbf{x}^{k}\right)^{T}_{\overline{S}^k} (\widehat{\mathbf{x}}(\mathbf{x}^k)-\mathbf{x}^k)_{\overline{S}^k}\smallskip\\
\qquad\qquad   = \nabla_{\mathbf{x}}F\left(\mathbf{x}^{k}\right)^{T}_{S^k} (\mathbf{z}^k - \widehat{\mathbf{x}}(\mathbf{x}^k))_{S^k} \smallskip\\ \qquad\qquad +
 \nabla_{\mathbf{x}}F\left(\mathbf{x}^{k}\right)^{T}_{S^k} (\widehat{\mathbf{x}}(\mathbf{x}^k)-\mathbf{x}^k)_{S^k},
\end{array}
\end{equation}
where in the second equality we used the definition of $\widehat{\mathbf{z}}^k$ and of the set $S^k$.
Now, using  (\ref{nabla_times_z_hat}) and
Lemma \ref{lemma on errors}, we can write
\begin{equation}
\begin{array}{rl}
\multicolumn{2}{l}{
\nabla_{\mathbf{x}}F\left(\mathbf{x}^{k}\right)^{T}\left(\widehat{\mathbf{z}}^{k}-\mathbf{x}^{k}\right)  + \sum_{i\in S^k} g_i(\widehat{\mathbf{z}}_i^{k})-\sum_{i\in S^k} g_i(\mathbf{x}^{k}_i) } \\[0.3em]
=& \nabla_{\mathbf{x}}F\left(\mathbf{x}^{k}\right)^{T}\left(\widehat{\mathbf{z}}^{k}-\mathbf{x}^{k}\right)  +
\sum_{i\in S^k} g_i(\widehat{\mathbf{x}}_i(\mathbf{x}^{k})) \\[0.3em]
&-\sum_{i\in S^k} g_i(\mathbf{x}^{k}_i)   +
 \sum_{i \in S^k} g_i(\widehat{\mathbf{z}}_i^k) -\sum_{i \in S^k} g_i(\widehat{\mathbf{x}}_i(\mathbf{x}^{k}))
 \end{array}
 \end{equation}
 \begin{equation}\label{eq:descent_at_x_n}
\begin{array}{rl}
\leq& -\tilde c\left\Vert \widehat{\mathbf{x}}(\mathbf{x}^{k})-\mathbf{x}^{k}\right
\Vert ^{2}+\sum_{i \in S^k}\varepsilon_{i}^{k}\left\Vert \nabla_{\mathbf{x}_{i}}F(\mathbf{x}^{k})\right\Vert
\\[0.3em]
&+ L_G\sum_{i \in S^k}\varepsilon_{i}^{k},
\end{array}
\end{equation}
where $L_G$ is a (global) Lipschitz constant for (all) $g_i$.\\
\indent Finally,  from    the definition of
 $\widehat{\z}^k$ and of the set $S^k$,  we have for all $k$, 
\begin{equation}
\begin{array}{l}
V(\mathbf{x}^{k+1})  =  F(\mathbf{x}^{k+1}) + \sum_{ {i \in \cal N}}g_i (\mathbf{x}_i^{k+1}) \\[0.3em]
 =    F(\mathbf{x}^{k+1}) + \sum_{i\in {\cal N}}g_i (\mathbf{x}_i^{k} + \gamma^k(\widehat{\mathbf{z}}^k_i - \mathbf{x}_i^{k} ))\\[0.3em]
\leq   F(\mathbf{x}^{k+1}) +  \sum_{i \in \cal N}g_i (\mathbf{x}^k_i) + \gamma^k \left(\sum_{i \in S^k} (g_i (\widehat{\mathbf{z}}^k_i)
- g_i (\mathbf{x}^k_i)) \right)\\[0.3em]
\leq
V\left(\mathbf{x}^{k}\right)-\gamma^{k}\left(\tilde c -\gamma^{k}{L_{\nabla F}}\right)\left\Vert \widehat{\mathbf{x}}
(\mathbf{x}^{k})-\mathbf{x}^{k}\right\Vert ^{2}+T^{k},\end{array}\label{eq:descent_Lemma_2}
\end{equation}
where in the first inequality we used the the convexity of the $g_i$'s, whereas the second one follows from
 \eqref{eq:descent_Lemma}, \eqref{eq:49bis} and \eqref{eq:descent_at_x_n},  with
$$T^{k}\triangleq\gamma^{k}\,
\sum_{i \in S^k}\varepsilon_{i}^{k}\left( L_G +
\left\Vert \nabla_{\mathbf{x}_{i}}F(\mathbf{x}^{k})\right\Vert\right) +\left(\gamma^{k}\right)^{2}{L_{\nabla F}}\,\sum_{i\in \cal N}
(\varepsilon_{i}^{k})^{2}.$$\\
\indent Using  assumption  (iv),  
we can bound $T^k$ as
\[
T^{k}\leq
(\gamma^k)^2 \left[ N \alpha_1 (\alpha_2  L_G +1) + (\gamma^k)^2 L_{\nabla F} \left(N\alpha_1\alpha_2\right)^2 \right],
\]
 which, by assumption (iii) implies
 $\sum_{k=0}^{\infty}T^{k}<\infty$.
Since $\gamma^{k}\rightarrow 0$, it follows from (\ref{eq:descent_Lemma_2}) that there exist  some positive constant
$\beta_{1}$ and a sufficiently large $k$, say $\bar{{k}}$, such that
\begin{equation}
V(\mathbf{x}^{k+1})\leq V(\mathbf{x}^{k})-\gamma^{k}\beta_{1}\left\Vert \widehat{\mathbf{x}}(\mathbf{x}^{k})-\mathbf{x}^{k}\right\Vert ^{2}+T^{k},\label{eq:descent_Lemma_3_}
\end{equation}
for all $k\geq\bar{{k}}$. Invoking Lemma \ref{lemma_Robbinson_Siegmunt} with the identifications
$X^{k}=V(\mathbf{x}^{k+1})$, $Y^{k}=\gamma^{k}\beta_{1}\left\Vert \widehat{\mathbf{x}}(\mathbf{x}^{k})-\mathbf{x}^{k}\right\Vert ^{2}$
and $Z^{k}=T^{k}$ while using $\sum_{k=0}^\infty T^{k}<\infty$, we deduce
from (\ref{eq:descent_Lemma_3_}) that either $\{V(\mathbf{x}^{k})\}\rightarrow-\infty$
or else $\{V(\mathbf{x}^{k})\}$ converges to a finite
value and 
\begin{equation}
\lim_{k\rightarrow\infty}\sum_{t=\bar{{k}}}^{k}\gamma^{t}\left\Vert \widehat{\mathbf{x}}(\mathbf{x}^{t})-\mathbf{x}^{t}\right\Vert ^{2}<+\infty.\vspace{-0.1cm}\label{eq:finite_sum_series}
\end{equation}
Since $V$ is coercive, $V(\mathbf{x})\geq\min_{\mathbf{y}\in X}V(\mathbf{y})>-\infty$,
implying that $\{V\left(\mathbf{x}^{k}\right)\}$ is convergent;
it follows from (\ref{eq:finite_sum_series}) and $\sum_{k=0}^{\infty}\gamma^{k}=\infty$
that $\liminf_{k\rightarrow\infty}\left\Vert \widehat{\mathbf{x}}(\mathbf{x}^{k})-\mathbf{x}^{k}\right\Vert =0.$\\
\indent Using Proposition \ref{Prop_x_y}, we show next that $\lim_{k\rightarrow\infty}\left\Vert \widehat{\mathbf{x}}(\mathbf{x}^{k})-\mathbf{x}^{k}\right\Vert =0$;
for notational simplicity we will write $\triangle\widehat{\mathbf{x}}(\mathbf{x}^{k})\triangleq\widehat{\mathbf{x}}(\mathbf{x}^{k})-\mathbf{x}^{k}$.
Suppose, by contradiction, that $\limsup_{k\rightarrow\infty}\left\Vert \triangle\widehat{\mathbf{x}}(\mathbf{x}^{k})\right\Vert >0$.
Then, there exists a $\delta>0$ such that $\left\Vert \triangle\widehat{\mathbf{x}}(\mathbf{x}^{k})\right\Vert >2\delta$
for infinitely many $k$ and also $\left\Vert \triangle\widehat{\mathbf{x}}(\mathbf{x}^{k})\right\Vert <\delta$
for infinitely many $k$. Therefore, one can always find an infinite
set of indexes, say $\mathcal{K}$, having the following properties:
for any $k\in\mathcal{K}$, there exists an integer $i_{k}>k$ such
that
\begin{eqnarray}
\left\Vert \triangle\widehat{\mathbf{x}}(\mathbf{x}^{k})\right\Vert <\delta, &  & \left\Vert \triangle\widehat{\mathbf{x}}(\mathbf{x}^{i_{k}})\right\Vert >2\delta\medskip\label{eq:outside_interval}\\
\delta\leq\left\Vert \triangle\widehat{\mathbf{x}}(\mathbf{x}^{j})\right\Vert \leq2\delta &  & k<j<i_{k}.\label{eq:inside_interval}
\end{eqnarray}
Given the above bounds, the following holds: for all $k\in\mathcal{K}$,
\begin{eqnarray}
\delta & \overset{(a)}{<} & \left\Vert \triangle\widehat{\mathbf{x}}(\mathbf{x}^{i_{k}})\right\Vert -\left\Vert \triangle\widehat{\mathbf{x}}(\mathbf{x}^{k})\right\Vert \medskip\nonumber \\
 & \leq & \left\Vert \widehat{\mathbf{x}}(\mathbf{x}^{i_{k}})-\widehat{\mathbf{x}}(\mathbf{x}^{k})\right\Vert +\left\Vert \mathbf{x}^{i_{k}}-\mathbf{x}^{k}\right\Vert \nonumber\\
 & \overset{(b)}{\leq} & (1+\hat{{L}})\left\Vert \mathbf{x}^{i_{k}}-\mathbf{x}^{k}\right\Vert \nonumber\end{eqnarray}
 \begin{eqnarray}
 & \overset{(c)}{\leq} &    {\color{black} (1+\hat{{L}})\sum_{t=k}^{i_{k}-1}\gamma^{t}\left(\left\Vert \triangle\widehat{\mathbf{x}}(\mathbf{x}^{t})_{S^t}\right\Vert +\left\Vert (\mathbf{z}^{t}-\widehat{\mathbf{x}}(\mathbf{x}^{t}))_{S^t}\right\Vert \right)}\vspace{-0.3cm}\nonumber \\
 & \overset{(d)}{\leq} & (1+\hat{{L}})\,(2\delta+\varepsilon^{\max})\sum_{t=k}^{i_{k}-1}\gamma^{t},\label{eq:lower_bound_sum}
\end{eqnarray}
where (a) follows from (\ref{eq:outside_interval});
(b) is due to Proposition \ref{Prop_x_y}(a); (c) comes from the triangle
inequality, the updating rule of the algorithm {\color{black} and the definition of $\widehat{\z}^k$}; and in (d) we used
(\ref{eq:outside_interval}), (\ref{eq:inside_interval}), and $\left\Vert \mathbf{z}^{t}-\widehat{\mathbf{x}}(\mathbf{x}^{t})\right\Vert \leq\sum_{i\in \cal N}\varepsilon_{i}^{t}$,
where $\varepsilon^{\max}\triangleq\max_{k}\sum_{i\in\cal N}\varepsilon_{i}^{k}<\infty$.
It follows from (\ref{eq:lower_bound_sum}) that
\begin{equation}
\liminf_{k\rightarrow\infty}\sum_{t=k}^{i_{k}-1}\gamma^{t}\geq\dfrac{{\delta}}{(1+\hat{{L}})(2\delta+\varepsilon^{\max})}>0.\label{eq:lim_inf_bound}
\end{equation}

\noindent
We show next that (\ref{eq:lim_inf_bound}) is in contradiction with
the convergence of $\{V(\mathbf{x}^{k})\}$. To do that, we preliminary
prove that, for sufficiently large $k\in\mathcal{K}$, it must be
$\left\Vert \triangle\widehat{\mathbf{x}}(\mathbf{x}^{k})\right\Vert \geq\delta/2$.
Proceeding as in (\ref{eq:lower_bound_sum}), we have: for any given
$k\in\mathcal{K}$,
\[
\begin{array}{l}
\left\Vert \triangle\widehat{\mathbf{x}}(\mathbf{x}^{k+1})\right\Vert -\left\Vert \triangle\widehat{\mathbf{x}}(\mathbf{x}^{k})\right\Vert \leq(1+\hat{{L}})\left\Vert \mathbf{x}^{k+1}-\mathbf{x}^{k}\right\Vert \smallskip\\
\qquad\qquad\qquad\qquad\qquad\leq(1+\hat{{L}})\gamma^{k}\left(\left\Vert \triangle\widehat{\mathbf{x}}(\mathbf{x}^{k})\right\Vert +\varepsilon^{\max}\right).
\end{array}
\]
 It turns out that for sufficiently large $k\in\mathcal{K}$ so that
$(1+\hat{{L}})\gamma^{k}<\delta/(\delta+2\varepsilon^{\max})$, it
must be
\begin{equation}
\left\Vert \triangle\widehat{\mathbf{x}}(\mathbf{x}^{k})\right\Vert \geq\delta/2;\label{eq:lower_bound_delta_x_n}
\end{equation}
otherwise the condition $\left\Vert \triangle\widehat{\mathbf{x}}(\mathbf{x}^{k+1})\right\Vert \geq\delta$
would be violated {[}cf. (\ref{eq:inside_interval}){]}. Hereafter
we assume without loss of generality  that (\ref{eq:lower_bound_delta_x_n}) holds for
all $k\in\mathcal{K}$ (in fact, one can alway restrict $\{\mathbf{x}^{k}\}_{k\in\mathcal{K}}$
to a proper subsequence).

We can show now that (\ref{eq:lim_inf_bound}) is in contradiction
with the convergence of $\{V(\mathbf{x}^{k})\}$. Using (\ref{eq:descent_Lemma_3_})
(possibly over a subsequence), we have: for sufficiently large $k\in\mathcal{K}$,
\begin{eqnarray}
V(\mathbf{x}^{i_{k}}) & \leq & V(\mathbf{x}^{k})-\beta_{2}\sum_{t=k}^{i_{k}-1}\gamma^{t}\left\Vert \triangle\widehat{\mathbf{x}}(\mathbf{x}^{t})\right\Vert ^{2}+\sum_{t=k}^{i_{k}-1}T^{t}\nonumber \\
 & \overset{(a)}{<} & V(\mathbf{x}^{k})-\beta_{2}(\delta^{2}/4)\sum_{t=k}^{i_{k}-1}\gamma^{t}+\sum_{t=k}^{i_{k}-1}T^{t}\label{eq:liminf_zero}
\end{eqnarray}
where in (a) we used (\ref{eq:inside_interval}) and (\ref{eq:lower_bound_delta_x_n}),
and $\beta_{2}$ is some positive constant. Since $\{V(\mathbf{x}^{k})\}$
converges and $\sum_{k=0}^{\infty}T^{k}<\infty$, (\ref{eq:liminf_zero})
implies $\lim_{\mathcal{K}\ni k\rightarrow\infty}\,\sum_{t=k}^{i_{k}-1}\gamma^{t}=0,$
which contradicts (\ref{eq:lim_inf_bound}).
Finally, since the sequence $\{\mathbf{x}^{k}\}$ is bounded {[}by the coercivity of $V$ and the convergence of $\{V(\mathbf{x}^{k})\}${]},
it has at least one limit point $\bar{{\mathbf{x}}}$, that  belongs
to $X$. By the continuity of $\widehat{\mathbf{x}}(\bullet)$
{[}Proposition \ref{Prop_x_y}(a){]} and $\lim_{k\rightarrow\infty}\left\Vert \widehat{\mathbf{x}}(\mathbf{x}^{k})-\mathbf{x}^{k}\right\Vert =0$,
it must be $\widehat{\mathbf{x}}(\bar{{\mathbf{x}}})=\bar{{\mathbf{x}}}$.
By Proposition \ref{Prop_x_y}(b) $\bar{{\mathbf{x}}}$ is  a stationary
solution of Problem  (\ref{eq:problem 1}).
As a final remark, note that if $\varepsilon_i^k=0$ for every $i$ and for every $k$ large enough, i.e., if eventually
$\widehat{\mathbf{x}}(\mathbf{x}^k)$ is computed exactly, there is no need to assume that $G$ is
globally Lipschitz. In fact  in \eqref{eq:descent_at_x_n} the term containing $L_G$ disappears,
all  $T^k$ are zero and all the subsequent derivations hold independent of the Lipschitzianity of $G$.
\hfill$\square$

\subsection{Proof of Theorem \ref{Theorem_convergence_Gauss_Jacobi}}\label{proof_Th2}
We show next that  Algorithm \ref{alg:PJGA}
is just an  instance of the inexact Jacobi scheme described in Algorithm
\ref{alg:general} satisfying  the convergence conditions in Theorem
\ref{Theorem_convergence_inexact_Jacobi}; which proves Theorem \ref{Theorem_convergence_Gauss_Jacobi}.  It is not difficult to see that this boils down  to proving
that, for all $p\in P$ and $i\in I_p$, the sequence $\mathbf{z}_{pi}^{k}$
in Step 2a) of Algorithm \ref{alg:PJGA} satisfies
\begin{equation}
\|\mathbf{z}_{pi}^k-\widehat{{\x}}_{pi}(\mathbf{x}^{k})\|\leq\tilde{\varepsilon}_{pi}^{k},
\label{eq:GS_Jacobi_error_bound}
\end{equation}
for some $\{\tilde{\varepsilon}_{pi}^{k}\}$ such that $\sum_{n}\tilde{\varepsilon}_{pi}^{k}\,
\gamma^{k}<\infty$.
The following holds for the LHS of (\ref{eq:GS_Jacobi_error_bound}):\vspace{-0.2cm}
\begin{equation}\label{eq:error_GS}
\begin{array}{l}
\!\!\!\!\|\mathbf{z}_{pi}^{k}-\widehat{\mathbf{x}}_{pi}(\mathbf{x}^{k})\|\leq\|\widehat{\mathbf{x}}_{pi}
(\mathbf{x}_{pi<}^{k+1},\mathbf{x}_{pi\geq}^{k}, \mathbf{x}_{-p}^k)-
\widehat{\mathbf{x}}_{pi}(\mathbf{x}^{k})\|
\\[0.5em]
\quad\quad\quad\quad\quad\quad\,\quad +\, \|
\mathbf{z}_{pi}^{k}-\widehat{\mathbf{x}}_{pi}(\mathbf{x}_{pi<}^{k+1},\mathbf{x}_{pi\geq}^{k}, \mathbf{x}_{-p}^k)\|\smallskip\\[0.5em]
\quad\overset{(a)}{\leq}\|\widehat{\mathbf{x}}_{pi}(\mathbf{x}_{pi<}^{k+1},\mathbf{x}_{pi\geq}^{k}, \mathbf{x}_{-p}^k)-
\widehat{\mathbf{x}}_{pi}(\mathbf{x}^{k})\|+{\varepsilon}_{pi}^{k}
\\[0.5em]\quad\overset{(b)}{\le}\hat{{L}}\,\|\mathbf{x}_{pi<}^{k+1}-\mathbf{x}_{pi<}^{k}\|+{\varepsilon}_{pi}^{
k}\smallskip\\[0.5em]
\quad\overset{(c)}{=}
\hat{{L}}\gamma^{k} \left\Vert
\left(\mathbf{z}_{pi<}^{k}-\mathbf{x}_{pi<}^k\right)\right\Vert
+{\varepsilon}_{pi}^{k}\\[0.5em]
\quad\leq\hat{{L}}\gamma^{k}\left(
\sum_{j=1}^{i-1}( \|\mathbf{z}^k_{pj}- \widehat{\mathbf{x}}_{pj}(\mathbf{x}^k)\| + \|\widehat{\mathbf{x}}_{pj}(\mathbf{x}^k) - \mathbf{x}^k_{pj}\|) \right)
\\[0.5em]
\quad\quad +\,{\varepsilon}_{pi}^{k}\\[0.5em]
\quad\overset{(d)}{\leq}\hat{{L}}\gamma^{k}\sum_{j=1}^{i-1}
{\varepsilon}_{pj}^{k}+\hat{{L}}\gamma^{k}\left(\dfrac{L_G+\beta}{c_{\boldsymbol{\tau}}}\right)+{\varepsilon}_{pi}^{k}\triangleq \tilde{\varepsilon}_{pi}^{k},
\end{array}
\end{equation}
where (a) follows from the error bound in Step 2a) of Algorithm \ref{alg:PJGA};
in (b) we used Proposition \ref{Prop_x_y}a); (c) is due to Step 2b);
and  (d) follows from $\|\widehat{\mathbf{x}}_{pj}(\mathbf{x}^k) - \mathbf{x}^k_{pj}\|\leq (L_G + \|\nabla_{\x_{pi}} F(\x^k)\|)/c_{\boldsymbol{\tau}}$ and $\|\nabla_{\x_{pi}} F(\x^k)\|\leq \beta$ for some $0< \beta <\infty$.  It turns out that (\ref{eq:GS_Jacobi_error_bound})
is satisfied choosing $\tilde{\varepsilon}_{pi}^{k}$ as defined in the last inequality of \eqref{eq:error_GS}. 

\bibliographystyle{IEEEtran}
\balance
\bibliography{scutari_facchinei_refs}

\end{document}